\documentclass[a4paper,UKenglish,cleveref, autoref, thm-restate]{lipics-v2021}

\pdfoutput=1 
\hideLIPIcs  

\title{Few Rows Tell Them Apart: Equivalence of Queries Mixing Set and Bag Semantics}

\titlerunning{Equivalence of Queries Mixing Set and Bag Semantics}

\author{Sara Cohen}{School of Computer Science and Engineering, The Hebrew University of Jerusalem, Israel}{scohen@mail.huji.ac.il}{https://orcid.org/0000-0002-8482-9435}{}

\authorrunning{S. Cohen}

\Copyright{Sara Cohen}

\ccsdesc[500]{Theory of computation~Database query languages (principles)}
\ccsdesc[300]{Theory of computation~Complexity theory and logic}
\ccsdesc[100]{Information systems~Query optimization}

\keywords{query equivalence, combined semantics, bounded equivalence, counterexample bounds, conjunctive queries, integrity constraints}

\category{} 

\relatedversion{} 

\funding{The authors were partially supported by the ISF (Israel Science Foundation, Grant 359/21).}

\nolinenumbers 
\usepackage{booktabs}

\newcommand{\adom}{\mathrm{adom}}
\newcommand{\cdb}[2]{\mathcal{D}_{#1}[#2]}
\newcommand{\pcdb}[2]{\mathcal{D}^{\pi}_{#1}[#2]}
\newcommand{\nmv}[1]{\#_M(#1)}
\newcommand{\myparagraph}[1]{\medskip\noindent\textbf{#1.}}
\newcommand{\NP}{\ensuremath{\mathsf{NP}}}

\definecolor{sqlkw}{rgb}{0.12,0.31,0.61}
\definecolor{sqlcomment}{rgb}{0.42,0.45,0.49}
\definecolor{sqlstr}{rgb}{0.55,0.10,0.10}
\definecolor{sqlbg}{rgb}{0.96,0.96,0.97}
\lstdefinestyle{sql}{%
  language=SQL,
  basicstyle=\small\ttfamily,
  backgroundcolor=\color{sqlbg},
  keywordstyle=\color{sqlkw}\bfseries,
  commentstyle=\color{sqlcomment}\itshape,
  stringstyle=\color{sqlstr},
  morekeywords={DISTINCT},
  deletekeywords={ORDER,COUNT},
  showstringspaces=false,
  columns=fullflexible,
  keepspaces=true,
  xleftmargin=1.2em,
  aboveskip=4pt, belowskip=2pt,
}

\AtEndEnvironment{example}{\unskip\nobreak\hfill\textcolor{lipicsGray}{\ensuremath{\lozenge}}}

\begin{document}

\maketitle

\begin{abstract}
Bounded SQL equivalence checkers search for a counter-example database of
bounded size, and a search that comes back empty proves nothing. We supply missing theory: computable bounds $B$
such that agreement on all databases with at most $B$ tuples per relation implies equivalence.
We work in the combined-semantics framework, which captures SQL's mix of duplicate-eliminating
(\texttt{DISTINCT}) and duplicate-preserving computation over set-valued relations. For
conjunctive queries we prove a bound linear in the query size for fixed multiset width:
inequivalent queries already disagree on a database with at most $2^{w}|Q|$ tuples, where the
width $w$ counts only the columns the queries actually read, independently of the total number
of multiset variables. Declared keys shrink the
bound to $2^{\mathit{kw}}|Q|$ for the smaller key-width $\mathit{kw}$, acyclic foreign keys
leave it unchanged, and the result extends to several classes of queries with comparisons, for which
equivalence had not previously been characterized. For these fragments, bounded search becomes
a terminating, complete decision procedure.
\end{abstract}

\section{Introduction}
\label{sec:intro}

Query equivalence is a central problem of database theory. Two queries are equivalent if they
return the same result on every database. Five decades of research on the problem have powered
query optimization, rewriting over materialized views, and data
integration~\cite{ChandraM77,ChaudhuriVardi93}; it has recently gained new urgency. Large
language models now rewrite SQL at scale, and err at scale: in a recent industrial study,
roughly a third of over $3{,}100$ LLM-proposed rewrites of production queries returned results
different from the original~\cite{QOVerify}, and text-to-SQL benchmarks' equivalence verdicts
determine published model rankings~\cite{SpotIt}.

This renewed demand has produced a wave of practical SQL equivalence checkers, split into two
one-sided families. Symbolic \emph{provers}~\cite{EQUITAS,SQLSolver,QED} certify equivalence,
but each only on a restricted syntactic fragment, and none certifies a refutation. Bounded
\emph{refuters}~\cite{Cosette,VeriEQL,Polygon,SpotIt} embrace the most expressive queries by
testing \emph{bounded equivalence}: a solver searches for a \emph{counter-example}---a database
on which the two queries disagree---among all databases with at most $n$ tuples per relation,
growing $n$ until timeout or fixing it in advance. A counter-example refutes equivalence; an
exhausted search proves nothing, yet its verdicts are trusted. A fifth of the pairs one suite of
VeriEQL's evaluation reports as ``checked'' are in fact inequivalent, two needing
counter-examples of over a thousand tuples~\cite{VeriEQL}, and SpotIt publishes what survives
its $n{=}5$-capped search as benchmark accuracy~\cite{SpotIt}.\looseness=-1

What is missing is a completeness threshold: a computable bound $B$ such that any two
inequivalent queries already disagree on some database with at most $B$ tuples per relation.
With such a bound the search stops at $B$ and reports \emph{equivalent} with certainty---and a
\emph{per-relation} version caps each relation separately, shrinking the search space further.
This paper establishes such bounds for the conjunctive core of the problem.\looseness=-1

We work in the framework of \emph{combined semantics}~\cite{Cohen06,Cohen09}, which captures
a core aspect of how real SQL queries evaluate. SQL mixes two modes of computation. A \texttt{DISTINCT} block or an
\texttt{EXISTS} subquery is a set computation, blind to duplicates, while projection without
\texttt{DISTINCT} preserves multiplicities. Combined semantics expresses this mix in one
language by declaring, per variable, whether different values contribute new copies to the
answer (\emph{multiset variables}) or not (\emph{set variables}). Following common practice, we
take stored relations to be sets, as is the case whenever every table declares a key.

For the two pure extremes, small counter-examples have long been known. Inequivalent set queries
(pure \texttt{DISTINCT}) disagree on a database with at most one tuple per query
atom~\cite{ChandraM77}, and the same holds for pure bag-set queries~\cite{ChaudhuriVardi93}. In
both cases the bound is the \emph{self-join size}: the maximum number of atoms sharing a
predicate. For queries that genuinely mix the two modes, the property fails, and it fails
already for the simplest possible queries.

\begin{example}
\label{ex:intro-no-small-ce}
The following queries return the names of \texttt{vip} customers. The first returns each name
once. The second returns each name once per distinct customer bearing it.
\begin{lstlisting}[style=sql]
QA: SELECT DISTINCT name FROM Customer WHERE type='vip'
QB: SELECT name FROM (SELECT DISTINCT cid, name FROM Customer WHERE type='vip') D
\end{lstlisting}
Each query uses the \texttt{Customer} table once, so the classical bounds promise a one-tuple
counter-example. None exists: the queries agree on every one-tuple database, and differ only
once two \texttt{vip} customers share a name---which takes two tuples. The classical bound is
wrong for mixed queries, and how far it must grow is precisely the question.
\end{example}

The example is benign, but the general phenomenon is not: it is not even clear that a computable
bound on counter-example size exists at all. Our main result is that it does, and it is linear
in the query size.

\myparagraph{Contributions}
\Cref{sec:bound} proves the central result, a counter-example bound for relational
combined-semantics queries that is \emph{linear in the query}: inequivalent queries disagree on
a database with at most $2^{w}|Q|$ tuples, where $|Q|$ is the number of atoms and the
\emph{multiset width} $w$ counts only the multiset columns the queries actually read.
\Cref{sec:keyed} admits integrity constraints. The bound refines per relation, drops to $2^{\mathit{kw}}|Q|$ for
the smaller \emph{key-width} $\mathit{kw}$ under declared keys, covering in particular queries
whose joins follow keys acyclically, and is unchanged under acyclic foreign keys. The same
argument characterizes equivalence there, as multiset-homomorphism of the key-chased reducts,
and so places the problem in \NP. These are the first results on equivalence of multiset and
combined queries in the presence of integrity constraints.
\Cref{sec:comparisons} extends the bound to several classes of queries with comparisons against
constants, for which no equivalence characterization, and hence no decision procedure, was
previously known.\looseness=-1

Taken together, the results turn heuristic bounded search into a terminating, complete
decision procedure for the conjunctive fragment studied here, with a threshold that is small whenever
self-joins are few and the multiset columns are narrow. All proofs, and some constructions, are deferred to the appendix.

\section{Preliminaries}
\label{sec:prelims}

We recall the framework of \emph{combined semantics} introduced in~\cite{Cohen06,Cohen09},
specialized to databases whose relations are sets. Combined semantics uniformly captures
both \emph{set semantics} (the \texttt{DISTINCT} behaviour of SQL, and the existential
behaviour of subqueries) and \emph{bag-set semantics} (the multiplicities produced by
projection without \texttt{DISTINCT}) within a single query language.\looseness=-1

\myparagraph{Databases}
Predicate symbols are denoted $p$, $q$, $r$. A \emph{database} $\mathcal D$ over a set of
predicate symbols $\mathcal P$ is a set of ground atoms with predicates from $\mathcal P$. Every
relation is thus a \emph{set}: multiplicities in query answers arise from the bag-set
behaviour of projection, not from duplicate stored tuples. The \emph{active domain}
$\adom(\mathcal D)$ is the set of constants occurring in $\mathcal D$.

\myparagraph{Keys and Foreign Keys}
Real schemas constrain their data. A \emph{schema} assigns every predicate $p$ of arity $r_p$ a
\emph{key} $\mathit{key}(p)\subseteq\{1,\dots,r_p\}$, and declares a set of \emph{foreign keys}.
A foreign key is a triple $(p,\bar\imath,q)$, where $\bar\imath$ is a tuple of positions of $p$
with $|\bar\imath|=|\mathit{key}(q)|$. A database $\mathcal D$ is \emph{legal} if it satisfies
every declared constraint. A key requires that no two distinct $p$-atoms agree on all positions in $\mathit{key}(p)$. A foreign key $(p,\bar\imath,q)$
requies for every $p(\bar s)\in\mathcal D$ some $q(\bar t)\in\mathcal D$ with
$\bar s|_{\bar\imath}=\bar t|_{\mathit{key}(q)}$. The foreign keys are \emph{acyclic} if the
relation ``$p$ references $q$'' on predicates has no cycle. When every key is the whole tuple,
i.e., $\mathit{key}(p)=\{1,\dots,r_p\}$ for all $p$, and no foreign keys are declared, every
database is legal. We call such a schema \emph{unconstrained}.

\begin{example}[Running Example]
\label{ex:db}
  Our examples use a small retail schema with four relations, key attributes underlined:
  \textsf{Customer}(\textit{\underline{cid}, name, email, phone, city, state, cc, signup, type}),
  \textsf{Order}(\textit{\underline{oid}, cid, date, via, status, shipcity, total}),
  \textsf{Item}(\textit{\underline{oid}, \underline{sku}, qty, price, discount}), and
  \textsf{Product}(\textit{\underline{sku}, title, category, brand, list, color}).
  An order is placed by a customer, and its items reference products by their \textit{sku}, so
  the schema declares acyclic foreign keys, from $\mathsf{Order}.\mathit{cid}$ to
  \textsf{Customer}, from $\mathsf{Item}.\mathit{oid}$ to \textsf{Order}, and from
  $\mathsf{Item}.\mathit{sku}$ to \textsf{Product}. A customer or a product may recur across many
  rows---one shopper places many orders, and a popular product appears in many of them.
\end{example}

\myparagraph{Syntax of Queries}
We denote constants by $c, d$ and variables (ranging over constants) by $x$, $y$, $z$.
A \emph{term} $s, t$ is either a variable or a constant. We use $\bar x$ and $\bar s$ to
denote a sequence of variables and terms, respectively. Queries can contain two types of
atoms:
\begin{itemize}
    \item \emph{relational atoms}, denoted $p(\bar s)$,
    \item \emph{order atoms} or \emph{comparisons}, denoted $s_1\,\rho\, s_2$ where
          $\rho\in\{<,\leq,>,\geq,=,\neq\}$.
\end{itemize}
A \emph{condition} $L$ is a conjunction of relational and order atoms.

Variables in a query are either \emph{distinguished}, appearing in the head, or
\emph{nondistinguished}, appearing only in the body. The nondistinguished variables are of two
types, \emph{set variables} and \emph{multiset variables}: intuitively, assignments differing on
multiset variables contribute to the multiplicity of the query result, while those differing on
set variables do not. We specify the set of multiset variables to the right of the condition, and
we call a column holding a multiset variable a \emph{multiset column}.

A \emph{conjunctive query} is an expression of the form $Q(\bar{x}) \leftarrow L, M$
    where $L$ is a conjunction of atoms and $M$ is the set of multiset variables. We require $M$
  to contain only nondistinguished variables. We assume queries are
  safe~\cite{ullman1988principles2}, i.e., every variable in the query occurs in some
  relational atom. For a query $Q$,
  we write $|Q|$ for the number of relational atoms of $Q$ and, when convenient, treat the body
  as this set, writing $p(\bar s)\in Q$ for a relational atom of $Q$.

We define several classes of queries. Query $Q$ is a \emph{set query} if
$M = \emptyset$, a \emph{multiset query} if $M$ is precisely the set of all
nondistinguished variables, and a \emph{combined query} otherwise. A query is \emph{relational} if it has no order atoms. Set queries
correspond to SQL queries that eliminate duplicates (via \texttt{DISTINCT}), while multiset queries correspond to SQL queries without \texttt{DISTINCT} and with no existential subqueries. 
Set and multiset variables also meet through existential subqueries: a multiset query that probes
a multiset column with an \texttt{EXISTS}, \texttt{IN}, or \texttt{ANY} subquery equates the
multiset variable with a set variable of the inner block. The value is then counted once, however
many rows of the inner block witness the subquery. Real SQL queries may combine both set and
multiset computations, as demonstrated in the following example.
\begin{example}[Query syntax]
  \label{ex:syntax}
  Recall the relations of \cref{ex:db}. The queries below report the customers who
  ordered an electronics product, differing only in how they treat duplicates:
\begin{lstlisting}[style=sql]
Q1: SELECT D.cid
    FROM   (SELECT DISTINCT O.cid, I.sku
            FROM   Order O, Item I, Product P
            WHERE  O.oid = I.oid AND I.sku = P.sku
                   AND P.category = 'electronics') D
Q2: SELECT DISTINCT O.cid
    FROM   Order O, Item I, Product P
    WHERE  O.oid = I.oid AND I.sku = P.sku AND P.category = 'electronics'
\end{lstlisting}
  Query \texttt{Q1} deduplicates on $(\mathit{cid},\mathit{sku})$ before projecting onto
  $\mathit{cid}$. Query \texttt{Q2} deduplicates fully. Query \texttt{Q3}, defined as \texttt{Q2}
  without \texttt{DISTINCT}, returns one row per electronics line item. To render these in our notation,
  we write $\_$ for an \emph{anonymous} variable (one used nowhere else, distinct at each appearance) and
  $\dots$ for a run of them, and let $L$ be the body
  \[
    \mathtt{Order}(o,x,\dots)\,\wedge\,\mathtt{Item}(o,k,\dots)\,\wedge\,
    \mathtt{Product}(k,\_,\mathtt{electronics},\dots),
  \]
  finding a customer $x$ who placed an order $o$ containing an electronics product $k$. Writing $V$ for
  the set of \emph{all} nondistinguished variables of $L$ ($o$, $k$, and every $\_$), the
  three queries are $Q_1(x)\leftarrow L,\,\{k\}$, \ $Q_2(x)\leftarrow L,\,\emptyset$, and
  $Q_3(x)\leftarrow L,\,V$.
  Thus $Q_1$ is combined, counting a customer once per distinct product, $Q_2$ is a set query,
  and $Q_3$ is a multiset query.
\end{example}

\myparagraph{Semantics of Queries}
Let $\mathcal D$ be a database and $Q(\bar x)\leftarrow L,M$ be a query. A \emph{satisfying
assignment} $\gamma$ maps the terms of $Q$ to constants such that
\begin{itemize}
    \item $\gamma$ is the identity on constants,
    \item for every relational atom $p(\bar s)$ in $L$, we have $p(\gamma \bar s)\in \mathcal D$,
    \item for every order atom $s_1\,\rho\, s_2$ in $L$, we have $\gamma s_1\,\rho\, \gamma s_2$.
\end{itemize}

Let $\Gamma(Q,\mathcal{D})$ denote the set of satisfying assignments of $Q$. If $Y$ is a set
of variables of $Q$, we write $\Gamma_Y(Q,\mathcal{D})$ for the projection of
$\Gamma(Q,\mathcal{D})$ onto $Y$, i.e., the set of all assignments to $Y$ that extend to a
satisfying assignment in $\Gamma(Q,\mathcal{D})$. Let $X$ be the set of distinguished variables
of $Q$. The \emph{result} of applying $Q$ to $\mathcal{D}$, denoted $Q(\mathcal D)$, is the
multiset
  \[ \bigl\{\!\bigl\{\, \gamma(\bar{x}) \mid \gamma \in \Gamma_{X\cup M}(Q,\mathcal{D}) \,\bigr\}\!\bigr\}.
  \]
  Thus each assignment to the distinguished and multiset variables that extends to a satisfying
  assignment contributes one tuple to the answer, whereas the choices for the set variables are
  collapsed. When $Q$ is a set query this coincides with set semantics, as $M=\emptyset$ and
  every answer tuple appears exactly once. When $Q$ is a multiset query it coincides
  with bag-set semantics.\looseness=-1

The answer multiplicities are exactly what SQL's \texttt{COUNT} reports, \texttt{COUNT(*)} for a
bag block and \texttt{COUNT(DISTINCT ...)} over the multiset columns, and adding such an
aggregate preserves equivalence. Combined-semantics equivalence is thus precisely the
equivalence of SQL counting queries.

\myparagraph{Query Containment and Equivalence}
\phantomsection\label{sec:equiv}
Fix a schema. Query $Q$ is \emph{contained} in query $Q'$, denoted $Q\subseteq Q'$, if
$Q(\mathcal D)$ is a subbag of $Q'(\mathcal D)$ for every \emph{legal} database $\mathcal D$. Queries $Q$ and $Q'$ are \emph{equivalent}, written $Q\equiv Q'$, if containment holds in both
directions. Over an unconstrained schema these are the classical notions, quantifying over all
set databases.

Over unconstrained schemas, containment of set queries and equivalence of multiset queries are
classically characterized by homomorphisms and isomorphisms respectively~\cite{ChandraM77,Klug,vanderMeyden,ChaudhuriVardi93,EquivAggregate} (see
\cref{sec:related}). For queries combining set and multiset variables, equivalence is
characterized only for relational queries over unconstrained schemas, by \emph{multiset
homomorphisms}~\cite{Cohen06,Cohen09}. Almost no characterization is known with comparisons (except
for the join queries of~\cite{Cohen06}), and no
prior work addresses combined equivalence under integrity constraints.

Unlike work that directly characterizes equivalence, we reduce equivalence to \emph{bounded}
equivalence. Formally, let $\mathcal D$ be a database over predicates $\mathcal P$.
The \emph{size} of a predicate $p$ in $\mathcal D$, denoted $\mathit{size}(p,\mathcal D)$, is the
number of $p$-facts in $\mathcal D$. We say that $\mathcal D$ is \emph{$n$-bound} if
$\mathit{size}(p,\mathcal D)\leq n$ for all $p\in \mathcal P$. We write $Q\subseteq_n Q'$ if
$Q(\mathcal D)\subseteq Q'(\mathcal D)$ for all legal $n$-bound databases $\mathcal D$, and
$Q\equiv_n Q'$ if this holds in both directions. Clearly, if $Q\subseteq Q'$ then
$Q\subseteq_n Q'$ for every $n$. This paper studies the converse:
  \emph{Given a class of queries, is there a computable bound $n$ such that $n$-bounded
  equivalence implies equivalence?}

Under both set and bag-set semantics, over unconstrained schemas, the equivalence problem enjoys
the \emph{small counter-example property}: if two queries are inequivalent, a witnessing database
exists whose size is bounded by the number of atoms in the
queries~\cite{ChandraM77,ChaudhuriVardi93,EquivAggregate,Cohen09}.
To be precise, the \emph{self-join size} $\mathit{sj}(Q)$ of $Q$ is the maximum number of
occurrences of any single predicate in $Q$. Write $k=\mathit{sj}(Q)$ and $k'=\mathit{sj}(Q')$.
For set queries, $Q\subseteq Q'$ if and only if $Q\subseteq_k Q'$, since the frozen body of
$Q$---a database with one atom per subgoal---already witnesses any failure of a containment
mapping~\cite{ChandraM77}. For multiset queries, $Q\equiv Q'$ if and only if
$Q\equiv_{\max(k,k')} Q'$, by the characterization of bag-set equivalence as isomorphism of the
queries~\cite{ChaudhuriVardi93}, which a database built from the bodies likewise detects. For
combined queries, bounded equivalence up to the self-join size no longer implies equivalence,
even for very simple queries, as demonstrated by \cref{ex:intro-no-small-ce}.

This paper extends these bounds to queries that mix set and multiset variables. Since such
queries cannot be equivalent unless their bodies are equivalent as set queries, we assume
throughout that the queries are \emph{set-equivalent} over the legal databases of the schema at
hand, and that each is \emph{satisfiable}, nonempty on some legal database. We focus on whether
their multiplicities coincide. Set-equivalence is itself decided within thresholds below every bound we prove~\cite{ChandraM77,AhoSagivUllman79}.

\section{Unconstrained Schemas}
\label{sec:bound}
\phantomsection\label{sec:relational}

This section proves the counter-example bound for relational queries over an unconstrained
schema (\cref{thm:main-bound}), where every database is legal. Constrained schemas follow in
\cref{sec:keyed}. Recall the standing assumption that $Q,Q'$ are set-equivalent
(\cref{sec:equiv}), so their answers can differ only in multiplicity.

Throughout this section queries are relational (no order atoms), the setting of the~\cite{Cohen06,Cohen09} characterization.
Let $Q(\bar x)\leftarrow L,M$ and $Q'(\bar x')\leftarrow L',M'$ be relational conjunctive queries.
A \emph{multiset-homomorphism} from $Q'$ to $Q$ is a mapping $\mu$ from the terms of $Q'$ to the
terms of $Q$ satisfying:
\begin{enumerate}
\item $\mu(\bar x')=\bar x$,
\item $\mu$ maps each constant to itself,
\item every atom of $\mu(L')$ occurs in $L$, and
\item $\mu$ maps $M'$ injectively into $M$: $\mu(M')\subseteq M$ and $\mu y\neq \mu y'$ for
      distinct $y,y'\in M'$.
\end{enumerate}
Conditions (1)--(3) state that $\mu$ is an ordinary containment mapping from $Q'$ to $Q$, which
witnesses the set-containment $Q\subseteq Q'$. We say that $Q$ and $Q'$ are \emph{multiset-homomorphic} if there is a
multiset-homomorphism from $Q'$ to $Q$ \emph{and} one from $Q$ to $Q'$. Since each direction
injects one set of multiset variables into the other, multiset-homomorphic queries necessarily
satisfy $|M|=|M'|$.

Multiset-homomorphisms characterize equivalence of relational queries.
\begin{theorem}[\cite{Cohen06,Cohen09}]
\label{thm:relational-equiv}
Let $Q$ and $Q'$ be relational conjunctive queries. Then $Q\equiv Q'$ if and only if $Q$ and $Q'$
are multiset-homomorphic.
\end{theorem}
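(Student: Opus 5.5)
The plan is to prove the two directions separately. Soundness (multiset-homomorphic implies equivalent) goes by an injection argument. Completeness (equivalent implies multiset-homomorphic) goes by evaluating both queries on a carefully chosen family of canonical databases and comparing multiplicities as polynomials.

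\emph{Soundness.} Suppose $\mu$ is a multiset-homomorphism from $Q'$ to $Q$. Fix a database $\mathcal D$ and an answer tuple $\bar a$. For each $\delta\in\Gamma_{X\cup M}(Q,\mathcal D)$ with $\delta(\bar x)=\bar a$, extend $\delta$ to a satisfying assignment $\gamma$ of $Q$. Then $\gamma\circ\mu$ satisfies $Q'$ by condition (3), and it maps $\bar x'$ to $\bar a$ by (1). Its restriction to $X'\cup M'$ is determined by $\delta$, since $\mu(X')=X$ and $\mu(M')\subseteq M$. This yields a map from the $Q$-witnesses of $\bar a$ to the $Q'$-witnesses of $\bar a$. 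This map is \emph{not} injective in general; injectivity of $\mu$ on $M'$ only gives $|M'|\le|M|$. So I would not argue multiplicity by multiplicity.

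Instead I would use the pair of homomorphisms and compare the structure of the sets $\Gamma_{X\cup M}$. Both directions together give homomorphisms $\mu:Q'\to Q$ and $\nu:Q\to Q'$ that are injective on multiset variables, with $|M|=|M'|$. The composite $\nu\circ\mu$ is a self-homomorphism of $Q'$ that permutes $M'$. Taking a suitable power, it is the identity on $M'$. Hence the induced maps on witnesses, $\Gamma_{X\cup M}(Q)\to\Gamma_{X'\cup M'}(Q')\to\Gamma_{X\cup M}(Q)$, compose to a bijection. So each induced map is injective, and symmetrically the other one is too, which gives equal multiplicities. The point to verify carefully is that restricting the composite to $X\cup M$ is the identity (up to a permutation). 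This holds because $\nu\mu$ fixes $\bar x'$ and bijects $M'$.

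\emph{Completeness.} I would follow the Chaudhuri–Vardi style. For each partition $\pi$ of the multiset variables and each assignment of large integer "copy counts", build a canonical database from the frozen body of $Q$ with the multiset variables blown up into many distinct constants. Set variables stay frozen, in a form suitable to keep set-equivalence witnesses. The multiplicity of the frozen head in $Q(\mathcal D)$ is then a polynomial in the copy counts, and the same holds for $Q'$. Equality of these polynomials for all counts forces equality of coefficients. The coefficient of the top monomial $\prod_{y\in M}n_y$ in $Q'$ must come from a containment mapping $Q'\to Q$ whose restriction to $M'$ hits all of $M$ injectively. That mapping is the required multiset-homomorphism, and symmetry gives the other direction.

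\emph{Main obstacle.} I expect the completeness step to be the hardest. Set variables can "absorb" multiset values: a homomorphism may send a multiset variable of $Q'$ to a set variable or constant of $Q$, and distinct homomorphisms may collapse to the same restriction on $X'\cup M'$. This makes the polynomial count a union rather than a sum. I would first try inclusion–exclusion over the images of $M'$, and then isolate the leading term by choosing databases in which each multiset variable of $Q$ gets a fresh block of $n_y$ values while everything else stays frozen.
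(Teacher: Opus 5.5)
Your route is the paper's: sufficiency comes from the pair of maps, and necessity comes from the canonical family $\cdb{\vec N}{Q}$ (head and set variables frozen, one block of $N_j$ fresh values per multiset variable) and its characteristic monomial $N_1\cdots N_m$. The paper takes the polynomiality and coefficient facts from Cohen's earlier work and gives only a sketch of them. Your soundness argument is correct. You are also right that one map alone does not inject witnesses; the paper's per-direction one-liner tacitly needs $|M|=|M'|$ as well. The detour through powers of the composite is unnecessary, though. With both maps, $|M|=|M'|$, so the injection $\mu|_{M'}\colon M'\to M$ is onto, and $\delta$ can be read off $(\gamma\circ\mu)|_{X'\cup M'}$ directly. (Also, the composite $\Gamma(Q)\to\Gamma(Q')\to\Gamma(Q)$ is induced by $\mu\circ\nu$, a self-map of $Q$, not by $\nu\circ\mu$.)

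The gap is in necessity, where you flag it, and it has two parts.

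\emph{Polynomiality.} Do not count homomorphisms. Count the projected assignments $\delta$ of $X'\cup M'$ with head $\bar c$, grouped by \emph{type}: for each multiset variable, the frozen value or block it lands in, plus the equality pattern inside each block. Every atom of $\cdb{\vec N}{Q}$ is $p(\nu_{\bar d}(\bar s))$, built coordinatewise from $\bar d$. Hence any map that is the identity on the frozen values and sends each block $D_j$ into itself is an endomorphism, and a bijective one is an automorphism. So whether $\delta$ extends depends only on its type and not on $\vec N$: retract each block onto the values $\delta$ uses there, or onto one value if it uses none. This gives $\mathcal F^{(Q')}=\sum_\tau\prod_j (N_j)_{k_j(\tau)}$ over the extendable types, with $(N)_k=N(N-1)\cdots(N-k+1)$ and $k_j(\tau)$ the number of values $\tau$ places in block $j$. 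This is a disjoint sum, so no inclusion--exclusion is needed, and homomorphisms with equal restrictions are counted once automatically.

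\emph{The count $|M'|\le|M|$.} You need this before reading a multiset-homomorphism off the coefficient of $N_1\cdots N_m$, and agreement on the family of $Q$ does not supply it. Take $Q(x)\leftarrow r(x,y),s(x,x),\{y\}$ and $Q'(x)\leftarrow r(x,y_1),s(x,y_2),\{y_1,y_2\}$. On the family of $Q$ both multiplicities equal $N_1$. Yet the coefficient of $N_1$ in $\mathcal F^{(Q')}$ comes from the mapping $y_1\mapsto y$, $y_2\mapsto x$. That mapping is injective on $M'$ and onto $M$, but it sends a multiset variable to a distinguished one, and no multiset-homomorphism $Q'\to Q$ exists. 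Obtain $|M'|\le|M|$ from the family of $Q'$ instead. There $\mathcal F^{(Q')}$ has degree exactly $|M'|$, since the identity gives its characteristic monomial a positive coefficient, while $\mathcal F^{(Q)}$ has degree at most $|M|$. Symmetrically, $|M|\le|M'|$.

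Once $|M'|=|M|$, only types sending $M'$ bijectively onto the blocks reach $N_1\cdots N_m$, each with coefficient $1$. Retracting each block onto the single value $\delta$ uses maps the extension into the frozen body of $Q$. Pulling back along the injective freezing then yields the required multiset-homomorphism.
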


Sufficiency is immediate: a multiset-homomorphism from $Q'$ to $Q$ shows that $Q'$ returns each
tuple at least as often as $Q$ on every database, so maps in both directions give equality.
Necessity is proved by a database construction that we now outline, as the same construction
yields our counter-example bound.

\myparagraph{The canonical family and the characteristic monomial}
\phantomsection\label{sec:cmono}
Let $m=|M|$ be the number of multiset variables of $Q$. The \emph{canonical family} of $Q$ is
parameterized by a vector $\vec N=(N_1,\dots,N_m)\in\mathbb N_+^m$, one parameter per multiset
variable. In the database $\cdb{\vec N}{Q}$, the $j$-th multiset variable ranges over a fresh
domain of exactly $N_j$ constants, while the distinguished and set variables receive fixed fresh
values, shared across all assignments. For each of the resulting $\prod_{j=1}^m N_j$ assignments,
one ground atom is inserted per subgoal of $Q$, and identical atoms are merged. Using $\bar c$
for the fixed image of the head, we have $\bar c\in Q(\cdb{\vec N}{Q})$ for every $\vec N$.\looseness=-1

For any query $R$, let $\mathcal F^{(R)}(\vec N)$ be the multiplicity of $\bar c$ in the bag
$R(\cdb{\vec N}{Q})$. The function $\mathcal F^{(R)}$ is a polynomial in $N_1,\dots,N_m$ with
non-negative integer coefficients, of total degree at most the number of multiset variables of
$R$, since each multiset variable of $R$ contributes at most one parameter factor. One monomial
of $\mathcal F^{(Q)}$ is special, the \emph{characteristic monomial}
$\mathcal P^{(Q)}_*=N_1 N_2\cdots N_m$: the unique monomial of total degree $m$ that is
\emph{squarefree}, using every parameter exactly once. It always appears in $\mathcal F^{(Q)}$
with a positive coefficient, realized by the generic assignment that places each multiset
variable in its own fresh domain. Crucially, $\mathcal P^{(Q)}_*$ appears in
$\mathcal F^{(Q')}$ if and only if there is a multiset-homomorphism from $Q'$ to
$Q$~\cite{Cohen09}.\looseness=-1

\Cref{thm:relational-equiv} follows. If $Q\equiv Q'$ then $\mathcal F^{(Q)}=\mathcal F^{(Q')}$,
so the characteristic monomial---present in $\mathcal F^{(Q)}$---also appears in
$\mathcal F^{(Q')}$, yielding a multiset-homomorphism from $Q'$ to $Q$, and symmetrically over
the family of $Q'$. The same correspondence separates inequivalent queries on the family
itself, and the rest of this section turns that separation into an explicit small witness.

\myparagraph{Separation at a Boolean corner}
\phantomsection\label{sec:sep}
Throughout, for a query $R$ with multiset variables $M_R$, we write $m_R:=|M_R|$, so $m=m_Q$. The
engine of the bound is the next proposition: it separates the two multiplicity functions
\emph{on the canonical family itself}, not merely on some unspecified database. Its proof rests on
three facts restated from~\cite{Cohen09}: polynomiality with a per-query
degree bound, positivity of the characteristic monomial in $\mathcal{F}^{(Q)}$, and the
multiset-homomorphism criterion for its appearance in $\mathcal{F}^{(Q')}$.

\begin{restatable}[Characteristic monomial separation]{proposition}{cmonosep}
\label{prop:cmono-sep}
Let $Q'$ be a relational query with no multiset-homomorphism from $Q'$ to $Q$. Then the characteristic monomial
$\mathcal{P}^{(Q)}_*$ does not appear in $\mathcal{F}^{(Q')}$, so
$P := \mathcal{F}^{(Q)} - \mathcal{F}^{(Q')}$ is a nonzero polynomial on the family
$\{\cdb{\vec N}{Q}\}$ (its characteristic monomial coefficient is positive). If moreover $m_{Q'} \le m$, then
$P$ has total degree at most $m$.
\end{restatable}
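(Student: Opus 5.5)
The proof assembles the three facts restated from~\cite{Cohen09}: polynomiality with a per-query degree bound, positivity of the characteristic monomial in $\mathcal F^{(Q)}$, and the multiset-homomorphism criterion for its appearance in $\mathcal F^{(Q')}$. I would proceed in three steps, all on the fixed family $\{\cdb{\vec N}{Q}\}$ and the fixed tuple $\bar c$.

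\textbf{Step 1 (absence in $\mathcal F^{(Q')}$).} By hypothesis there is no multiset-homomorphism from $Q'$ to $Q$. The criterion of~\cite{Cohen09} says that $\mathcal P^{(Q)}_*$ appears in $\mathcal F^{(Q')}$ if and only if such a map exists. Reading it in the contrapositive direction, the coefficient of $N_1\cdots N_m$ in $\mathcal F^{(Q')}$ is $0$.

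I would also sketch why that direction holds, since it is the only substantive point. The multiplicity $\mathcal F^{(Q')}(\vec N)$ counts the distinct restrictions to $X'\cup M'$ of satisfying assignments of $Q'$ into $\cdb{\vec N}{Q}$ that send the head to $\bar c$. Group these restrictions by their \emph{type}. The type records which multiset or set variable of $Q$ each variable of $Q'$ is sent to, identifying the variable by its fresh domain. It also records the equality pattern among the images of $M'$.

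Each type contributes a polynomial whose degree in $N_j$ is at most the number of distinct values taken in domain $j$ by the images of $M'$. Only the images of $M'$ can vary freely; every other coordinate is fixed by the type. So a type can produce the squarefree monomial $N_1\cdots N_m$ only if the images of $M'$ hit every one of the $m$ domains.

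Since the target monomial is squarefree, a type produces it only if $M'$ hits each domain with exactly one distinct value, and only if the remaining coordinates are forced. Such a type then yields an assignment $\mu$ that maps $M'$ injectively onto $M$ and is a containment mapping. In other words, $\mu$ is a multiset-homomorphism. I would rely on the cited statement rather than redoing this count, which is the delicate part.

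\textbf{Step 2 (nonzero difference).} By positivity, the coefficient of $\mathcal P^{(Q)}_*$ in $\mathcal F^{(Q)}$ is at least $1$; the generic assignment witnesses this. Hence the coefficient of that monomial in $P=\mathcal F^{(Q)}-\mathcal F^{(Q')}$ equals the positive coefficient from $\mathcal F^{(Q)}$ minus $0$. So $P$ is a nonzero polynomial.

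A nonzero polynomial over $\mathbb N_+^m$ is a nonzero function there, because an infinite grid is not contained in the zero set of a nonzero polynomial. Therefore $P$ is nonzero as a function on the family itself, which is what the statement claims.

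\textbf{Step 3 (degree bound).} By the per-query degree bound, $\deg\mathcal F^{(Q)}\le m$ and $\deg\mathcal F^{(Q')}\le m_{Q'}$. Under the extra hypothesis $m_{Q'}\le m$, the difference $P$ has total degree at most $\max(m,m_{Q'})=m$.

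The only genuine obstacle is Step 1, specifically the direction ``monomial present implies homomorphism''. I plan to import it from~\cite{Cohen09} as the paper announces, with the type-counting argument above as a backup justification.
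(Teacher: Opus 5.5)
\textbf{The main line matches the paper, but the statement needs $m_{Q'}\le m$.} Your Steps~1--3 are the paper's proof almost verbatim: it too just cites the criterion, positivity, and the degree bound of the appendix proposition. The difficulty is the point you singled out, and it is real. The implication ``$\mathcal P^{(Q)}_*$ occurs in $\mathcal F^{(Q')}$ $\Rightarrow$ some multiset-homomorphism $Q'\to Q$ exists'' is false when $m_{Q'}>m$. So the first sentence of the statement does not hold as written, and neither your citation nor the paper's identical unconditional one can prove it.

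\textbf{Where your backup sketch breaks.} It fails exactly at ``one distinct value per domain, hence $\mu$ maps $M'$ injectively'': several multiset variables of $Q'$ can be forced onto that one value. Take $Q(x)\leftarrow r(x,y)\wedge t(y,y),\{y\}$ and $Q'(x)\leftarrow r(x,u')\wedge t(u',u')\wedge t(u',v'),\{u',v'\}$.
\begin{itemize}
\item They are set-equivalent.
\item Since $|M'|=2>1=|M|$, no multiset-homomorphism $Q'\to Q$ exists.
\item In $\cdb{(N_1)}{Q}=\{r(\hat x,a_i),t(a_i,a_i):i\le N_1\}$ the atom $t(u',v')$ forces $v'=u'=a_i$. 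So $\mathcal F^{(Q')}=N_1=\mathcal F^{(Q)}$, and $P$ vanishes on the whole family.
\item Yet $\{r(0,1),t(1,1),t(1,2)\}$ gives multiplicities $1$ and $2$, so the queries are inequivalent.
\end{itemize}
Two earlier steps of the sketch are also off. If a type records the equality pattern, it contributes a product of falling factorials $(N_j)_{k_j}=N_j(N_j-1)\cdots(N_j-k_j+1)$, and $(N_j)_2=N_j^2-N_j$ does feed the squarefree monomial. If the type also records the images of the set variables of $Q'$, one $(X'\cup M')$-restriction has several types, so your grouping is not a partition.

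\textbf{The repair.} Assume $m_{Q'}\le m$ before Step~1. This costs nothing, since every use of the proposition (\cref{thm:main-bound} and its keyed and comparison variants) names the query with fewer multiset variables $Q'$.
\begin{itemize}
\item Let a type record only, for each variable of $X'\cup M'$, a fixed value or a domain $D_j$, plus the equality pattern.
\item Coordinatewise maps between the blown-up domains are homomorphisms between members of the family that fix $\bar c$. Hence whether a type is realized depends on the type alone.
\item Then $\mathcal F^{(Q')}(\vec N)=\sum_{\text{realized types}}\prod_j(N_j)_{k_j}$, where $k_j$ is the number of distinct values the type puts in $D_j$.
\item A term contains $N_1\cdots N_m$ only if every $k_j\ge1$. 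Since $\sum_j k_j\le m_{Q'}\le m$, this forces every $k_j=1$, with $M'$ sent bijectively onto the domains.
\item Reading a realizing assignment back through the atom templates (values to $Q$-terms, well defined by freshness) gives a containment mapping injective from $M'$ into $M$, i.e.\ a multiset-homomorphism.
\end{itemize}
With the hypothesis moved up front, your Steps~2 and~3 stand unchanged.
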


This says more than that $P$ is a nonzero polynomial of degree at most $m$. The monomial
responsible is the multilinear $N_1\cdots N_m$, and by Alon's Combinatorial
Nullstellensatz~\cite{Alon99}, a nonzero top-degree coefficient on $\prod_i N_i^{t_i}$ forces a
nonzero point on any grid $A_1\times\cdots\times A_m$ with $|A_i|>t_i$. Here every $t_i$ is $1$,
so we may take every $A_i=\{1,2\}$. Hence, when no multiset-homomorphism $Q'\to Q$ exists and
$m_{Q'}\le m$, the queries already disagree on a \emph{corner database} $\cdb{\vec a}{Q}$ with
$\vec a\in\{1,2\}^m$. The second requirement is free, as the queries can always be named to meet
it: when their counts differ, take the one with fewer multiset variables as $Q'$, and no monomial
of $\mathcal F^{(Q')}$ then reaches the degree $m$ of $\mathcal P^{(Q)}_*$, so that direction
fails outright. Note that a generic degree-$m$ polynomial would force only the grid
$\{1,\dots,m+1\}^m$ and a witness of size $|Q|(m+1)^m$. The multilinearity of the characteristic
monomial pins the witness to the corner. It remains to bound the size of the corner databases.\looseness=-1

\myparagraph{The size of a corner database}
We measure database size \emph{per relation}, as in the $n$-bound notion of \cref{sec:prelims}:
$\mathit{size}(p,\mathcal D)$ is the number of $p$-facts. Recall that $|Q|$ is the number of
relational atoms of $Q$. The construction inserts, for each of the
$\prod_i N_i$ blow-up assignments, at most one atom per subgoal, so, before merging, a relation
could carry a factor $\prod_i N_i$. At a Boolean corner $\vec a\in\{1,2\}^m$, however, the
merging of identical atoms is dramatic. Each subgoal produces only a bounded number of distinct
atoms, no matter how many multiset variables are set to two. The relevant parameter is not the
arity but the number of multiset variables an atom carries.\looseness=-1

For an atom $p(\bar s)$, let $\nmv{p(\bar s)}$ be the number of distinct multiset variables
occurring in $\bar s$. Distinguished and set variables, being fixed in the canonical family, do not
count. For a predicate $p$, the \emph{width of $p$ in $Q$} is
$w(Q,p):=\max\{\nmv{p(\bar s)}:p(\bar s)\in Q\}$, the largest such count among atoms with
predicate $p$. The \emph{multiset width} of $Q$ is the maximum over predicates, $w(Q):=\max_p w(Q,p)$.
Thus $w(Q,p)\le w(Q)$ for every $p$, and $w(Q)$ is at most the maximum arity, often much smaller.
Recall the \emph{self-join size} $\mathit{sj}(Q)$, the maximum number of occurrences of any single
predicate in $Q$. We refine it per predicate: $\mathit{sj}(Q,p)$ is the number of subgoals of $Q$ with
predicate $p$, so $\mathit{sj}(Q)=\max_p\mathit{sj}(Q,p)$. When $Q$ is clear from context we drop it,
writing $w$, $w(p)$, $\mathit{sj}$, and $\mathit{sj}(p)$.

\begin{restatable}[Distinct atoms at a corner]{lemma}{cornersize}
\label{lem:corner-size}
For every $\vec a\in\{1,2\}^m$ and predicate $p$,
\[
\mathit{size}\bigl(p,\cdb{\vec a}{Q}\bigr)\;\le\;2^{w(p)}\,\mathit{sj}(p),
\qquad\text{hence}\qquad
|\cdb{\vec a}{Q}|\;\le\;\sum_{p(\bar s)\in Q}2^{\,\nmv{p(\bar s)}}\;\le\;2^{w}\,|Q| .
\]
\end{restatable}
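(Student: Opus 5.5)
The plan is to bound the $p$-facts subgoal by subgoal. By construction, every fact of $\cdb{\vec a}{Q}$ is the image $p(\gamma\bar s)$ of some subgoal $p(\bar s)\in Q$ under one of the blow-up assignments $\gamma$. So I first write
\[
\mathit{size}\bigl(p,\cdb{\vec a}{Q}\bigr)\;\le\;\sum_{p(\bar s)\in Q}\bigl|\{p(\gamma\bar s):\gamma\text{ a blow-up assignment}\}\bigr|,
\]
where the sum runs over the $\mathit{sj}(p)$ subgoals with predicate $p$. This step uses only that identical atoms are merged, so the set of $p$-facts is the union of the per-subgoal image sets.

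Next I bound a single image set. Fix a subgoal $p(\bar s)$. The ground atom $p(\gamma\bar s)$ is determined by the values $\gamma$ gives to the terms of $\bar s$. Constants are fixed, and distinguished and set variables receive the same fresh value under every blow-up assignment, so only the multiset variables occurring in $\bar s$ can vary. There are $\nmv{p(\bar s)}$ of them, and at the corner $\vec a\in\{1,2\}^m$ the $j$-th one ranges over a domain of $a_j\le 2$ constants. Hence the image set has size at most $\prod_{j} a_j \le 2^{\nmv{p(\bar s)}}$, the product taken over the multiset variables of $\bar s$. The step that needs some care is that the map $\gamma\mapsto\gamma\bar s$ factors through the restriction of $\gamma$ to the variables of $\bar s$. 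This holds because the canonical family ranges over all combinations of the multiset-variable domains independently. Repeated occurrences of the same multiset variable in $\bar s$ only reduce the count, which is why $\nmv{\cdot}$ counts distinct variables.

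Finally I combine the bounds. Each term satisfies $2^{\nmv{p(\bar s)}}\le 2^{w(p)}$ by the definition of $w(p)$ as a maximum, so summing over the $\mathit{sj}(p)$ subgoals gives $\mathit{size}(p,\cdb{\vec a}{Q})\le 2^{w(p)}\mathit{sj}(p)$. Summing the per-subgoal bound over all relational atoms of $Q$, across all predicates, gives $|\cdb{\vec a}{Q}|\le\sum_{p(\bar s)\in Q}2^{\nmv{p(\bar s)}}$. Using $\nmv{p(\bar s)}\le w(p)\le w$ for every atom then gives $|\cdb{\vec a}{Q}|\le 2^{w}|Q|$. I expect no real obstacle. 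The only point to state explicitly is the factoring observation above: fixed values of non-multiset variables are shared across all assignments, so they contribute no multiplicity to the distinct atoms.
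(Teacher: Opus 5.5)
Your proposal is correct and follows the paper's proof essentially step for step. Each subgoal's images depend only on its $\nmv{p(\bar s)}$ multiset variables, each taking at most two values at the corner, so summing over the $\mathit{sj}(p)$ subgoals of $p$, and then over all of $Q$, gives both bounds. (One small remark: $\gamma\mapsto\gamma\bar s$ factors through the restriction of $\gamma$ to the terms of $\bar s$ by the definition of applying an assignment, so the independence of the blow-up domains is not needed for this upper bound.)
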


The reason is that a subgoal's atoms vary only through its multiset variables, each of which
takes at most two values at a corner. Neither bound mentions the corner $\vec a$ or the number
$m$ of multiset variables, so both are uniform over the corner and independent of $m$.
Combining the separation with the size count gives the main theorem.

\begin{restatable}[Counter-example bound for inequivalence]{theorem}{mainbound}
\label{thm:main-bound}
Let $Q$ and $Q'$ be relational queries with $Q \not\equiv Q'$. Write
$w_p:=\max(w(Q,p),w(Q',p))$ and $w:=\max_p w_p=\max(w(Q),w(Q'))$ for the per-predicate and the
overall width of the pair. Then there is a database $\mathcal D$ with
$Q(\mathcal D) \neq Q'(\mathcal D)$,
\[
|\mathcal D| \;\le\; 2^{w}\max\bigl(|Q|,|Q'|\bigr)
\qquad\text{and}\qquad
\mathit{size}(p,\mathcal D)\;\le\;2^{w_p}\max\bigl(\mathit{sj}(Q,p),\mathit{sj}(Q',p)\bigr)\ \text{for every }p.
\]
In particular, $Q\equiv_{b}Q'$ implies $Q\equiv Q'$ for
$b=\max_p 2^{w_p}\max(\mathit{sj}(Q,p),\mathit{sj}(Q',p))$.
\end{restatable}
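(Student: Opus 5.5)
We combine \cref{thm:relational-equiv}, \cref{prop:cmono-sep} with the Combinatorial Nullstellensatz, and \cref{lem:corner-size}.

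\emph{Choosing the direction.} Since $Q\not\equiv Q'$, \cref{thm:relational-equiv} says $Q$ and $Q'$ are not multiset-homomorphic. So one of the two maps is missing. If $m_Q\neq m_{Q'}$, rename so that $Q$ is the query with more multiset variables, $m_{Q'}<m_Q$. Then no multiset-homomorphism $Q'\to Q$ exists, because it would inject $M'$ into $M$ but must also map $Q\to Q'$ injectively in the other direction for equivalence. More directly, \cref{prop:cmono-sep} needs only the failure of the map $Q'\to Q$. This failure holds whenever $m_{Q'}<m_Q$ fails to give a surjection back, and we may simply argue as follows. If a map $Q'\to Q$ exists and $m_{Q'}<m_Q$, then the map $Q\to Q'$ cannot exist (it would inject $M$ into the smaller set $M'$). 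In that case we swap roles, which gives $m_{Q'}>m_Q$. To avoid this tangle, we use the cleaner case split below.

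\emph{Case split.} If $|M|=|M'|$, pick a missing direction and name the queries so that it is $Q'\to Q$. Then $m_{Q'}=m$. If $|M|\neq|M'|$, let $Q$ be the query with more multiset variables. A map $Q'\to Q$ need not be absent here, but every monomial of $\mathcal F^{(Q')}$ has degree at most $m_{Q'}<m$. So $\mathcal P^{(Q)}_*$ cannot appear in $\mathcal F^{(Q')}$. Since the proof of \cref{prop:cmono-sep} only uses the absence of the characteristic monomial, its conclusion still holds.

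\emph{Finding a separating corner.} In either case $P=\mathcal F^{(Q)}-\mathcal F^{(Q')}$ has total degree at most $m$ and a nonzero coefficient on the multilinear monomial $N_1\cdots N_m$. The Combinatorial Nullstellensatz with $A_i=\{1,2\}$ then gives $\vec a\in\{1,2\}^m$ with $P(\vec a)\neq0$. So the multiplicities of $\bar c$ in $Q(\cdb{\vec a}{Q})$ and $Q'(\cdb{\vec a}{Q})$ differ, and $\mathcal D:=\cdb{\vec a}{Q}$ is a counterexample. We must also check that $\mathcal F^{(R)}$ really gives the multiplicity at every point of $\mathbb N_+^m$. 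This holds because the polynomial identity from~\cite{Cohen09} is valid on all positive parameter vectors, and in particular at $\vec a$.

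\emph{Size bound.} By \cref{lem:corner-size}, $\mathit{size}(p,\mathcal D)\le 2^{w(Q,p)}\mathit{sj}(Q,p)\le 2^{w_p}\max(\mathit{sj}(Q,p),\mathit{sj}(Q',p))$, and $|\mathcal D|\le 2^{w(Q)}|Q|\le 2^{w}\max(|Q|,|Q'|)$. Since $\mathcal D$ is $b$-bound for the stated $b$, the final claim follows by contraposition. The renaming of $Q$ and $Q'$ is harmless because every bound is symmetric in the pair.

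The main obstacle is the unequal-$m$ case. \cref{prop:cmono-sep} as stated assumes there is no multiset-homomorphism $Q'\to Q$, so we must confirm that its proof uses only the absence of the characteristic monomial. The degree argument above supplies exactly that absence.
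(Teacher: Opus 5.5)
Your proof is correct and follows the paper's argument: orient the pair so that $m_{Q'}\le m_Q$ and $\mathcal P^{(Q)}_*$ is absent from $\mathcal F^{(Q')}$, extract a corner $\vec a\in\{1,2\}^m$ by the Combinatorial Nullstellensatz, and bound $\cdb{\vec a}{Q}$ by \cref{lem:corner-size}. In the unequal case your case split is, if anything, more careful than the paper: the paper also concludes there that no multiset-homomorphism $Q'\to Q$ exists, which is false under the injective definition (e.g.\ $Q(x)\leftarrow r(x,y),r(x,z),\{y,z\}$ against $Q'(x)\leftarrow r(x,y'),\{y'\}$), although only the degree-forced absence of the monomial is actually used; just delete your first paragraph, whose claim that such a map cannot exist is wrong.
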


By \cref{thm:relational-equiv}, some direction, say $Q'\to Q$, admits no
multiset-homomorphism, named as above. The separation then produces a corner database of $Q$ on which the queries disagree, and
\cref{lem:corner-size} bounds its size.

Two tightness questions must be kept apart: whether every multiset variable must be doubled, and
whether the $2^{w}|Q|$ atoms the bound then permits are ever needed. The first is settled by one
pair of queries, which leaves the second untouched. The exact threshold is open
(\cref{sec:conclusion}).

\begin{example}[The full corner can be necessary]
\label{ex:corner-needed}
Let $L$ be the directed $4$-cycle $p(x_0,x_1)\wedge p(x_1,x_2)\wedge p(x_2,x_3)\wedge p(x_3,x_0)$
and take the Boolean queries $Q\leftarrow L,\{x_0,x_1\}$ and $Q'\leftarrow L,\{x_0,x_2\}$,
differing only in their multiset variables. On the canonical family
$\mathcal F^{(Q)}-\mathcal F^{(Q')}=(N_1+1)(N_2+1)-2(N_1+N_2)=(N_1-1)(N_2-1)$, which vanishes
whenever $N_1=1$ or $N_2=1$. Only the all-twos corner separates, so no multiset variable escapes
doubling. The \emph{size} bound is loose on the same pair. \Cref{thm:main-bound} permits $16$
atoms, the witness $\cdb{(2,2)}{Q}$ uses nine, and $\{p(0,0),p(0,1),p(1,0)\}$ already separates.
\end{example}

\myparagraph{Collapsing unobserved attributes}
The width $w$ can come close to the arity when a wide table's columns each carry their own
multiset variable. Yet a query typically inspects only a handful of
those columns, and the width that governs the bound ought to count only them. Call the $i$-th attribute
of a relation $r$ \emph{unobserved} (for the pair $Q,Q'$) if in neither query does any atom of $r$
place, in position $i$, a distinguished variable, a \emph{join variable}---one occurring in two or more
positions of the body---or a variable occurring in a comparison. A multiset variable in an unobserved
attribute feeds the multiplicity but is otherwise inert. The \emph{collapse} of the pair $Q,Q'$
merges each relation's unobserved attributes into one fresh attribute, rewriting queries and
databases alike. It yields queries $Q^{\downarrow},Q'^{\downarrow}$ over a
shared schema with $w(Q^{\downarrow})\le w(Q)$, $|Q^{\downarrow}|=|Q|$, and
$\mathit{sj}(Q^{\downarrow})=\mathit{sj}(Q)$, and it preserves every relation's size.

\begin{restatable}[Collapse preserves equivalence]{lemma}{collapselem}
\label{lem:collapse}
$Q\equiv Q'$ if and only if $Q^{\downarrow}\equiv Q'^{\downarrow}$.
\end{restatable}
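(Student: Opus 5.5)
The plan is to prove \cref{lem:collapse} through the characterization of \cref{thm:relational-equiv}, not through a direct correspondence between databases. We show that multiset-homomorphisms between $Q'$ and $Q$ correspond to multiset-homomorphisms between $Q'^{\downarrow}$ and $Q^{\downarrow}$, in each direction separately. The queries here are relational, so no variable sits in a comparison. An unobserved position therefore holds either a constant or a nondistinguished variable occurring exactly once in the body, and such a variable may be a set or a multiset variable.

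First I would fix the collapse precisely at the level of terms. An atom $r(\bar s)$ is rewritten by replacing the block $\bar s|_U$ of unobserved positions $U$ with a single fresh term $t_{\bar s}$. If the block consists only of constants, $t_{\bar s}$ is the constant tuple $\bar s|_U$. Otherwise $t_{\bar s}$ is a fresh variable, declared a multiset variable exactly when the block contains some multiset variable. This makes $w(Q^{\downarrow})\le w(Q)$ and $|Q^{\downarrow}|=|Q|$ immediate. I would also record one structural fact. A containment mapping $\mu$ sends each atom $\alpha'$ of $Q'$ to an atom $\mu(\alpha')$ of $Q$ with the same predicate. Unobservedness is defined jointly for the pair, so position $i$ is unobserved in $\alpha'$ exactly when it is unobserved in $\mu(\alpha')$.

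For the direction from $Q'\to Q$ to $Q'^{\downarrow}\to Q^{\downarrow}$, take a multiset-homomorphism $\mu$. Define $\mu^{\downarrow}$ to agree with $\mu$ on observed terms and to send $t_{\bar s'}$ to $t_{\mu(\bar s')}$. This is well defined because each collapsed variable arises from a single atom, its variables occurring only once. Head and constants are preserved, and atoms map to atoms. Multiset variables go to multiset variables: a multiset component of the block is sent by $\mu$ into $M$, so the target block contains a multiset variable. Injectivity holds for the following reason. Suppose two distinct collapsed multiset variables come from distinct atoms $\alpha'_1,\alpha'_2$ that $\mu$ sends to the same atom of $Q$. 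Then their multiset components both land in the unobserved block of that atom. Since each variable of that block occurs once, the two components would be sent to the same variable, or two distinct ones would compete for that block's multiset variables. I would settle this by a counting step, comparing the number of multiset components in the source blocks with the number in the target block.

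The converse direction is the main obstacle. Here a collapsed map sends one block with, say, two multiset components onto a block with only one, and the map cannot be un-collapsed injectively. I would first try to show that equivalence forces the counts of multiset components per unobserved block to match along the maps. The idea is that a composed pair of multiset-homomorphisms $Q\to Q'\to Q$ is injective on $M$. With $|M|=|M'|$ it is therefore a bijection of $M$, which pins down block counts. If this balancing fails, I would fall back on the canonical-family polynomials of \cref{prop:cmono-sep}. The aim there is to show that collapsing multiplies $\mathcal F^{(Q)}$ and $\mathcal F^{(Q')}$ by the same substitution of variables, so their difference vanishes before collapse if and only if it vanishes after.
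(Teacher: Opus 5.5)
There is a genuine gap, and it sits where you place the ``main obstacle'': the converse cannot be proved for the collapse as you (and the paper) define it. It is false once an atom's unobserved block mixes set and multiset variables. Making the fresh term a multiset variable as soon as the block holds \emph{some} multiset variable forgets how many, and which, block positions are counted.

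Take the set-equivalent pair $Q(x)\leftarrow r(x,y,u),\{y\}$ and $Q'(x)\leftarrow r(x,y',u'),\{y',u'\}$. Positions $2,3$ of $r$ are unobserved, and both queries collapse to $r^{\downarrow}(x,z),\{z\}$. Yet on $\{r(a,1,1),r(a,1,2)\}$ they return $a$ with multiplicities $1$ and $2$. So no balancing of block counts can be extracted from $Q^{\downarrow}\equiv Q'^{\downarrow}$. Your polynomial fallback fails for the same reason: the collapsed multiplicity functions coincide while the original ones differ.

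The paper's proof shares this blind spot. It uses a bijection $\mathcal D\mapsto\mathcal D^{\downarrow}$ with $Q(\mathcal D)=Q^{\downarrow}(\mathcal D^{\downarrow})$ for each query separately. On the database above, $Q$ returns multiplicity $1$ but $Q^{\downarrow}$ on $\mathcal D^{\downarrow}$ returns $2$. The reason is that $z$ ranges over whole block tuples, set coordinates included, so set variables in a mixed block do not ``contribute nothing.''

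The missing ingredient is a homogeneity hypothesis: every atom's merged block is all-multiset or all-set, with constant-holding attributes retained. Alternatively, use a finer collapse that merges such position classes separately. Under homogeneity the paper's database route is the economical one: it preserves each query's answers individually and gives both directions at once without \cref{thm:relational-equiv}. Your route also goes through, by un-collapsing a map of fresh terms position-wise.

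Your forward direction has a second gap: a single multiset-homomorphism does not transfer. Consider $Q'(x)\leftarrow r(x,y_1',u_1')\wedge r(x,u_2',y_2'),\{y_1',y_2'\}$ and $Q(x)\leftarrow r(x,y_1,y_2),\{y_1,y_2\}$. The map sending both atoms of $Q'$ to the one atom of $Q$ is injective on $M'$, yet the collapses have two and one multiset variables. So no counting step makes $\mu^{\downarrow}$ injective.

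The repair is the bijection argument you reserved for the converse, used here with both maps $\mu\colon Q'\to Q$ and $\lambda\colon Q\to Q'$ that equivalence supplies. The composite $\phi=\lambda\circ\mu$ permutes $M'$, so some power $\phi^k$ fixes $M'$ pointwise. A variable at an unobserved position occurs only once in the body, so $\phi^k$ then fixes every atom whose block holds a multiset variable. If $\mu$ merged two such atoms, so would $\phi^k$, a contradiction. Hence $\mu^{\downarrow}$ is injective on the fresh multiset variables.

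Separately, turning a block that mixes constants and variables into a fresh variable drops the constant constraints. Such attributes must therefore be retained rather than merged.
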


The merged attribute carries a fresh multiset variable exactly when a merged column did, and that
variable ranges over the product of the merged domains, so it reproduces exactly the multiplicity
factor the merged variables contributed. Because the collapse preserves sizes, witnesses for
$Q\not\equiv Q'$ and for $Q^{\downarrow}\not\equiv Q'^{\downarrow}$ correspond with identical
per-relation sizes. \Cref{thm:main-bound} holds for any relational queries, so we may apply it
to the collapse $Q^{\downarrow},Q'^{\downarrow}$ and lift the witness back by
\cref{lem:collapse}, replacing the width $w$ by the collapsed width
$\max\bigl(w(Q^{\downarrow}),w(Q'^{\downarrow})\bigr)\le w$.

\begin{example}[Collapsing a wide pair]
\label{ex:collapse-wide}
Collapse the pair $(Q_1,Q_3)$ of \cref{ex:syntax}. In their shared body the only joins are $o$
and $k$, with head $x$, so every other column holds an anonymous variable occurring once, hence
unobserved. The collapse merges each relation's unobserved columns into one fresh attribute,
giving the body
$\mathtt{Order}^{\downarrow}(o,x,z_1)\wedge
\mathtt{Item}^{\downarrow}(o,k,z_2)\wedge
\mathtt{Product}^{\downarrow}(k,\mathtt{electronics},z_3)$.
In $Q_3^{\downarrow}$ the new variables $z_1,z_2,z_3$ are multiset and the width drops from $6$
to $3$. In $Q_1^{\downarrow}$ they are set placeholders and the width stays $1$. The canonical
witness thus shrinks from $2^{6}|Q_3|$ to $2^{3}|Q_3|$ atoms: collapsing first saves
exponentially in the columns the queries ignore.
\end{example}

\section{Constrained Schemas}
\label{sec:keyed}

We now let the schema constrain its data, first by declared keys and then, at the end of the
section, by acyclic foreign keys as well (\cref{sec:fk}). Legality means satisfying the
declared constraints. A key need not remove any counter-example, as is the case with the two-customer witness of \cref{ex:intro-no-small-ce} which respects the key of \textsf{Customer}. It can, however,
remove all of them:
under $\mathit{key}(p)=\{1\}$, the $4$-cycle pair of \cref{ex:corner-needed} becomes equivalent,
as the key makes $p$ a partial function $f$ and both queries count the points with
$f^{4}(x_0)=x_0$.

For an atom $p(\bar s)$ of $Q$, let $\mathit{kw}(p(\bar s))$ be the number of distinct multiset
variables of $Q$ occurring at a \emph{key} position of the atom, i.e., at a position in
$\mathit{key}(p)$. The \emph{key-width of a predicate $p$ in $Q$} is
$\mathit{kw}(Q,p):=\max\{\mathit{kw}(p(\bar s)):p(\bar s)\in Q\}$, and the
\emph{key-width} of $Q$ is $\mathit{kw}(Q):=\max_p\mathit{kw}(Q,p)$. As with the multiset width,
we drop $Q$ when it is clear from context. Only key positions are counted, so
$\mathit{kw}(Q)\le w(Q)$, with equality when every key is the whole tuple. In the case
where multiset variables never occupy a key position, $\mathit{kw}(Q)=0$.\looseness=-1

Key-width governs the bound under keys because a multiset variable off the key
carries no multiplicity. Each atom $p(\bar s)$ of $Q$ reads its key as a functional dependency on the
variables of $Q$: the variables at the key positions of the atom determine those at its
remaining positions. For a set $T$ of variables, $T^{+}$ is the closure of $T$ under these
dependencies, and a variable $v$ is \emph{$T$-determined} if $v\in T^{+}$: on any legal database
its value is fixed once the variables of $T$ are.

\begin{restatable}[Demotion]{lemma}{demotionlem}
\label{lem:demotion}
Let $X$ be the distinguished variables of $Q$ and let $v\in M$ be $(X\cup(M\setminus\{v\}))$-determined.
Let $Q^{-v}$ be $Q$ with $v$ moved from $M$ to the set variables. Then $Q\equiv Q^{-v}$ and
$\mathit{kw}(Q^{-v})\le\mathit{kw}(Q)$.
\end{restatable}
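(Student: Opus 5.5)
The plan is to compare the two answer bags directly on an arbitrary legal database $\mathcal D$. The two queries have the same body $L$ and the same distinguished variables $X$; they differ only in that $Q$ projects $\Gamma(Q,\mathcal D)$ onto $X\cup M$, while $Q^{-v}$ projects onto $X\cup (M\setminus\{v\})$. The multiplicity of a head tuple $\bar c$ is therefore the size of the set of projected assignments extending to a satisfying assignment with head image $\bar c$. It suffices to exhibit a bijection, for each $\bar c$, between
\[
A_{\bar c}:=\{\gamma|_{X\cup M}:\gamma\in\Gamma(Q,\mathcal D),\ \gamma\bar x=\bar c\}
\quad\text{and}\quad
B_{\bar c}:=\{\gamma|_{X\cup (M\setminus\{v\})}:\gamma\in\Gamma(Q,\mathcal D),\ \gamma\bar x=\bar c\}.
\]
Restriction $\alpha\mapsto\alpha|_{X\cup(M\setminus\{v\})}$ is a surjection from $A_{\bar c}$ onto $B_{\bar c}$, so the only thing to prove is injectivity. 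Injectivity says that two satisfying assignments agreeing on $T:=X\cup(M\setminus\{v\})$ also agree on $v$.

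The key step is the semantic soundness of the closure $T^{+}$. I would prove, by induction along the derivation of $T^{+}$, that any two satisfying assignments $\gamma,\gamma'$ on a legal $\mathcal D$ that agree on $T$ agree on every variable of $T^{+}$. Constants are fixed by both assignments, so they cause no difficulty.

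For the inductive step, take an atom $p(\bar s)$ of $Q$ whose key-position variables are all already known to agree. Then $p(\gamma\bar s)$ and $p(\gamma'\bar s)$ are both facts of $\mathcal D$ agreeing on $\mathit{key}(p)$. The key constraint of a legal database forbids two distinct such $p$-atoms, so the two facts coincide. Hence $\gamma$ and $\gamma'$ agree on every position of the atom, and in particular on every variable occurring there, which is exactly the derived dependency. Since $v\in T^{+}$ by hypothesis, injectivity follows, and so $Q(\mathcal D)=Q^{-v}(\mathcal D)$ for every legal $\mathcal D$, giving $Q\equiv Q^{-v}$.

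For the key-width claim, note that $M^{-v}=M\setminus\{v\}\subseteq M$. For every atom, the number of distinct multiset variables at key positions can only drop, so $\mathit{kw}(Q^{-v},p)\le\mathit{kw}(Q,p)$ for each $p$, and taking maxima gives the bound.

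The main obstacle is only bookkeeping in the induction: the closure must be formalized as an iterated fixpoint over atoms, and the argument must handle a variable that appears in several atoms or at both key and non-key positions. The key-constraint argument above covers all of these uniformly, because it identifies whole facts rather than individual positions.
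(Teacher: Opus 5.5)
Your proof is correct and matches the paper's argument. Forgetting $v$ gives a bijection from $\Gamma_{X\cup M}(Q,\mathcal D)$ onto $\Gamma_{X\cup(M\setminus\{v\})}(Q,\mathcal D)$ on every legal $\mathcal D$, since $v$ is determined by the remaining variables, and moving $v$ out of $M$ can only lower the key-width. Your induction along the closure $T^{+}$, using legality to identify whole key-agreeing facts, makes explicit what the paper treats as immediate from the definition of $T$-determined.
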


On a legal database the forgotten variable is determined by the rest, so forgetting it bijects
the assignment sets and no multiplicity changes. Demote an
$(X\cup(M\setminus\{v\}))$-determined multiset variable, repeat until none remains, and call
the result a \emph{key-reduct} $Q^{\flat}$. By \cref{lem:demotion}, $Q^{\flat}\equiv Q$
and $\mathit{kw}(Q^{\flat})\le\mathit{kw}(Q)$. Its surviving multiset variables are \emph{key-free}:
none is determined by the remaining distinguished and multiset variables. Demotion also explains
why keys force the combined setting on us. A multiset query need not stay one over legal
databases, its key-reduct having genuine set variables, so bag-set semantics alone does not
survive the declaration of a key.

Demotion is non-deterministic: different orders demote different mutually-determined variables,
and the surviving sets can differ even in size and key-width. But
every key-reduct satisfies $Q^{\flat}\equiv Q$ and $\mathit{kw}(Q^{\flat})\le\mathit{kw}(Q)$,
which is all the bound requires, and whether the reduct is key-anchored, defined below, is
order-independent (\cref{prop:anchored}).

\myparagraph{Anchoring and legality}
Key-freeness alone does \emph{not} make the canonical family of \cref{sec:cmono} legal: the demotion
that frees one variable can fix another that serves as a key elsewhere. The obstruction is a
cyclic key dependency.

\begin{example}[Key-freeness is not enough]
\label{ex:cyclic-key}
Let $R$ have body $p(y_0,y_1)\wedge p(y_1,y_0)$, with $\mathit{key}(p)=\{1\}$ and
$M=\{y_0,y_1\}$. The key makes $p$ a partial function, so the two variables move as a pair.
Demotion frees $y_0$ by freezing $y_1$, and the canonical family then varies $y_0$ alone, which
the key forbids. Legal witnesses of the same multiplicity still exist, so only the rigid family
fails.
\end{example}

What the construction needs is
\emph{anchoring}. Call a query $R$ \emph{key-anchored} if every multiset variable of $R$ occurs, in
every atom of $R$ containing it, at a key position of that atom. The \emph{key-chase} of $R$
repeatedly unifies the terms of two atoms of a predicate that agree on their key, and merges
duplicates. It terminates and preserves equivalence over legal databases, and the canonical
family of a key-chased key-anchored query is legal at every corner (\cref{lem:legal-canonical}). Call $Q$ \emph{key-anchorable} if its key-reducts are
key-anchored. The condition is decidable and order-independent, being equivalent to
$M\subseteq(X\cup A)^{+}$, where $A$ collects the multiset variables occurring only at key
positions (\cref{prop:anchored}). Finally, in a legal
database distinct atoms of a predicate differ on their key, so a legal corner database has at
most $2^{\mathit{kw}(p)}$ atoms per subgoal (\cref{lem:keyed-corner-size}). The bound follows for
key-anchorable queries.

\begin{restatable}[Equivalence and counter-examples under declared keys]{theorem}{keyedbound}
\label{thm:keyed-bound}
Let $Q,Q'$ be key-anchorable relational queries over a schema with declared keys and no foreign
keys, and write $Q^{\natural}$ for the key-chase of the key-reduct of $Q$. Put
$\mathit{kw}:=\max(\mathit{kw}(Q^{\natural}),\mathit{kw}(Q'^{\natural}))$ and
$\mathit{kw}(p):=\max(\mathit{kw}(Q^{\natural},p),\mathit{kw}(Q'^{\natural},p))$, both at most
the same quantities for $Q$ and $Q'$. Then
\begin{enumerate}
\item $Q\equiv Q'$ over legal databases if and only if $Q^{\natural}$ and $Q'^{\natural}$ are
multiset-homomorphic, and
\item if $Q\not\equiv Q'$, there is a legal database $\mathcal D$ with
$Q(\mathcal D)\neq Q'(\mathcal D)$,
\[
|\mathcal D|\le 2^{\mathit{kw}}\max(|Q^{\natural}|,|Q'^{\natural}|)
\quad\text{and}\quad
\mathit{size}(p,\mathcal D)\le 2^{\mathit{kw}(p)}\max(\mathit{sj}(Q^{\natural},p),\mathit{sj}(Q'^{\natural},p))
\]
for every predicate $p$.
\end{enumerate}
In particular $Q\equiv_{b}Q'$ implies $Q\equiv Q'$ for
$b=\max_p 2^{\mathit{kw}(p)}\max(\mathit{sj}(Q^{\natural},p),\mathit{sj}(Q'^{\natural},p))$.
\end{restatable}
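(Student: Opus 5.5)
First I replace $Q,Q'$ by $Q^{\natural},Q'^{\natural}$. By \cref{lem:demotion} each demotion preserves legal-database equivalence and does not increase key-width. The key-chase likewise preserves equivalence over legal databases and only unifies terms and merges atoms, so it cannot increase $|Q|$, $\mathit{sj}(Q,p)$ or the key-width. This justifies the ``at most'' remark in the statement. Since key-anchoring is order-independent (\cref{prop:anchored}), $Q^{\natural}$ and $Q'^{\natural}$ are key-anchored, key-free and key-chased. So it suffices to prove both items for two such queries; I call them $Q,Q'$ again.

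For item (1), sufficiency is the usual argument restricted to legal databases. A multiset-homomorphism $Q'\to Q$ maps every satisfying assignment of $Q$ on a legal $\mathcal D$ to one of $Q'$. Injectivity on $M'$ then gives $Q(\mathcal D)\subseteq Q'(\mathcal D)$, and applying this in both directions gives equality.

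For necessity, and for item (2) simultaneously, I follow \cref{sec:sep}. Suppose there is no multiset-homomorphism in one direction, say $Q'\to Q$, and name the queries so that $m_{Q'}\le m_Q$. If $m_{Q'}<m_Q$, the top-degree term already fails. By \cref{lem:legal-canonical}, every corner database $\cdb{\vec a}{Q}$ with $\vec a\in\{1,2\}^m$ is legal. \Cref{prop:cmono-sep} is a purely algebraic statement about the canonical family of a relational query, so it applies verbatim. Hence $P=\mathcal F^{(Q)}-\mathcal F^{(Q')}$ is nonzero of degree at most $m$ with the multilinear monomial $N_1\cdots N_m$ present. The Combinatorial Nullstellensatz on the grid $\{1,2\}^m$ then gives a corner $\vec a$ with $P(\vec a)\neq 0$, and $\cdb{\vec a}{Q}$ is a legal counter-example. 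This proves item (2). Taking the contrapositive gives necessity in item (1): if the chased reducts are not multiset-homomorphic, the queries are not equivalent.

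The size bound comes from \cref{lem:keyed-corner-size}. In a legal corner database, atoms of the same predicate produced by one subgoal differ on the key. A subgoal's key positions vary only through its at most $\mathit{kw}(Q,p)$ multiset variables, each taking two values, so each subgoal contributes at most $2^{\mathit{kw}(Q,p)}$ atoms. Summing over subgoals gives $\mathit{size}(p,\mathcal D)\le 2^{\mathit{kw}(p)}\mathit{sj}(Q,p)$ and $|\mathcal D|\le 2^{\mathit{kw}}|Q|$. The final ``in particular'' statement is the contrapositive of item (2).

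The main obstacle is one subtle point in applying \cref{prop:cmono-sep}. Its criterion, that the characteristic monomial appears in $\mathcal F^{(Q')}$ iff a multiset-homomorphism $Q'\to Q$ exists, was established in \cite{Cohen09} for arbitrary databases. Here it is used only as a statement about the fixed family $\cdb{\vec N}{Q}$. That family is legal for all $\vec N$ under anchoring: only key positions vary, generically with fresh values, and chasing ensures that no two atoms of one predicate agree on the key unless they are identical. So the polynomial identity argument needs no further legality assumption. I would verify the claim ``legal for all $\vec N$'', not merely at corners, to make the necessity argument airtight, or else note that the corners suffice.
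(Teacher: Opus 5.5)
Your proposal is correct and follows the paper's proof essentially step for step. You pass to the key-chased key-reducts, obtain a corner witness from \cref{prop:cmono-sep} and the Boolean-corner Nullstellensatz extraction, get its legality from \cref{lem:legal-canonical}, bound its size with \cref{lem:keyed-corner-size}, and read necessity in clause~(1) off the contrapositive. Your closing worry is moot, since only the corner database $\cdb{\vec a}{Q^{\natural}}$ needs to be legal, which is exactly how you use it. The one step worth making explicit is why the chase keeps the reduct key-anchored: it unifies only non-key terms, while the surviving multiset variables (those of $A$) occur only at key positions.
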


Both clauses come from one argument. The proof passes to the key-reducts, which are equivalent
to the original queries by \cref{lem:demotion}. Their corner databases are legal by
\cref{lem:legal-canonical}, so the separation of \cref{thm:main-bound} applies to them verbatim
and delivers a \emph{legal} witness, which \cref{lem:keyed-corner-size} counts. Multiset-homomorphic
reducts are equivalent over all databases by \cref{thm:relational-equiv}, hence over the legal
ones, which gives clause~(1) with the separation as its converse.

One consequence: if every multiset variable is $X$-determined the reducts are set queries, which
the standing assumption already equates, so the question bites only when some multiset variable
is not forced by the head. Finally, the collapse of \cref{lem:collapse} carries over
to keyed schemas, replacing $\mathit{kw}$ in \cref{thm:keyed-bound} by its collapsed counterpart.

The anchoring requirement is not incidental. Call an atom of $Q$ a \emph{counting atom} if all
its variables are distinguished or multiset, and call $Q$ \emph{well-formed} if every multiset
variable occurs in some counting atom. Real SQL queries whose counting arises from a
\texttt{FROM} clause over base tables are well-formed, every column of such a table being either
returned or a contributor to the multiplicity. Say that a counting atom $b$
\emph{references} a counting atom $b'$ if some multiset variable at a key position of $b'$
occurs at a non-key position of $b$.\looseness=-1

\begin{restatable}[Well-formed queries with acyclic key references]{proposition}{wfanchored}
\label{prop:wf-anchored}
Let $Q$ be a well-formed query, over a schema with declared keys, such that
\begin{enumerate}
\item every occurrence of a multiset variable at a non-key position lies in a counting atom, and
\item the references among the counting atoms of $Q$ are acyclic.
\end{enumerate}
Then $M\subseteq(X\cup A)^{+}$, that is, $Q$ is key-anchorable.
\end{restatable}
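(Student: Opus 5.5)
We prove $M\subseteq(X\cup A)^{+}$ by well-founded induction along the reference order on counting atoms. Hypothesis~(2) makes the ``references'' relation on counting atoms acyclic, so we can process counting atoms in a topological order. We order them so that if $b$ references $b'$, then $b'$ is handled \emph{before} $b$. For a counting atom $b$, write $\mathit{K}(b)$ for the multiset variables at key positions of $b$ and $\mathit{N}(b)$ for those at non-key positions. The claim we prove by induction is: for every counting atom $b$, all variables of $b$ lie in $(X\cup A)^{+}$.

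\textbf{Key positions first.} Fix $b$ and assume the claim for all counting atoms that $b$ references. Take $v\in\mathit{K}(b)$. There are two cases.
\begin{itemize}
\item If $v$ occurs only at key positions throughout $Q$, then $v\in A$ by definition.
\item Otherwise $v$ occurs at a non-key position of some atom $c$. By hypothesis~(1), $c$ is a counting atom. Since $v$ sits at a key position of $b$ and at a non-key position of $c$, $c$ references $b$, so $c$ precedes $b$ in the reversed order and is not yet available. This shows the induction must run in the other direction.
\end{itemize}
We therefore reorient: process $b$ only after every counting atom that references $b$, i.e.\ every $c$ containing a key multiset variable of $b$ at a non-key position. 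Acyclicity still supplies such an order.

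\textbf{The induction step.} With this orientation, take $v\in\mathit{K}(b)$.
\begin{itemize}
\item If $v\in A$, we are done.
\item Otherwise $v$ lies at a non-key position of a counting atom $c$ referencing $b$. By induction, all variables of $c$ are already in $(X\cup A)^{+}$, so $v$ is too.
\end{itemize}
Hence every key-position variable of $b$ lies in $(X\cup A)^{+}$. Distinguished variables are in $X$, and because $b$ is a counting atom it has no set variables. So the whole key of $b$ lies in $(X\cup A)^{+}$. The functional dependency contributed by $b$'s key then places every remaining variable of $b$ in the closure, which completes the step. The base case is a counting atom referenced by no other. Each of its key multiset variables then never occurs at a non-key position of a counting atom, hence by~(1) at no non-key position at all, so it lies in $A$, and the same argument applies.

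\textbf{Conclusion.} By well-formedness, every $v\in M$ occurs in some counting atom, so $M\subseteq(X\cup A)^{+}$. Key-anchorability then follows from the characterization in \cref{prop:anchored}.

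The main obstacle is getting the direction of the induction right. The definition of ``references'' and hypothesis~(1) together ensure that every non-$A$ key multiset variable of $b$ is witnessed at a non-key position in a \emph{counting} atom referencing $b$. I would check carefully that such a witness cannot be $b$ itself. That case is a self-reference, a cycle of length one, which hypothesis~(2) excludes.
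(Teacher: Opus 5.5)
Your final argument is correct and is essentially the paper's proof: the paper also inducts along the acyclic reference digraph, with an edge $c\to b$ when $c$ references $b$ and induction on the length of the longest path ending at $b$. It makes the same split for each key multiset variable $v$ of $b$ (either $v\in A$, or $v$ sits at a non-key position of a counting atom $c$ with $c\to b$, whose key is already in the closure) and concludes by well-formedness. In a write-up, drop the abandoned first orientation, and place the self-reference exclusion (a loop is a cycle of length one) at the point where you choose $c$.
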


Two such queries meet the requirements of \cref{thm:keyed-bound}. In SQL terms, the two conditions
say that multiset columns are compared, inside subqueries, only against key columns, and that the
join conditions of the counting block do not cycle through keys. Joins that follow foreign keys,
as in star and snowflake schemas, always satisfy both. Neither condition is free: the query of
\cref{ex:cyclic-key} is well-formed, both its atoms being counting atoms, yet its two atoms
reference each other and it is not key-anchorable.

\myparagraph{Foreign keys}
\phantomsection\label{sec:fk}
We now admit the full schemas of \cref{sec:prelims}: declared keys together with acyclic
foreign keys. Foreign keys are inclusion dependencies,
handled by a chase. Because they point key-into-key, the \emph{foreign-key chase} is
deterministic: for each unmatched reference it adds the referenced atom, copying the key values
and filling the remaining positions with fresh \emph{set} variables (\cref{lem:fk-chase}). We
write $\widehat Q$ for the chased query. For acyclic foreign keys the chase terminates, adds no
multiset variable, so $M$, $w$, and $\mathit{kw}$ are unchanged, and its added subgoals are
redundant on legal databases, so $Q$ and $\widehat Q$ return the same answers. As with demotion,
the chase moves a multiset query out of its class: the atoms it adds carry set variables, so
$\widehat Q$ is combined even when $Q$ was not.

\begin{restatable}[Counter-example bound under keys and acyclic foreign keys]{theorem}{fkbound}
\label{thm:fk-bound}
Let $Q,Q'$ be relational queries over a schema with keys and acyclic foreign keys, such that
$\widehat Q,\widehat{Q'}$ are key-anchorable. Then both clauses of \cref{thm:keyed-bound} hold
for $\widehat Q,\widehat{Q'}$: equivalence over legal databases is multiset-homomorphism of
$\widehat Q^{\natural}$ and $\widehat{Q'}^{\natural}$, and inequivalent $Q,Q'$ are separated by a
legal database with at most $2^{\mathit{kw}}\max(|\widehat Q|,|\widehat{Q'}|)$ atoms. The query
size $|\widehat Q|$ is at most $|Q|$ times the reference depth of the schema.
\end{restatable}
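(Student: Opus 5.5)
The plan is to reduce to \cref{thm:keyed-bound}, using the foreign-key chase. The reduction has three steps, followed by a size count.

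\emph{Step 1: the chased queries are equivalent to the originals.} By \cref{lem:fk-chase}, for acyclic foreign keys the chase terminates. It adds no multiset variables, and on every legal database its added subgoals are redundant. The reason is that any satisfying assignment of $Q$ extends to the fresh set variables of each added atom by the foreign-key witness. That witness is unique, because it is pinned down by the key of the referenced relation. So $\Gamma_{X\cup M}(Q,\mathcal D)=\Gamma_{X\cup M}(\widehat Q,\mathcal D)$ on legal $\mathcal D$. Hence $Q\equiv\widehat Q$ and $Q'\equiv\widehat{Q'}$ over legal databases. This reduces both clauses to the pair $\widehat Q,\widehat{Q'}$.

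\emph{Step 2: separation happens on legal witnesses.} I would rerun the proof of \cref{thm:keyed-bound} on $\widehat Q,\widehat{Q'}$. Demotion (\cref{lem:demotion}) and the key-chase still preserve equivalence over the smaller class of legal databases, since they use only the keys. The key-chased reducts $\widehat Q^{\natural}$ inherit key-anchoring by hypothesis. So \cref{lem:legal-canonical} makes every corner database $\cdb{\vec a}{\widehat Q^{\natural}}$ satisfy the keys. The new point is to show that these corners also satisfy the foreign keys.

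\emph{Step 3 (main obstacle): corners satisfy the foreign keys.} Since $\widehat Q$ is closed under the chase, every body atom $p(\bar s)$ with a foreign key $(p,\bar\imath,q)$ has a $q$-atom whose key terms are $\bar s|_{\bar\imath}$. Demotion and the key-chase only unify terms and merge identical atoms, so this closure survives in $\widehat Q^{\natural}$. Each corner fact is the image of a body atom under some blow-up assignment. The same assignment sends the referenced $q$-atom to a fact in the corner whose key matches. Hence the corner satisfies the foreign key.

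The delicate part is that the key-chase could unify a referenced atom away, or demotion could alter variables. The first thing I would try is an invariant, namely that ``every reference in the body is matched by some atom'' is preserved by unification. This holds because unification applies one substitution uniformly to all atoms. With legality of every corner in hand, the Boolean-corner separation of \cref{prop:cmono-sep} yields both clause (1), via \cref{thm:relational-equiv} plus the separation as its converse, and a legal witness. \Cref{lem:keyed-corner-size} then bounds that witness by $2^{\mathit{kw}}\max(|\widehat Q^{\natural}|,|\widehat{Q'}^{\natural}|)\le 2^{\mathit{kw}}\max(|\widehat Q|,|\widehat{Q'}|)$. The inequality uses that the key-chase does not add atoms, and that $\mathit{kw}$ is unchanged by the chase.

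\emph{Size count.} To bound $|\widehat Q|$, I would charge each added atom to an original atom of $Q$ along the chain of references that created it. Along an acyclic schema, each original atom spawns at most one atom per reference step of a chain. This gives $|\widehat Q|\le|Q|$ times the reference depth, reading the depth as the number of referenced atoms reachable from a predicate. The care point is that distinct foreign keys from the same atom could branch. I would take the schema's reference depth to count the size of the reachable reference tree, which is how I interpret the stated bound.
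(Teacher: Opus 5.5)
Your proposal is correct and follows the paper's own route. You pass to the chased queries via \cref{lem:fk-chase}, rerun the argument of \cref{thm:keyed-bound} over databases satisfying the keys, and observe that the corner witness is already closed under the foreign keys. Your invariant supplies exactly the detail behind the paper's one-line remark that the chase made every referenced atom explicit: demotion leaves atoms untouched, the key-chase applies one substitution uniformly so every matched reference stays matched, and each blow-up assignment carries that match into the corner.

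Your caution in the size count is also warranted. The paper only bounds the length of a chain of additions by the reference depth, and branching foreign keys (and the original atom itself) force the reading you adopt, which counts the whole reachable reference tree, root included.
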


The chase makes every referenced atom explicit, so the witness of \cref{thm:keyed-bound} for the
chased queries is already closed under the foreign keys. The chase also preserves the
requirements of \cref{prop:wf-anchored}, since every atom it adds places existing variables only
at key positions.\looseness=-1

Clause~(1) also settles the complexity, which the bounds leave open: their witnesses are
exponential, whereas a multiset-homomorphism is a mapping between queries of polynomial size.

\begin{restatable}[Complexity of equivalence under constraints]{corollary}{npcor}
\label{cor:np}
Deciding $Q\equiv Q'$ over a schema with declared keys and acyclic foreign keys is \NP-complete
for queries whose foreign-key chases are key-anchorable.
\end{restatable}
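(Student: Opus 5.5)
The corollary has two halves, membership in \NP{} and \NP-hardness, and I will prove them separately.

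\emph{Membership.} I will use clause~(1) of \cref{thm:fk-bound}: $Q\equiv Q'$ over legal databases holds iff $\widehat Q^{\natural}$ and $\widehat{Q'}^{\natural}$ are multiset-homomorphic. The first task is to show these reduct queries are computable in polynomial time.
\begin{enumerate}
\item The foreign-key chase adds, per atom, at most one atom per level of the reference chain. So $|\widehat Q|\le|Q|$ times the reference depth. Since the schema (and hence the depth) is fixed, or is given as input so that depth is at most the number of predicates, this is polynomial.
\item A key-reduct comes from repeated demotion. Each test ``$v\in(X\cup(M\setminus\{v\}))^{+}$'' is a functional-dependency closure, which is polynomial, and there are at most $|M|$ demotions. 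Which key-reduct we obtain is irrelevant, since any choice is equivalent to $Q$ and clause~(1) applies to it.
\item The key-chase unifies atoms that agree on their key, like an FD chase. It terminates after at most as many steps as there are terms, each step polynomial.
\end{enumerate}
The certificate is then a pair of mappings $\mu,\mu'$ between the polynomial-size queries $\widehat Q^{\natural},\widehat{Q'}^{\natural}$. Checking conditions (1)--(4) of a multiset-homomorphism is clearly polynomial. The standing set-equivalence assumption is not needed separately, since a multiset-homomorphism is in particular a containment mapping.

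\emph{Hardness.} I will reduce from set-query containment or equivalence of conjunctive queries, which is \NP-complete~\cite{ChandraM77}. The target is the special case of an unconstrained schema: all keys are full tuples and there are no foreign keys.
\begin{itemize}
\item The chases are then trivial.
\item Every query is key-anchored, because every position is a key position.
\item Set queries have $M=\emptyset$, so the reducts are the queries themselves, after the key-chase, which merges only identical atoms.
\end{itemize}
For set queries, multiset-homomorphism coincides with homomorphism in both directions. Hence equivalence of set conjunctive queries, an \NP-hard problem (for example, Boolean CQ equivalence via graph homomorphism with one side a clique), is a special case of the problem.

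\emph{Main obstacle.} The delicate step is the polynomial-size claim for the chase when the schema is part of the input. Reference depth times $|Q|$ can grow if each atom spawns a tree of references, because a predicate may have several outgoing foreign keys. The resulting tree has size bounded by the number of root-to-leaf paths in the reference DAG, which can be exponential. I would first handle this by treating the schema as fixed (data-independent constant), which matches the paper's phrasing that the size is at most $|Q|$ times the reference depth. If the schema is part of the input, I would argue instead that chase atoms introduced below an atom depend only on its key values. One could then share them, but that would need further justification.
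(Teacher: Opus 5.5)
Your proposal is correct and follows the paper's proof. Membership guesses the two multiset-homomorphisms between the polynomially computable $\widehat Q^{\natural}$ and $\widehat{Q'}^{\natural}$ and invokes clause~(1) of \cref{thm:fk-bound}; the paper reads the key-reduct off directly as the set $A$ of \cref{prop:anchored} rather than running demotions. Hardness comes from the case $M=M'=\emptyset$ under whole-tuple keys, which is classical conjunctive-query equivalence~\cite{ChandraM77}.

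The obstacle you flag is genuine, and the paper's own count (one added atom per atom and per foreign key, within the reference depth) has the same flaw as your item~1, which you rightly retract. That count gives a polynomial chase only for a fixed schema, because several foreign keys out of one predicate make the chase tree branch. Your fixed-schema reading is therefore what makes the membership argument sound, and hardness survives it since conjunctive-query equivalence is already \NP-hard over a single binary relation.
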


Membership guesses the two multiset-homomorphisms of clause~(1), checked by inspection of the
atoms, over queries computed in polynomial time: the foreign-key chase is bounded by the
reference depth, the key-reduct is a closure under the key dependencies, and the key-chase is a
union-find on the terms. Hardness is classical, already for $M=M'=\emptyset$~\cite{ChandraM77}.
Enumeration is thus needed only to \emph{exhibit} a counter-example, never to decide.

\myparagraph{The residual open case}
Outside key-anchorability the separation is driven by a multiset variable that demotion has
stranded off a key, so the rigid canonical family is illegal although legal witnesses exist
(\cref{ex:cyclic-key}). The regime is not exotic: counting the customers that placed some order
joins the counted $\mathit{cid}$ to a non-key position of an existential atom. What is missing is
\emph{realizability}, assigning the set variables as functions of the multiset ones so as to meet
the keys while preserving the separating multiplicity. Whether a computable bound exists here
remains open.

\section{Queries with Comparisons}
\label{sec:comparisons}

Throughout, $Q\leftarrow L,M$ and $Q'\leftarrow L',M'$ are combined-semantics queries with order atoms.
Until \cref{lip:sec} the schema is unconstrained and every order atom is \emph{var-const}:
$y\,\rho\,c$ with $\rho\in\{<,\le,>,\ge,=,\ne\}$, $y$ a variable and $c$ a constant.
Variable-versus-variable comparisons are set aside until \cref{lip:sec}, which readmits those
pinned by keys.
Renaming nondistinguished variables apart, we assume $Q,Q'$ share no variables. Values are drawn from a dense
linear order without endpoints, taken to be $\mathbb Q$. Density is the only property we use: it lets us
drop a fresh value into any open interval. Write $R_Q$ for the \emph{relational part} of $Q$
(delete the order atoms). Following the standing assumption, $Q$ and $Q'$ are set-equivalent, so the sole
question is whether their multiplicities agree.

The whole section turns on one idea. The constants of the two queries cut the domain into
finitely many intervals, a comparison reads a value only through the interval it lies in, and
once each value is fixed to an interval, what remains is the relational counting of
\cref{sec:bound}. The comparisons decide only which values appear, not how many atoms, so the
witness never grows beyond the comparison-free bound $2^{w}|Q|$. What varies is what fixes the
interval of a compared value in the witness. \Cref{comp:framework} builds the machinery, and
\cref{comp:certified} proves the bound whenever every position-group of the pair carries one of
three \emph{certificates}: comparisons pointing one way, values retained in the answer, or a
private column whose interval set-equivalence pins. \Cref{lip:sec} adds a fourth, values
determined by keys. The one configuration with no certificate---a two-sided comparison on a
projected-away value entangled with counted ones---is the case we leave open
(\cref{comp:frontier}).

\subsection{Slots, placements, and homomorphisms}
\label{comp:framework}

Let $\mathcal C=\{c_1<\dots<c_k\}$ collect the constants of both $Q$ and $Q'$. The constants cut
$\mathbb Q$ into the \emph{slots}: the $k$
singleton sets $\{c_i\}$ and the $k+1$ open intervals between consecutive constants, from
$(-\infty,c_1)$ to $(c_k,\infty)$. We write $\mathcal S$ for the set of slots.

An order atom $y\,\rho\,c$ holds or fails for a value depending only on which slot the value lies in,
since its threshold $c$ is a slot boundary. The \emph{region} $R_y\subseteq\mathcal S$ collects the
slots on which every order atom on $y$ holds, with $R_y=\mathcal S$ when $y$ is unconstrained. 
Throughout, we assume that no region $R_y$ is the single slot $\{c\}$ of a
constant, as an equality $y=c$ forces. Such a $y$ takes the value $c$ in every satisfying
assignment, so replacing $y$ by $c$, and dropping it from $M$ when it is there, preserves every
answer. Every region then contains at least one open slot.
Call $y$ \emph{simple} if $R_y$ is a single open half-line (one strict
comparison) or all of $\mathcal S$ (unconstrained), and \emph{composite} otherwise. For a simple $y$
with an upper bound, we write $\mathrm{ub}(y)=\min\{c:(y<c)\in Q\}$, dually $\mathrm{lb}(y)$, and call
the slot just below $\mathrm{ub}(y)$, or just above $\mathrm{lb}(y)$, the \emph{tight slot} of $y$.

\myparagraph{Placed canonical family}
Fix a placement $\pi$ assigning each variable a slot $\pi(y)\in R_y$, and block sizes
$\vec N=(N_1,\dots,N_m)\in\mathbb N_+^{m}$, one per multiset variable ($m=|M|$). The \emph{placed
family} $\pcdb{\vec N}{Q}$ repeats the blow-up of \cref{sec:cmono}, drawing every fresh value from
the slot $\pi$ dictates: the distinguished and set variables receive fixed fresh values inside their
slots, and the $j$-th multiset variable a block of $N_j$ fresh values inside the open interval
$\pi(Y_j)$, all values pairwise distinct and avoiding $\mathcal C$. (A variable placed in a singleton
slot $\{c\}$ takes the value $c$ and carries no block.) The head image $\bar a$ is the same for
every $\vec N$, and $\mathrm{mult}_Q(\mathcal D,\bar a)$ denotes the multiplicity of $\bar a$ in
$Q(\mathcal D)$, abbreviated $\mathrm{mult}_Q(\mathcal D)$ when the answer is that head image.

Two observations drive everything below. \emph{Order-invariance}: as a relational structure,
$\pcdb{\vec N}{Q}$ is the canonical family $\cdb{\vec N}{R_Q}$ of \cref{sec:cmono}, since the two differ
only in the numeric values of the fresh constants, which a relational query cannot see.
\emph{All-or-none}: every threshold of $Q'$ lies in $\mathcal C$, a slot boundary, so it never falls inside
a block. Each block therefore lies entirely inside or entirely outside any region $R_{y'}$, and likewise
each fixed value. So on $\pcdb{\vec N}{Q}$ the multiplicity of $Q'$ equals that of its relational
part $R_{Q'}$ restricted to assignments sending each variable into a block or value whose slot lies in that
variable's $Q'$-region, and $\mathrm{mult}_{Q'}(\pcdb{\vec N}{Q})$ is an integer polynomial in $\vec N$.

A multiset-homomorphism
$\mu\colon Q'\to Q$ \emph{respects regions} when $R_{\mu(y')}\subseteq R_{y'}$ for every $y'$ sent to
a variable, and $\{c\}\in R_{y'}$ for every $y'$ sent to a constant $c$.

\begin{restatable}[Soundness]{lemma}{compsound}
\label{comp:sound}
If there are region-respecting multiset-homomorphisms $Q'\to Q$ and $Q\to Q'$, then $Q\equiv Q'$.
\end{restatable}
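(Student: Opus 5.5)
The plan is to mirror the sufficiency direction of \cref{thm:relational-equiv}, adding only a check that the order atoms survive composition with the map. By symmetry it suffices to show that a region-respecting multiset-homomorphism $\mu\colon Q'\to Q$ yields $\mathrm{mult}_Q(\mathcal D,\bar t)\le\mathrm{mult}_{Q'}(\mathcal D,\bar t)$ for every database $\mathcal D$ and tuple $\bar t$. Applying this in both directions gives equality of the two bags, hence $Q\equiv Q'$.

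\textbf{Step 1: composition yields satisfying assignments.} Fix $\mathcal D$ and a satisfying assignment $\gamma\in\Gamma(Q,\mathcal D)$, and set $\gamma'=\gamma\circ\mu$.
\begin{itemize}
\item Head: by condition (1), $\gamma'(\bar x')=\gamma(\bar x)$.
\item Relational atoms: by condition (3), every relational atom of $\mu(L')$ is an atom of $L$, so its $\gamma$-image is in $\mathcal D$.
\item Order atoms: take $y'\,\rho\,c$ in $Q'$.
  \begin{itemize}
  \item If $\mu(y')$ is a variable $y$, then $\gamma(y)$ satisfies every order atom on $y$. Its slot $s$ therefore lies in $R_y$, and by region-respect $s\in R_{y'}$. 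Membership in a region depends only on the slot, since each threshold is a slot boundary, so $\gamma(y)\,\rho\,c$ holds.
  \item If $\mu(y')$ is a constant $d$, then $\{d\}\in R_{y'}$ gives $d\,\rho\,c$ directly.
  \end{itemize}
\end{itemize}
Hence $\gamma'\in\Gamma(Q',\mathcal D)$.

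\textbf{Step 2: injectivity of the induced map.} Consider the map $\Gamma_{X\cup M}(Q,\mathcal D)\to\Gamma_{X'\cup M'}(Q',\mathcal D)$ sending the restriction of $\gamma$ to the restriction of $\gamma\circ\mu$. It is well defined because $\mu$ sends $X'$ into the head terms and $M'$ into $M$, so $\gamma'|_{X'\cup M'}$ depends only on $\gamma|_{X\cup M}$. It preserves the head image.

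Injectivity is the main obstacle. The naive argument fails here: $\mu$ maps $M'$ injectively into $M$, but not necessarily onto $M$, so two assignments differing on $M\setminus\mu(M')$ could collide.

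The way out is cardinality. Multiset-homomorphisms exist in both directions, so $|M|=|M'|$, and an injection between sets of equal finite size is a bijection. Thus $\mu$ restricted to $M'$ is a bijection onto $M$. Two restrictions of $\gamma_1,\gamma_2$ with the same image then agree on every $\mu(y')$ for $y'\in M'$, which is all of $M$, and on the head. So they agree on $X\cup M$.

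\textbf{Conclusion.} The map is injective and head-preserving, giving $\mathrm{mult}_Q\le\mathrm{mult}_{Q'}$ for every database and tuple. The symmetric map from $\mu'\colon Q\to Q'$ gives the reverse inequality, so $Q(\mathcal D)=Q'(\mathcal D)$ for all $\mathcal D$, i.e.\ $Q\equiv Q'$.
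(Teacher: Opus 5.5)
Your proof is correct and is essentially the paper's argument: compose each satisfying assignment of $Q$ with $\mu$, use region-respect to carry the var-const comparisons across, and compare multiplicities through the induced map in both directions. Your explicit cardinality step, $|M|=|M'|$ from the two injections so that $\mu$ maps $M'$ onto $M$, is exactly what makes that map injective on $\Gamma_{X\cup M}(Q,\mathcal D)$; the paper asserts this injection for a single homomorphism, where it can fail when $\mu(M')\neq M$, so your version is the more careful of the two.
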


Soundness needs no restriction on regions or variables: composing assignments with a
region-respecting homomorphism preserves the comparisons, so each direction bounds one multiplicity by
the other. The converse is the subject of the next subsection.

\subsection{The Certified Bound}
\label{comp:certified}

Comparisons on columns that never meet cannot interfere. To make this precise, form the
\emph{position graph} on the variables of $Q$ and $Q'$ together, joining two variables with an
edge whenever they occupy a common \emph{position}, the same argument slot of the same
predicate, in a relational atom of either query. Its connected components are the
\emph{position-groups}. Homomorphisms respect groups, since a containment map sends each
$p$-atom to a $p$-atom position-wise, so each variable goes to a constant or to a variable of its
own group.

Say a position-group is \emph{directed} if every comparison on its variables is strict and all
point the same way, all upper or all lower. Different groups may point differently. Call a
variable \emph{retained} if it is distinguished or multiset, and a group \emph{retained} if all
its variables are. Call a group \emph{isolated} if it is $\{y,y'\}$ with $y$ from $Q$ and $y'$
from $Q'$, and no constant occurs at a position they occupy. Being a group of two already keeps
every other variable out of those positions, and excluding constants leaves a homomorphism in
either direction no image for $y'$ but $y$, and none for $y$ but $y'$, whether the variables are
multiset or existential. Finally, call a constant $c$ \emph{boundary-free} if
it occurs in $Q,Q'$ only in weak comparisons on existential variables, never in a strict
comparison, never constraining an $(X\cup M)$-variable, never inside a relational atom.

\begin{restatable}[Certified comparison bound]{theorem}{compmaster}
\label{comp:master}
Let $Q,Q'$ be queries with var-const comparisons such that, after replacing the weak comparisons
at boundary-free constants by their strict forms, every position-group of the pair is directed,
retained, or isolated. If $Q\not\equiv Q'$, they differ on an ordered database of at most
$2^{w}|Q|$ atoms.
\end{restatable}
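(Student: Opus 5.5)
We argue by contraposition, reusing the separation engine of \cref{sec:bound} on the placed canonical family. Suppose $Q\not\equiv Q'$. By \cref{comp:sound}, some direction, say $Q'\to Q$, admits no region-respecting multiset-homomorphism. We name the queries as in \cref{sec:sep}, so that this is the direction that fails, with $m_{Q'}\le m$ when the counts differ. The plan has three steps. First, choose a single placement $\pi$ of the variables of $Q$. Second, show that no multiset-homomorphism $Q'\to Q$ is compatible with $\pi$, meaning it sends each $y'$ into a block or value whose slot lies in $R_{y'}$. Third, run the corner argument on $\pcdb{\vec N}{Q}$. Replacing weak comparisons at boundary-free constants by their strict forms is justified first. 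Such a constant constrains only existential variables, never a retained value or a stored atom, so nudging the witness off $c$ changes no multiplicity; we perform this normalization at the outset.

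\emph{Step 1: the placement, chosen per position-group.} Since homomorphisms respect groups, it suffices to fix $\pi$ group by group. In a directed group, with all comparisons strict upper bounds, say, we place every variable $y$ of $Q$ in the lowest slot $(-\infty,c_1)$ when it carries no bound, and in its tight slot otherwise. We then check that any $y'$ sent to $y$ has $\mathrm{ub}(y')\ge$ the upper end of $\pi(y)$ exactly when $R_y\subseteq R_{y'}$, so compatibility with $\pi$ coincides with region-respect. In a retained group, every variable is distinguished or multiset. We intend to exploit set-equivalence together with the head and multiset blocks, choosing $\pi(y)$ generically inside $R_y$ and, if needed, ranging over all placements (a finite set). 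In an isolated group, the only possible image is forced: $y'$ must go to $y$. For set-equivalence the region of $y'$ then contains that of $y$ on the relevant part, so any placement inside $R_y$ works.

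\emph{Step 2: no compatible homomorphism.} By all-or-none, $\mathrm{mult}_{Q'}(\pcdb{\vec N}{Q})$ is the multiplicity of $R_{Q'}$ restricted to compatible assignments, and it is a polynomial in $\vec N$. Its squarefree top monomial $N_1\cdots N_m$ arises exactly from multiset-homomorphisms $R_{Q'}\to R_Q$ compatible with $\pi$; this is the argument of \cref{prop:cmono-sep}, carried out inside the relational structure by order-invariance. The coefficient of this monomial in $\mathrm{mult}_Q$ is positive, via the generic assignment. Hence the difference of the two multiplicities is a nonzero polynomial of degree at most $m$ whose $N_1\cdots N_m$ coefficient is nonzero. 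Alon's Nullstellensatz then yields a corner $\vec a\in\{1,2\}^m$ at which the queries differ. Order-invariance makes $\pcdb{\vec a}{Q}$ isomorphic as a relational structure to $\cdb{\vec a}{R_Q}$, so \cref{lem:corner-size} bounds it by $2^{w}|Q|$ atoms.

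\emph{Main obstacle.} The main difficulty is Step 1 for retained groups, where comparisons may be two-sided. In such a group a single placement need not make compatibility coincide with region-respect, and we must glue independent per-group choices into one global $\pi$. We would first try a product argument. The polynomials for distinct placements are combined by summing $\mathrm{mult}$ over the finitely many placements of the retained variables, and the retained value is visible in the answer or in a multiplicity block. This suggests that for some placement $\pi$ the restricted top coefficient of $Q'$ falls strictly below that of $Q$; otherwise, region-respecting homomorphisms could be patched together from compatible ones, one placement per slot. Making this patching rigorous is where we expect the work to lie.
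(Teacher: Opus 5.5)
\textbf{Main gap: retained, non-directed groups.} You locate the difficulty correctly, but you do not get past it, and the patching you hope for is false. Take $Q(x)\leftarrow r(x,y_1),r(x,y_2),r(x,u),\ 0<y_1<3,\ 1<y_2<2,\ 1<u<2$ and $Q'(x)\leftarrow r(x,y_1'),r(x,y_2'),r(x,u'),\ 0<y_1'<2,\ 1<y_2'<3,\ 1<u'<2$, with every nondistinguished variable multiset. The group at $r.2$ is retained and two-sided. Both queries say only that some $r(x,\cdot)$-value lies in $(1,2)$, so they are set-equivalent. Write $n_I$ for the number of $r(x,\cdot)$-values in $I$; then $\mathrm{mult}_{Q'}-\mathrm{mult}_Q=n_{(1,2)}\,n_{(0,1]}\,n_{[2,3)}$. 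Neither direction has a region-respecting multiset-homomorphism: no $Q'$-region contains $R_{y_1}$, and only $u'$ can receive $y_2$ or $u$. Yet every placement $\pi$ of $Q$ admits a compatible multiset-homomorphism $Q'\to Q$: send $y_1'$ to $y_1$ when $\pi(y_1)\subseteq(0,2)$, and $y_2'$ to $y_1$ when $\pi(y_1)\subseteq(1,3)$. In fact the multiplicities coincide on \emph{every} placed family of $Q$, because $y_1$'s block occupies a single slot, so $n_{(0,1]}\,n_{[2,3)}=0$. Thus ``compatible at every placement'' does not yield ``region-respecting'', since the quantifiers do not commute. Nor does the failure of $Q'\to Q$ license blowing up $Q$: here only families of $Q'$ separate, e.g.\ $\{r(0,\tfrac12),r(0,\tfrac32),r(0,\tfrac52)\}$ with multiplicities $3$ and $4$.

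The missing idea is the paper's resolution, which has four steps. First, split every non-directed retained group of \emph{both} queries into slot-pinned reducts. This is exact, $\mathrm{mult}_P=\sum_\sigma\mathrm{mult}_{P_\sigma}$, because a retained variable's slot is read off the $(X\cup M)$-image. Second, write $\mathrm{mult}_Q-\mathrm{mult}_{Q'}$ as a signed sum over equivalence classes of reducts. Third, pick a class of maximal degree with nonzero net count that is minimal under the region-respecting-homomorphism preorder. Fourth, blow up that reduct, whichever query it comes from. Only then does the characteristic-monomial coefficient isolate a single nonzero term, namely the net count times the positive self-count, to which your corner extraction applies. In the example this selects the $Q'$-reduct with $y_1'\in(0,1)$ and $y_2'\in(2,3)$.

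\textbf{Second error: placement in directed groups.} Step~1 is also wrong for directed groups. In an upper group, an unbounded variable of $Q$ must be placed \emph{above} all constants, not in $(-\infty,c_1)$. In the lowest slot its block lies inside every upper region, so $y'\mapsto y$ is compatible although $R_y\not\subseteq R_{y'}$. Consider $Q(x)\leftarrow r(x,y),r(x,u),\ u<5,\ \{y\}$ and $Q'(x)\leftarrow r(x,y'),r(x,u'),\ y'<5,\ u'<5,\ \{y'\}$. This pair is directed, set-equivalent and inequivalent, and only $Q'\to Q$ fails. With your placement every value of the family lies below $5$, and both multiplicities equal $N_1+1$ identically. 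Placing $y$'s block above $5$ instead gives $N_1+1$ against $1$. Finally, for isolated groups you need the equality $R_y=R_{y'}$ of \cref{comp:pin}, obtained by freezing $Q$ with $y$ in a slot of $R_y\setminus R_{y'}$. You also need placement in an \emph{open} slot, so that no multiset variable loses its parameter.
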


The proof handles each group by its certificate. We present the three
mechanisms in turn, each with a small example that calls for it.

\myparagraph{Directed groups}
A single one-directional comparison has a tight slot, and blocks want to sit in it. For $y<5$
against $y'<3$, a block just below $5$ lands in $(3,5)$, which the lower threshold cannot
follow. In general, place every variable of a
directed group in its tight slot. By
all-or-none and \cref{prop:cmono-sep}, the coefficient of the characteristic monomial in
$\mathrm{mult}_{Q'}(\pcdb{\vec N}{Q})$ then counts exactly the region-respecting homomorphisms
$Q'\to Q$, since a tight slot lands inside a region of the same direction precisely when the
regions are contained. Comparing this count with the same count for $Q$ itself yields a
characterization. Say $Q,Q'$ are \emph{simple} if every position-group of the pair is directed.

\begin{restatable}[Criterion for simple pairs]{corollary}{compmonothm}
\label{comp:mono-thm}
Let $Q,Q'$ be simple. Then $Q\equiv Q'$ if and only if there are region-respecting
multiset-homomorphisms $Q'\to Q$ and $Q\to Q'$.
\end{restatable}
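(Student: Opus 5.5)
Sufficiency is \cref{comp:sound}. For necessity, I argue by contraposition. Suppose there is no region-respecting multiset-homomorphism from $Q'$ to $Q$; the other direction is symmetric. I will exhibit a placed database on which the multiplicities differ.

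First, naming. As in the discussion after \cref{prop:cmono-sep}, I may assume $m_{Q'}\le m$. If $m_{Q'}>m$, I swap the roles of the queries. If instead the missing homomorphism is in the direction that makes $m_{Q'}>m$, then the opposite direction fails at the level of degree: no monomial of $\mathrm{mult}_{Q}(\pcdb{\vec N}{Q'})$ can reach degree $m_{Q'}$. That direction therefore also lacks a region-respecting homomorphism, and I work with it instead.

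Next, the placement. Every group is directed, so each variable is either unconstrained or simple with a single strict bound. Let $\pi$ put each variable of $Q$ into its tight slot, and put unconstrained variables into an arbitrary open slot, say $(c_k,\infty)$ when their group points upward and $(-\infty,c_1)$ when it points downward. By order-invariance, $\mathrm{mult}_Q(\pcdb{\vec N}{Q})=\mathcal F^{(R_Q)}(\vec N)$. Its characteristic monomial $N_1\cdots N_m$ has positive coefficient, realized by the generic assignment. The blocks sit in slots of $R_y$, so that assignment satisfies the comparisons of $Q$.

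By all-or-none, $\mathrm{mult}_{Q'}(\pcdb{\vec N}{Q})$ is a polynomial. I claim the coefficient of $N_1\cdots N_m$ in it is nonzero only if there is a multiset-homomorphism $Q'\to Q$ whose assignments land in allowed slots. To see this, I rerun the argument of \cref{prop:cmono-sep} on the relational part, restricted to assignments that respect $Q'$-regions. A squarefree full-degree term arises only from assignments that send $M'$ injectively onto distinct blocks and everything else to the generic images. Such an assignment factors as a multiset-homomorphism $\mu\colon Q'\to Q$ composed with the generic assignment. It respects $Q'$-regions exactly when $\pi(\mu(y'))\in R_{y'}$ for each $y'$.

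The key step is to show that, for directed groups, $\pi(\mu y')\in R_{y'}$ implies $R_{\mu y'}\subseteq R_{y'}$. Here $\mu y'$ and $y'$ lie in the same position-group, so all their comparisons point the same way, say upward, $y<c$. The tight slot of $\mu y'$ lies just below $\mathrm{ub}(\mu y')$. It lies in $R_{y'}=(-\infty,\mathrm{ub}(y'))$ iff $\mathrm{ub}(\mu y')\le\mathrm{ub}(y')$, which is exactly containment of the half-lines. If $\mu y'$ is unconstrained, its slot $(c_k,\infty)$ lies in $R_{y'}$ only if $y'$ is also unconstrained. Constants need a separate check: the slot $\{c\}$ must lie in $R_{y'}$, and this is directly what the assignment requires.

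Hence the coefficient is zero. So $P=\mathrm{mult}_Q-\mathrm{mult}_{Q'}$ on the placed family is nonzero, has degree at most $m$, and has a nonzero multilinear top coefficient. By the Combinatorial Nullstellensatz on the grid $\{1,2\}^m$, the queries differ on some corner. This proves the criterion; the size bound comes along from \cref{lem:corner-size} via order-invariance.

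The main obstacle is the degree and exactness bookkeeping of \cref{prop:cmono-sep} after the region restriction. The restriction deletes terms, so I must check that it cannot create a squarefree degree-$m$ term. It cannot, because the restricted polynomial is a subsum of the unrestricted one with nonnegative coefficients.
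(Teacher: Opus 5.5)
Your proof is correct and takes essentially the paper's route: the tight placement of directed groups, reading the characteristic-monomial coefficient of $\mathrm{mult}_{Q'}$ on the placed family of $Q$ as a count of region-respecting multiset-homomorphisms, and the tight-slot-versus-half-line comparison. The paper argues directly instead (equivalence makes the polynomials equal, so the positive self-count forces a positive count for $Q'$), which makes your Nullstellensatz step unnecessary for the criterion. Your degree bookkeeping also makes explicit the case $m_{Q'}\neq m_Q$, which the paper's count leaves implicit. One small correction: in your swapped case, degree gives the vanishing of the characteristic coefficient, not the absence of a multiset-homomorphism, since an injective but non-surjective one from the query with fewer multiset variables can exist. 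So conclude the zero coefficient from degree directly rather than through that claim.
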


\myparagraph{Retained groups}
A two-sided comparison has no tight slot. For $2<y<8$ against $2<y'<5$, a counted value may sit
in $(2,5)$, which $Q'$ can follow, or in $(5,8)$, which it cannot, so no placement suffices. We
split the range at its interior constant $5$ and treat each slot on its own. The split counts
correctly only when the compared value is retained, its slot then determined by the answer, so
answers from different slots are never merged.\looseness=-1

For a query $P$, standing for either of $Q$ and $Q'$, a \emph{resolution} $\sigma$ picks one
slot $\sigma(y)\in R_y$ for every variable $y$ of every retained group that is not directed. The
\emph{reduct} $P_\sigma$ confines each such $y$ to $\sigma(y)$: when $\sigma(y)$ is an open
slot, by the two comparisons cutting it out, and when $\sigma(y)=\{c\}$, by replacing $y$ with
$c$ outright, so that no variable is frozen at a constant. A satisfying assignment of $P$ places
each resolved variable in exactly one slot, and the resolved variables are retained, so an
assignment's $(X\cup M)$-image contributes to exactly one reduct. Hence
$\mathrm{mult}_P(\mathcal D,\bar a)=\sum_{\sigma}\mathrm{mult}_{P_\sigma}(\mathcal D,\bar a)$
for every $\mathcal D$ and $\bar a$.
Every group of a reduct is directed or pinned to a single open slot, and the exact-placement
count extends to the pinned groups. The proof of \cref{comp:master} orders the reducts of both
queries by the existence of region-respecting homomorphisms and reads a maximal class off its
own placed family.\looseness=-1

Weak comparisons are the smallest instance: $y\le c$ resolves into the strict $y<c$ and the
boundary $y=c$, so the single-atom database $\{r(0,5)\}$ separates the counted $z\le5$ from the
counted $y<5$ over $r(x,\cdot)$. When every
group is directed or retained, $Q\equiv Q'$ holds exactly when $Q$ and $Q'$ carry the same
multiset of reducts, up to equivalence. That criterion is no short certificate, the resolutions
numbering the product of the region sizes, so no counterpart of \cref{cor:np} accompanies
\cref{comp:master}.

\myparagraph{Isolated groups}
When the compared value is projected away, splitting over-counts: the value is not part of the
answer. Yet a range on such a column can still be harmless. Consider
\texttt{SELECT D.x FROM (SELECT DISTINCT r.x, r.z FROM r WHERE r.y > 2 AND r.y < 8) D}.
The column \texttt{r.y} is existential and its comparison is two-sided, so neither mechanism
above applies. But \texttt{r.y} occurs nowhere else: if a second query filtered it by a
different range, a row with a value in the gap between the ranges would break set-equivalence,
which thus pins the range even though \texttt{r.y} is projected away.

\begin{restatable}[Region pinning]{lemma}{comppin}
\label{comp:pin}
Let $y$ be isolated, paired with $y'$, confined to $R_y$ and $R_{y'}$. Under the standing set-equivalence
assumption $R_y=R_{y'}$.
\end{restatable}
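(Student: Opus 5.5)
By symmetry it suffices to show $R_y\subseteq R_{y'}$. I argue by contradiction: assume some slot $s\in R_y\setminus R_{y'}$, and build a database on which $Q$ returns a tuple that $Q'$ does not. That contradicts the standing assumption that $Q$ and $Q'$ are set-equivalent.

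\emph{Construction.} Fix a placement $\pi$ of the variables of $Q$ with $\pi(y)=s$. Every other variable goes to some slot of its region, which is nonempty because every region contains an open slot. Take the placed database $\mathcal D=\pcdb{\vec 1}{Q}$, that is, every block has size one. Its head image $\bar a$ lies in $Q(\mathcal D)$, witnessed by the generic assignment. If $s$ is the singleton slot $\{c\}$, give $y$ the value $c$; this is still allowed because isolation forbids constants only at the positions occupied by $y$ and $y'$, and $c$ is a value placed there, not a constant of either query at that position. Otherwise $y$ gets a fresh value inside $s$.

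\emph{Why $Q'$ misses $\bar a$.} Suppose $Q'$ had an assignment $\gamma'$ producing $\bar a$ on $\mathcal D$, and look at the value $\gamma'(y')$. The variable $y'$ occupies some position of a relational atom of $Q'$, and $\gamma'$ maps that atom to a fact of $\mathcal D$. Every fact of $\mathcal D$ arises from an atom of $Q$ whose terms sit at the same positions. Since $\{y,y'\}$ is a whole position-group and no constant of either query occurs at those positions, the only term of $Q$ at any position $y'$ occupies is $y$. Hence every such fact carries $\gamma(y)$ there, so $\gamma'(y')=\gamma(y)\in s$.

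Because $s\notin R_{y'}$, the value $\gamma'(y')$ violates some order atom of $Q'$ on $y'$. Comparisons are var-const with thresholds in $\mathcal C$, so whether $y'$ satisfies them depends only on its slot; this is the all-or-none observation. So $\gamma'$ is not a satisfying assignment, and $\bar a\notin Q'(\mathcal D)$, which is the desired contradiction.

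\emph{Main obstacle.} The delicate step is checking that $y'$ cannot find its value in facts other than those at $y$'s own positions. This needs position-wise matching of atoms and the fact that $y'$ occupies only positions of its group, both of which the definition of a position-group guarantees. A second point to watch is the singleton slot $\{c\}$. The standing convention only says regions are not singletons, so a singleton slot inside $R_y$ may still be chosen, and placing the constant $c$ at $y$'s position does not disturb the group argument since that argument concerns only positions. If that case turns out awkward, I would choose an open slot of $R_y\setminus R_{y'}$ whenever one exists. When the difference consists only of singletons, I would fall back on the direct valuation above.
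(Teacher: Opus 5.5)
Your proof is correct and follows the paper's route: freeze $Q$ with $y$ placed in a slot of $R_y\setminus R_{y'}$, use set-equivalence to obtain a $Q'$-assignment producing the head, and use the group $\{y,y'\}$, which has no constants at its positions, to force $y'$ onto $y$'s value and so violate a comparison of $Q'$. The paper obtains that forcing by factoring the assignment through a containment map $Q'\to Q$, which needs an injective freezing off $\mathcal C$; you instead read $\gamma'(y')$ off position by position, which needs no injectivity and therefore also covers the case where $R_y\setminus R_{y'}$ consists only of singleton slots $\{c\}$ (e.g.\ $y\le c$ against $y'<c$), a case the paper's choice of a value $v$ off $\mathcal C$ does not literally handle.
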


The proof freezes $Q$ with $y$ placed in a slot of $R_y\setminus R_{y'}$
and derives a contradiction with set-equivalence. Once the regions match, the comparisons are
inert. Place both variables of an isolated group in an open slot of the common region. By
all-or-none, every comparison accepts the placed block or value as a whole, and a homomorphism
maps the pair to each other, the only images available. An isolated group therefore never
disturbs the count. When every comparison sits on an isolated variable, the comparisons drop out
of the criterion altogether.

\begin{restatable}[Criterion for isolated pairs]{corollary}{compisolatedthm}
\label{comp:isolated-thm}
Suppose every comparison of $Q,Q'$ is a var-const atom on an isolated variable, of either kind
and of any shape. Then $Q\equiv Q'$ if and only if their relational parts are
multiset-homomorphic.
\end{restatable}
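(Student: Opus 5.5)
Soundness is \cref{comp:sound} once we check that a multiset-homomorphism between the relational parts automatically respects regions. Take a multiset-homomorphism $\mu\colon R_{Q'}\to R_Q$. We first argue that it respects regions. Every comparison sits on an isolated variable. A variable $y'$ carrying a comparison therefore lies in an isolated group $\{y,y'\}$ with no constant at its positions. Hence $\mu(y')=y$, the only image available, and by \cref{comp:pin} we have $R_y=R_{y'}$, so $R_{\mu(y')}\subseteq R_{y'}$. A variable $y'$ with no comparison has $R_{y'}=\mathcal S$, so any image respects its region, including a constant image. Thus $\mu$ is region-respecting, and symmetrically for $Q\to Q'$. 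Applying \cref{comp:sound} gives $Q\equiv Q'$.

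For the converse, assume $Q\equiv Q'$ and build a multiset-homomorphism $R_{Q'}\to R_Q$; the other direction is symmetric. Fix a placement $\pi$ on $Q$ as follows. Each isolated pair is placed in a common open slot of $R_y=R_{y'}$, which is nonempty by the standing assumption that regions contain an open slot. Every other variable of $Q$ is unconstrained and is placed in any open slot. On the placed family $\pcdb{\vec N}{Q}$, order-invariance makes $\mathrm{mult}_Q$ equal to $\mathcal F^{(R_Q)}(\vec N)$. The comparisons of $Q$ hold throughout, because each block or fixed value lies in a slot of its own variable's region.

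For $Q'$ we use all-or-none. A satisfying assignment of $R_{Q'}$ into $\pcdb{\vec N}{Q}$ sends each $y'$ to a value occurring at the positions of $y'$. Only the block or value of $y$ can occur there, since the group is $\{y,y'\}$, the frozen atoms only use images of $Q$-variables of the same position-group, and no constants occur there. That value lies in $\pi(y)\in R_{y'}$, so the comparisons of $Q'$ are automatically satisfied. Hence $\mathrm{mult}_{Q'}(\pcdb{\vec N}{Q})=\mathcal F^{(R_{Q'})}(\vec N)$. Equivalence then gives $\mathcal F^{(R_Q)}=\mathcal F^{(R_{Q'})}$ as polynomials. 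The characteristic monomial therefore appears in $\mathcal F^{(R_{Q'})}$, and the criterion of \cite{Cohen09} recalled in \cref{sec:cmono} yields a multiset-homomorphism $R_{Q'}\to R_Q$.

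The main obstacle is the claim that the values available to $y'$ in the placed family are exactly those of $y$. This needs care because the database contains frozen atoms of every subgoal of $Q$. Position-wise, the column of $y'$ in predicate $p$ holds only images of $Q$-terms at that column. Those terms are variables in $y'$'s position-group, or constants, and constants are excluded by isolation. So the only candidate is $y$. The subtle point is that $y'$ may occupy several positions, but all of them belong to the same group, so the argument holds. A second possible gap is \cref{comp:pin} itself, which presupposes set-equivalence; this is part of the standing assumption.
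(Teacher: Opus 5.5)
Your proof is correct and follows the paper's argument. For soundness, isolation forces $\mu(y')=y$, \cref{comp:pin} gives $R_y=R_{y'}$, and \cref{comp:sound} applies. For the converse, you place each isolated pair in an open slot of the common region, so that on $\pcdb{\vec N}{Q}$ both multiplicities reduce to those of the relational parts, and the characteristic-monomial criterion of \cite{Cohen09} then yields $R_{Q'}\to R_Q$, and symmetrically $R_Q\to R_{Q'}$.

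One step is stated too quickly: that the comparisons of $Q$ hold for \emph{every} assignment of $R_Q$ into the placed family, not only the generic one. It follows from the same position-group argument you give for $Q'$, since $y$ is the only $Q$-variable at its positions and no constant occurs there.
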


\myparagraph{Boundary-free constants}
Last, the strictification in the theorem. Suppose a query filters an existential column by
$z\le 9$, and the constant $9$ appears nowhere else in either query. No answer changes when
every occurrence of the value $9$ in a database is lowered slightly, so counter-examples may
avoid the value $9$, and on such databases $z\le9$ acts as the strict $z<9$.
Weak comparisons at boundary-free constants may therefore be made
strict before the certificates are checked.

\begin{remark}[The frontier]
\label{comp:frontier}
What is left is a position-group with no certificate: a composite comparison on an existential
variable that is neither isolated nor at a boundary-free constant. Then splitting over-counts
and set-equivalence no longer pins the region. For instance, take
$Q(x)\leftarrow r(x,y),r(x,u),\ 0<y<1,\ 0<u<2,\ \{y\}$ and
$Q'(x)\leftarrow r(x,y'),r(x,u'),\ 0<y'<2,\ 0<u'<1,\ \{y'\}$, with $u,u'$ existential. The
queries are set-equivalent. Each states that some $r(x,\cdot)$ value lies in
$(0,1)$. Yet the regions of the multiset variables differ, $(0,1)\ne(0,2)$, and the queries are
inequivalent: on $\{r(0,\tfrac12),r(0,\tfrac32)\}$ they return multiplicities $1$ and $2$. The
counter-example is small, but none of our arguments delivers it. This case, together with
variable-versus-variable comparisons on variables that keys do not pin, where a block can be
split from inside and all-or-none fails outright, is left to future work.
\end{remark}

\subsection{Comparisons on key-determined variables}
\label{lip:sec}

The results so far leave two cases out of reach: a composite comparison on a variable that both blows up
and shares its group (\cref{comp:frontier}), and any comparison between two variables. Both become free
when the compared variables are \emph{pinned} by keys, so their values are determined by the answer and
cannot blow up. This is the ubiquitous SQL pattern of filtering on attributes fixed by keys, as in
selecting the line items sold below list price: $\texttt{I.price}<\texttt{P.list}$ relates two
values fixed by the keys of the output row and only decides which rows appear. We work over a schema with declared keys and, following the standing set-equivalence
assumption, take $Q,Q'$ set-equivalent over legal databases.

Call a variable \emph{key-anchored} if it is $X$-determined, lying in the closure $X^{+}$ of the
distinguished variables under the key dependencies of $Q$ (\cref{sec:keyed}), and a comparison atom, var-const or var-var,
\emph{pinned} if every variable it mentions is key-anchored.
Write $Q_0$ for $Q$ with all comparison atoms deleted. Without declared keys $X^{+}=X$, so pinned means
\emph{on distinguished variables}. A declared key enlarges $X^{+}$ to any attribute reached from the output
through key lookups. Two facts make pinned comparisons free. First, a pinned comparison is a \emph{support
gate}: on a legal ordered database, any two satisfying assignments of $Q_0$ producing the answer $\bar a$
agree on every key-anchored variable, so $\mathrm{mult}_Q(\mathcal D,\bar a)$ equals
$\mathrm{mult}_{Q_0}(\mathcal D,\bar a)$ when the values forced by $\bar a$ satisfy every comparison, and $0$
otherwise. Second, a comparison-free query is \emph{order-blind}: an injection $\sigma$ of values fixing the
constants of $P$ has $\mathrm{mult}_P(\sigma\mathcal D,\sigma\bar a)=\mathrm{mult}_P(\mathcal D,\bar a)$, and
$\sigma\mathcal D$ is legal whenever $\mathcal D$ is.

\begin{restatable}[Pinned comparison bound]{theorem}{lipthm}
\label{lip:thm}
Let $Q,Q'$ be set-equivalent over legal databases, with every comparison atom, var-const or
var-var, pinned, and let their comparison-free bodies $Q_0,Q'_0$, obtained by deleting all
comparison atoms, be key-anchorable. If
$Q\not\equiv Q'$, they differ on a legal ordered database of at most $2^{\mathit{kw}}|Q|$ atoms.
\end{restatable}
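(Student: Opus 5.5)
We reduce to the comparison-free case. Since every comparison is pinned, the support-gate fact gives, on every legal ordered database $\mathcal D$ and answer $\bar a$, $\mathrm{mult}_Q(\mathcal D,\bar a)=\mathrm{mult}_{Q_0}(\mathcal D,\bar a)\cdot[\text{the values forced by }\bar a\text{ satisfy the comparisons of }Q]$. The same holds for $Q'$ with $Q'_0$. Set-equivalence of $Q,Q'$ over legal databases means both indicators agree whenever either multiplicity is positive: an answer with positive multiplicity in one query lies in the support of the other. So $Q$ and $Q'$ differ exactly where $\bar a$ is in their common support and $\mathrm{mult}_{Q_0}(\mathcal D,\bar a)\ne\mathrm{mult}_{Q'_0}(\mathcal D,\bar a)$.

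\textbf{Step 1 ($Q_0\not\equiv Q'_0$).} Suppose $Q_0\equiv Q'_0$ over legal databases. Then on any legal $\mathcal D$ and $\bar a$ in the common support, the multiplicities of $Q_0$ and $Q'_0$ agree, so those of $Q$ and $Q'$ agree. Outside the support both multiplicities vanish. Hence $Q\equiv Q'$, a contradiction, and $Q_0\not\equiv Q'_0$.

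\textbf{Step 2 (a small witness).} The bodies $Q_0,Q'_0$ are key-anchorable, so \cref{thm:keyed-bound} applies to them. It gives a legal witness $\mathcal D_0$, a key-chased corner database of one of the reducts, say $Q_0^{\natural}$, of at most $2^{\mathit{kw}}\max(|Q_0^{\natural}|,|Q_0'^{\natural}|)\le 2^{\mathit{kw}}|Q|$ atoms. It separates the queries at the fixed head image $\bar c$. The separation is at $\bar c$ because the characteristic-monomial argument of \cref{prop:cmono-sep} compares the multiplicities of $\bar c$.

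\textbf{Step 3 (reordering the values).} The fixed values of $\mathcal D_0$ are fresh and pairwise distinct, so we may relabel them by an order-embedding injection $\sigma$ that fixes the constants of $Q,Q'$. Order-blindness keeps $\sigma\mathcal D_0$ legal and separating for $Q_0,Q'_0$ at $\sigma\bar c$. The choice of $\sigma$ must make the key-anchored values forced by $\sigma\bar c$ satisfy every comparison of $Q$. These values are the images under the generic assignment of the $X^{+}$ variables. For this we take a satisfiable legal database $\mathcal E$ of $Q$ with answer $\bar e$, which exists by the standing satisfiability assumption. Its witnessing assignment restricted to $X^{+}$ gives a value for each key-anchored variable satisfying all comparisons. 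We then need $\sigma$ to send the generic value of each $X^{+}$-variable to the value that this assignment gives it.

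\textbf{Main obstacle (injectivity of $\sigma$).} The prescribed target values may coincide, or hit a constant, where the canonical database has distinct fresh values, so $\sigma$ need not be injective. We would first handle constants: \cref{sec:comparisons} already assumes no region is a single constant slot, and by density we can perturb the witnessing values of $\mathcal E$ within their open slots. The perturbation must keep the values of $X^{+}$ variables strictly ordered as required and pairwise distinct whenever the canonical images are distinct. If the comparisons force two key-anchored variables to be equal, we first add that equality to both queries by substitution, preserving equivalence. Density then provides the perturbation, and values outside $X^{+}$ are sent injectively to fresh rationals. After this, $\sigma\bar c$ satisfies all pinned comparisons of $Q$, and by set-equivalence also those of $Q'$. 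By the support-gate identity, $\mathrm{mult}_Q(\sigma\mathcal D_0,\sigma\bar c)\ne\mathrm{mult}_{Q'}(\sigma\mathcal D_0,\sigma\bar c)$. The size of $\sigma\mathcal D_0$ equals that of $\mathcal D_0$, which gives the bound $2^{\mathit{kw}}|Q|$.
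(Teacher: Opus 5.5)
\textbf{Verdict: there is a gap in Step 3.} Your architecture matches the paper's: the support gate, reduction to the comparison-free bodies, \cref{thm:keyed-bound}, then a reordering injection justified by order-blindness. Step 1 and the set-equivalence argument transferring the comparisons to $Q'$ are fine.

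The gap is in Step 3 (injectivity). You must realize the comparisons of $Q$ by an injection on the generic witness. Your $\sigma$ should be an arbitrary injection, not an order-embedding, since its job is to reorder. This is possible only if no equality between two distinct post-chase pinned variables, or between a pinned variable and a constant, holds in every legal satisfying assignment. Your repair removes only the equalities forced by the comparisons alone, but keys can force equalities jointly with the comparisons. Take $\mathit{key}(r)=\{1\}$ and
\[
Q(x,w)\leftarrow r(x,u),\ r(x,v),\ q(w),\ u\le w,\ w\le v,
\]
plus any counting atoms that make the pair inequivalent. All compared variables are pinned, and the comparisons by themselves force nothing. But the key-chase merges $u$ and $v$, so $u=w$ on every legal database. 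In the keyed witness, $u$ and $w$ get distinct fresh values, so no injection satisfies $u\le w\le u$, and the perturbation you invoke does not exist. Merging the two values instead voids order-blindness. Likewise, $x\le y,\ y\le 3,\ 3\le x$ forces $x=3$ through var-var atoms; the no-singleton-region convention does not remove this, and an injection fixing the constants cannot realize it. Two further problems: ``add that equality to both queries'' is meaningless, since $Q,Q'$ are renamed apart; and the substitution must happen before Step 2, because it changes the canonical witness.

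\textbf{A repair within your route.} For each query separately, substitute every equality among pinned variables and constants that holds in all legal satisfying assignments. This keeps each query equivalent to the original, keeps its comparisons pinned, and keeps its body key-anchorable. Key-chase the result, and only then run Steps 1--2. After this closure the post-chase comparison system implies no equality, so density gives an injective solution avoiding the constants, and the rest of your argument goes through.

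\textbf{How the paper avoids this.} It never analyses forced equalities. It takes the concrete counterexample $(\mathcal D_0,\bar a_0)$ that $Q\not\equiv Q'$ supplies, identifies the pinned variables of $Q_0,Q'_0$ exactly as they coincide there, and reuses that database's ordering. Since $(\mathcal D_0,\bar a_0)$ is a real legal database, this pattern respects keys and comparisons simultaneously. The witness for the identified bodies then carries exactly those coincidences, so the reordering is an injection.
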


The proof uses the support gate to reduce inequivalence to the comparison-free
bodies, applies \cref{thm:keyed-bound} to them, and reorders the witness's values by an injection so that
the pinned comparisons hold, which order-blindness allows. 

\section{Related Work}
\label{sec:related}

\myparagraph{Equivalence characterizations}
Equivalence of conjunctive queries under set semantics is characterized by containment
mappings~\cite{ChandraM77}, under bag and bag-set semantics by isomorphism of the
queries~\cite{ChaudhuriVardi93}, and with comparisons via
linearizations~\cite{Klug,vanderMeyden,EquivAggregate}. Cohen introduced \emph{combined semantics}, subsuming
set, bag, and bag-set semantics, and characterized equivalence for several
classes---including relational queries---by
\emph{multiset-homomorphisms}~\cite{Cohen06,Cohen09}, the characterization our bound rests on
(\cref{thm:relational-equiv}). Chirkova extended the study to \emph{copy-sensitive} queries over
bag-valued relations, characterized by \emph{covering mappings} on an \emph{explicit-wave}
subclass with piecewise-polynomial multiplicity
functions~\cite{rada-icdt,rada-jcss,rada-computer-journal}. Bag-containment remains an open problem~\cite{MarcinkowskiON25}.
Equivalence under integrity constraints is classically handled by chasing the constraints into
the queries~\cite{AhoSagivUllman79}, as our foreign-key analysis does.\looseness=-1

\myparagraph{Practical equivalence checkers}
Recent systems check SQL equivalence with solvers, in the two one-sided families of
\cref{sec:intro}. Cosette pairs counter-example search with a proof-assistant
backend~\cite{Cosette}. EQUITAS translates queries to first-order formulas discharged by an SMT
solver~\cite{EQUITAS}. SQLSolver reduces the unbounded summations of bag semantics to linear
integer arithmetic~\cite{SQLSolver}. QED decides a substantial fragment via
$Q$-expressions~\cite{QED}. Each certifies equivalence on its own fragment. Only Cosette
emits counter-examples. On the refuting side, VeriEQL encodes all databases with at most $n$ tuples
per relation as an SMT formula, escalating $n$, and leads the field in integrity-constraint
support~\cite{VeriEQL}. Polygon fixes $n$ and searches under-approximations of operator
behaviour exhaustively, so an empty-handed search is definitive relative to the
bound~\cite{Polygon}. SpotIt turns the machinery into a correctness oracle for text-to-SQL
benchmarks~\cite{SpotIt}. No completeness threshold was known for any of them. Ours supply it
for the fragment covered here.\looseness=-1

\section{Conclusion}
\label{sec:conclusion}

We proved computable counter-example bounds for the equivalence of conjunctive queries under
combined semantics: $2^{w}|Q|$ over unconstrained set databases, $2^{\mathit{kw}}|Q|$ under
declared keys, unchanged under acyclic foreign keys, and $2^{w}|Q|$ again for several
comparison classes previously lacking any equivalence characterization. Up to these thresholds,
bounded search becomes a terminating, complete proof method.\looseness=-1

Three problems remain open. First, the keyed bound requires key-anchorability, leaving open the
queries in which demotion strands a multiset variable off every key. Second, the comparison
frontier: a two-sided comparison on a projected-away value neither isolated nor key-determined,
and var-var comparisons not pinned by keys. Third, \emph{tightness}: the corner $\{1,2\}^m$
cannot be relaxed, yet on the pair forcing it the size bound is loose. We know no family forcing
size $2^{\Omega(w)}$, and whether the true threshold is polynomial in the query is open. Beyond
these, extending the bounds to bag-valued stored relations~\cite{rada-icdt,rada-jcss} and to
further query classes, aggregation foremost, is a natural next step.\looseness=-1

\bibliography{references}

\begin{thebibliography}{10}

\bibitem{AhoSagivUllman79}
Alfred~V. Aho, Yehoshua Sagiv, and Jeffrey~D. Ullman.
\newblock Equivalences among relational expressions.
\newblock {\em SIAM Journal on Computing}, 8(2):218--246, 1979.
\newblock \href {https://doi.org/10.1137/0208017} {\path{doi:10.1137/0208017}}.

\bibitem{Alon99}
Noga Alon.
\newblock Combinatorial nullstellensatz.
\newblock {\em Combinatorics, Probability and Computing}, 8(1--2):7--29, 1999.
\newblock \href {https://doi.org/10.1017/S0963548398003411}
  {\path{doi:10.1017/S0963548398003411}}.

\bibitem{ChandraM77}
Ashok~K. Chandra and Philip~M. Merlin.
\newblock Optimal implementation of conjunctive queries in relational data
  bases.
\newblock In {\em Proceedings of the Ninth Annual ACM Symposium on Theory of
  Computing}, STOC '77, page 77–90, New York, NY, USA, 1977. Association for
  Computing Machinery.
\newblock \href {https://doi.org/10.1145/800105.803397}
  {\path{doi:10.1145/800105.803397}}.

\bibitem{ChaudhuriVardi93}
Surajit Chaudhuri and Moshe~Y. Vardi.
\newblock Optimization of real conjunctive queries.
\newblock In {\em Proceedings of the Twelfth ACM SIGACT-SIGMOD-SIGART Symposium
  on Principles of Database Systems}, PODS '93, page 59–70, New York, NY,
  USA, 1993. Association for Computing Machinery.
\newblock \href {https://doi.org/10.1145/153850.153856}
  {\path{doi:10.1145/153850.153856}}.

\bibitem{rada-icdt}
Rada Chirkova.
\newblock Equivalence and minimization of conjunctive queries under combined
  semantics.
\newblock In {\em Proceedings of the 15th International Conference on Database
  Theory}, ICDT '12, page 262–273, New York, NY, USA, 2012. Association for
  Computing Machinery.
\newblock \href {https://doi.org/10.1145/2274576.2274604}
  {\path{doi:10.1145/2274576.2274604}}.

\bibitem{rada-computer-journal}
Rada Chirkova.
\newblock Combined-semantics equivalence and minimization of conjunctive
  queries.
\newblock {\em Comput. J.}, 57(5):775--795, 2014.
\newblock URL: \url{https://doi.org/10.1093/comjnl/bxt032}, \href
  {https://doi.org/10.1093/COMJNL/BXT032} {\path{doi:10.1093/COMJNL/BXT032}}.

\bibitem{rada-jcss}
Rada Chirkova.
\newblock Combined-semantics equivalence of conjunctive queries: Decidability
  and tractability results.
\newblock {\em Journal of Computer and System Sciences}, 82(3):395--465, 2016.
\newblock URL:
  \url{https://www.sciencedirect.com/science/article/pii/S0022000015001129},
  \href {https://doi.org/10.1016/j.jcss.2015.11.001}
  {\path{doi:10.1016/j.jcss.2015.11.001}}.

\bibitem{Cosette}
Shumo Chu, Chenglong Wang, Konstantin Weitz, and Alvin Cheung.
\newblock Cosette: An automated prover for {SQL}.
\newblock In {\em 8th Biennial Conference on Innovative Data Systems Research
  ({CIDR})}, 2017.

\bibitem{Cohen06}
Sara Cohen.
\newblock Equivalence of queries combining set and bag-set semantics.
\newblock In {\em Proceedings of the Twenty-Fifth ACM SIGMOD-SIGACT-SIGART
  Symposium on Principles of Database Systems}, PODS '06, pages 70--79, New
  York, NY, USA, 2006. Association for Computing Machinery.
\newblock \href {https://doi.org/10.1145/1142351.1142362}
  {\path{doi:10.1145/1142351.1142362}}.

\bibitem{Cohen09}
Sara Cohen.
\newblock Equivalence of queries that are sensitive to multiplicities.
\newblock {\em {VLDB} J.}, 18(3):765--785, 2009.
\newblock URL: \url{https://doi.org/10.1007/s00778-008-0122-1}, \href
  {https://doi.org/10.1007/S00778-008-0122-1}
  {\path{doi:10.1007/S00778-008-0122-1}}.

\bibitem{EquivAggregate}
Sara Cohen, Werner Nutt, and Yehoshua Sagiv.
\newblock Deciding equivalences among conjunctive aggregate queries.
\newblock {\em J. ACM}, 54(2):5–es, April 2007.
\newblock \href {https://doi.org/10.1145/1219092.1219093}
  {\path{doi:10.1145/1219092.1219093}}.

\bibitem{SQLSolver}
Haoran Ding, Zhaoguo Wang, Yicun Yang, Dexin Zhang, Zhenglin Xu, Haibo Chen,
  Ruzica Piskac, and Jinyang Li.
\newblock Proving query equivalence using linear integer arithmetic.
\newblock {\em Proceedings of the ACM on Management of Data},
  1(4):227:1--227:26, 2023.
\newblock \href {https://doi.org/10.1145/3626768} {\path{doi:10.1145/3626768}}.

\bibitem{VeriEQL}
Yang He, Pinhan Zhao, Xinyu Wang, and Yuepeng Wang.
\newblock {VeriEQL}: Bounded equivalence verification for complex {SQL} queries
  with integrity constraints.
\newblock {\em Proc. {ACM} Program. Lang.}, 8({OOPSLA1}):132:1--132:29, 2024.
\newblock \href {https://doi.org/10.1145/3649849} {\path{doi:10.1145/3649849}}.

\bibitem{SpotIt}
Rocky Klopfenstein, Yang He, Andrew Tremante, Yuepeng Wang, Nina Narodytska,
  and Haoze Wu.
\newblock Spotit: Evaluating text-to-{SQL} evaluation with formal verification.
\newblock In {\em The Fourteenth International Conference on Learning
  Representations, {ICLR} 2026}, 2026.
\newblock \href{https://arxiv.org/abs/2510.26840}{arXiv:2510.26840}.

\bibitem{Klug}
Anthony Klug.
\newblock On conjunctive queries containing inequalities.
\newblock {\em Journal of the ACM}, 35(1):146--160, 1988.
\newblock \href {https://doi.org/10.1145/42267.42273}
  {\path{doi:10.1145/42267.42273}}.

\bibitem{MarcinkowskiON25}
Jerzy Marcinkowski and Piotr Ostropolski{-}Nalewaja.
\newblock Bag semantics query containment: The {CQ} vs. {UCQ} case and other
  stories.
\newblock {\em Proceedings of the ACM on Management of Data},
  3(5):275:1--275:24, 2025.
\newblock \href {https://doi.org/10.1145/3767711} {\path{doi:10.1145/3767711}}.

\bibitem{QOVerify}
Vivek~R. Narasayya and Surajit Chaudhuri.
\newblock Leveraging query optimizers to verify the soundness of {LLM}-based
  query rewrites for real-world workloads, and more.
\newblock In {\em 16th Conference on Innovative Data Systems Research, {CIDR}
  2026}, 2026.

\bibitem{ullman1988principles2}
Jeffrey~D. Ullman.
\newblock {\em Principles of Database and Knowledge-Base Systems, Volume II}.
\newblock Computer Science Press, 1988.

\bibitem{vanderMeyden}
Ron van~der Meyden.
\newblock The complexity of querying indefinite data about linearly ordered
  domains.
\newblock {\em Journal of Computer and System Sciences}, 54(1):113--135, 1997.
\newblock \href {https://doi.org/10.1006/jcss.1997.1455}
  {\path{doi:10.1006/jcss.1997.1455}}.

\bibitem{QED}
Shuxian Wang, Sicheng Pan, and Alvin Cheung.
\newblock {QED}: {A} powerful query equivalence decider for {SQL}.
\newblock {\em Proc. {VLDB} Endow.}, 17(11):3602--3614, 2024.
\newblock \href {https://doi.org/10.14778/3681954.3682024}
  {\path{doi:10.14778/3681954.3682024}}.

\bibitem{Polygon}
Pinhan Zhao, Yuepeng Wang, and Xinyu Wang.
\newblock Polygon: Symbolic reasoning for {SQL} using conflict-driven
  under-approximation search.
\newblock {\em Proceedings of the ACM on Programming Languages},
  9({PLDI}):1315--1340, 2025.
\newblock \href {https://doi.org/10.1145/3729303} {\path{doi:10.1145/3729303}}.

\bibitem{EQUITAS}
Qi~Zhou, Joy Arulraj, Shamkant~B. Navathe, William Harris, and Dong Xu.
\newblock Automated verification of query equivalence using satisfiability
  modulo theories.
\newblock {\em Proc. {VLDB} Endow.}, 12(11):1276--1288, 2019.
\newblock \href {https://doi.org/10.14778/3342263.3342267}
  {\path{doi:10.14778/3342263.3342267}}.

\end{thebibliography}

\appendix
\numberwithin{theorem}{section}
\makeatletter
\@for\@tempa:={theorem,lemma,corollary,proposition,definition,observation,example,remark,claim,conjecture}\do{%
  \expandafter\edef\csname the\@tempa\endcsname{\noexpand\thesection.\noexpand\arabic{theorem}}}
\makeatother
\crefalias{section}{appendix}
\crefalias{subsection}{appendix}
\clearpage
\section*{Use of AI Tools}

We disclose our use of AI tools, following the ACM Policy on Authorship. A general-purpose AI
coding assistant, Anthropic's Claude Code, was used throughout the preparation of this work. Its use went beyond
assistance with the writing, so we describe it here in full.

The assistant was used in three ways that bear on the results. First, it wrote Python scripts
that search for small counter-examples and evaluate candidate queries. We used those scripts to
probe conjectures before attempting to prove them, and to look for refutations of statements we
suspected were false. No claim in this paper rests on them. The paper reports no experiments,
and every statement we kept was proved afterwards. Second, it expanded human-written proof
sketches into the full proofs given in the appendices. The proof strategy, the constructions,
and the case analyses are ours. Third, it pointed out connections between our setting and
existing mathematical tools. At least one such connection is used in the paper, and the result
it rests on is cited where it is applied.

The authors read and checked every definition, statement, proof, and reference in this paper,
and take full responsibility for all of them. AI tools are not authors and made no decision
about what this paper claims.

\section{Formalization of the Canonical Construction}
\label{app:construction}

This appendix makes \cref{sec:bound,sec:keyed} self-contained by repeating the key constructions and proofs
from~\cite{Cohen09}. A reader can verify the size, degree, and
separation claims on which the counter-example bound rests.

Throughout, $Q\leftarrow L,M$ is a relational query with $m=|M|$ multiset variables. We assume
$m\ge 1$. The case $M=\emptyset$ is the classical containment-mapping theorem~\cite{ChandraM77} and
needs no construction. We name the variables of $Q$ as follows: $X_1,\dots,X_{l}$ are the head
variables, $X_{l+1},\dots,X_{l+u}$ the nonhead set variables, and $Y_1,\dots,Y_m$ the multiset
variables.

\begin{definition}[The canonical database family $\cdb{\vec N}{Q}$, \cite{Cohen09}]
\label{app:def-family}
Let $\nu_0$ be an injection assigning a value to each head variable, each set variable, and each constant
of $Q$, fixed once and for all. It maps each constant to itself, and all other values are fresh,
pairwise distinct, and distinct from every constant occurring in $Q$ or $Q'$. Write $\bar c := \nu_0(\bar x)$ for the image of the head, the \emph{distinguished
tuple}. It is the same in every database below. Given
$\vec N=(N_1,\dots,N_{m})\in\mathbb{N}_+^{m}$, construct $\cdb{\vec N}{Q}$ as follows.
\begin{enumerate}
\item For each $j\in\{1,\dots,m\}$ pick a set $D_j$ of $N_j$ fresh constants, with the $D_j$
pairwise disjoint, disjoint from the range of $\nu_0$, and disjoint from every constant occurring in
$Q$ or $Q'$. Let $D := D_1\times\cdots\times D_m$ (a single empty tuple if $m=0$).
\item For each $\bar d=(d_1,\dots,d_m)\in D$, let $\nu_{\bar d}$ be the assignment agreeing with
$\nu_0$ on the head and set variables and on constants, and sending each $Y_j\mapsto d_j$.
\item \emph{Main loop.} For each $\bar d\in D$ and each subgoal $p(\bar s)\in Q$, add the ground
atom $p(\nu_{\bar d}(\bar s))$. The database $\cdb{\vec N}{Q}$ consists of exactly these atoms
(identical atoms merged, so the result is a set).
\end{enumerate}
\end{definition}

Each tuple $\bar d\in D$ gives a satisfying assignment of $Q$ into $\cdb{\vec N}{Q}$ taking the
head to $\bar c$, so $\bar c\in Q(\cdb{\vec N}{Q})$ for every $\vec N$. The main loop iterates over the
$\prod_{i=1}^m N_i$ tuples of $D$ and adds at most $|Q|$ atoms per tuple. The size of each relation
after merging is bounded in \cref{lem:corner-size}.

\begin{example}
\label{app:ex-construction}
Take $Q(x)\leftarrow r(x,y)\wedge s(y,z),\ \{y\}$, with $y$ the single multiset variable and $z$ a set
variable, so $m=1$ and $|Q|=2$. The fixed values send $x\mapsto\hat x$ and $z\mapsto\hat z$ (fresh and
distinct), giving $\bar c=(\hat x)$. For a parameter $N_1$, blow up $y$ over
$D_1=\{a_1,\dots,a_{N_1}\}$. The main loop then inserts, for each $a_i$, the atoms $r(\hat x,a_i)$
and $s(a_i,\hat z)$, so
\[
\cdb{(N_1)}{Q}=\{\,r(\hat x,a_i):1\le i\le N_1\,\}\,\cup\,\{\,s(a_i,\hat z):1\le i\le N_1\,\},
\]
with $r$ and $s$ each of size $N_1$ ($2N_1$ atoms in all). Each $a_i$ gives one satisfying assignment with
head $\hat x$, taking $z=\hat z$, and these are all of them. So $\bar c$ has multiplicity $N_1$, exactly
the characteristic monomial $\mathcal P^{(Q)}_*=N_1$.
\end{example}

Fix a relational query $R$. The construction expresses the multiplicity of $\bar c$ in
$R(\cdb{\vec N}{Q})$ as a function of $\vec N$: write $\mathcal{F}^{(R)}(\vec N)$ for the multiplicity of
$\bar c$ in the bag $R(\cdb{\vec N}{Q})$, that is, the number of assignments in
$\Gamma_{X_R\cup M_R}(R,\cdb{\vec N}{Q})$ whose head image is $\bar c$.

To describe $\mathcal{F}^{(R)}$, \cite{Cohen09} classifies the $\bar c$-producing assignments of $R$.
Each ground atom of $\cdb{\vec N}{Q}$ has a unique subgoal of $Q$ as its template, namely its
$\nu_Q$-image. An \emph{association} records, for one such assignment, the template of each subgoal of
$R$. It also records a \emph{signature}: for each multiset variable of $R$, which multiset variable $Y_i$
of $Q$, or which fixed value of $\nu_0$, its image lies under.

Now group the $\bar c$-producing assignments by association. Suppose an association's signature sends the
multiset variables of $R$ into domains $D_{i_1},\dots,D_{i_k}$, with repetition. It is realized by exactly
$\prod_{j} N_{i_j}$ assignments, a \emph{monomial}. Its total degree is the number of multiset variables
of $R$ mapped into blown-up domains, hence at most $m_R$. Distinct associations contribute additively. Thus
$\mathcal{F}^{(R)}$ is the sum of these monomials.

Among the monomials of $\mathcal{F}^{(Q)}$ one is distinguished: the \emph{characteristic monomial}
$\mathcal{P}^{(Q)}_* = N_1 N_2\cdots N_m$ of $Q$. It has total degree $m$ and contains every parameter
$N_1,\dots,N_m$.
The three facts below, the black boxes used in \cref{sec:bound}, summarize what we need of these
multiplicity functions.

\begin{proposition}[Multiplicity polynomials and the characteristic monomial; \cite{Cohen09}, Lemma~5.2 and Thm.~5.3]
\label{app:sep-facts}
Let $Q,Q'$ be relational queries and $R$ \emph{any} relational query, with the family $\cdb{\vec N}{Q}$
built from $Q$.
\begin{enumerate}
\item $\mathcal{F}^{(R)}$ is an integer polynomial in $N_1,\dots,N_m$, and every monomial of it has
total degree at most $m_R=|M_R|$.
\item The characteristic monomial $\mathcal{P}^{(Q)}_*$ occurs in $\mathcal{F}^{(Q)}$ with a positive
coefficient.
\item $\mathcal{P}^{(Q)}_*$ occurs in $\mathcal{F}^{(Q')}$ if and only if there is a
multiset-homomorphism from $Q'$ to $Q$.
\end{enumerate}
\end{proposition}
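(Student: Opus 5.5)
The plan is to make the association count of the preceding paragraphs exact. I would do this by grouping the $\bar c$-producing $(X_R\cup M_R)$-images by a finer invariant, a \emph{type}, and then reading all three facts off the resulting expansion.

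\textbf{Step 1: types.} A type of an image $\theta$ of $X_R\cup M_R$ with $\theta(\bar x_R)=\bar c$ records two things. The first is, for each multiset variable of $R$, whether its value is a fixed value of $\nu_0$ (and which one) or lies in some $D_i$ (and which $i$). The second is the equality pattern among the values that lie in the blown-up domains. The key observation is a symmetry. Every permutation of $D=D_1\times\cdots\times D_m$ that permutes each $D_j$ separately and fixes the range of $\nu_0$ is an automorphism of $\cdb{\vec N}{Q}$, because the main loop ranges over all of $D$ uniformly. Hence whether $\theta$ extends to a satisfying assignment of $R$ depends only on its type. The number of images of a given type is then a product of falling factorials: if the type uses $k_i$ distinct values in $D_i$, the count is $\prod_i N_i(N_i-1)\cdots(N_i-k_i+1)$. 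Here I would take $N_i\ge\sum_i k_i$, or note that the falling factorial correctly yields $0$ otherwise. There are finitely many types, independent of $\vec N$. So $\mathcal F^{(R)}$ is the sum of these products over the realizable types, which is an integer polynomial. Its degree is $\sum_i k_i$, which is at most $m_R$ because each distinct blown-up value is the image of at least one multiset variable. This gives fact~(1).

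\textbf{Step 2: isolating the squarefree top monomial.} Expand each falling-factorial product in the monomial basis. The term $N_1\cdots N_m$ can arise only from a product with $\sum_i k_i\ge m$, and its coefficient there is nonzero only if the product reaches degree $m$ with every $k_i=1$. When the degree is at most $m$, that is, when $m_R\le m$, a type whose product has degree exactly $m$ with $k_i\ge 2$ for some $i$ contributes no squarefree degree-$m$ term, since lowering the power of $N_i$ also lowers the degree. It follows that the coefficient of $\mathcal P^{(Q)}_*$ in $\mathcal F^{(R)}$ is exactly the number of realizable types that send $M_R$ injectively onto one value in each $D_j$, with no cancellation.

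For fact~(2), take $R=Q$. The type of $\nu_{\bar d}$ restricted to $X\cup M$, with $Y_j\mapsto d_j\in D_j$, is realized, so the coefficient is at least $1$.

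\textbf{Step 3: the criterion.} For fact~(3), suppose a realizable type of $Q'$ of the form above exists, with extension $\gamma$. Compose $\gamma$ with the map that sends each value of $D_j$ to $Y_j$ and sends $\nu_0(t)$ back to $t$. Every ground atom has a template subgoal, so this composition carries each atom of $Q'$ onto an atom of $Q$. It fixes the head and the constants, and it maps $M'$ injectively into $M$, so it is a multiset-homomorphism. Conversely, let $\mu\colon Q'\to Q$ be a multiset-homomorphism. Then $\nu_{\bar d}\circ\mu$ is a satisfying assignment whose type sends $M'$ injectively into distinct $D_j$'s.

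\textbf{Main obstacle.} The step I expect to be delicate is the interaction between sign cancellation and the degree caveat. When $m_{Q'}>m$, lower-order terms of falling factorials can cancel the squarefree monomial. When $m_{Q'}<m$, a multiset-homomorphism exists yet cannot reach degree $m$. I would therefore prove fact~(3) under the hypothesis $m_{Q'}=m$, which is the only case in which a multiset-homomorphism can hit every $Y_j$. I would then check that the uses in \cref{prop:cmono-sep} only need the case $m_{Q'}\le m$, where the no-cancellation argument of Step~2 applies.
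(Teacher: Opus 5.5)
Your argument is correct in substance, with one unjustified step noted below. It is also more careful than the paper's proof, which is only a sketch deferring to~\cite{Cohen09}. That sketch groups the $\bar c$-producing assignments by \emph{associations} (a template per subgoal of $R$ plus a signature), credits each association with the monomial $\prod_j N_{i_j}$, and adds. The bookkeeping is loose in two ways. An $(X_R\cup M_R)$-image with several extensions through set variables lies under several associations. A signature repeating a domain is not realized by a full power when the templates force equalities.

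Your orbit types under $\prod_j\mathrm{Sym}(D_j)$ count exactly, by falling factorials. The price is signed coefficients, which is why your no-cancellation step genuinely needs $m_R\le m$. Stated sharply: $N_1\cdots N_m$ occurs in $\prod_i N_i(N_i-1)\cdots(N_i-k_i+1)$ if and only if every $k_i\ge 1$, with coefficient $\prod_i(-1)^{k_i-1}(k_i-1)!$. So your general sentence that the coefficient is nonzero only when every $k_i=1$ is false (take $k=(2,1,\dots,1)$). Under $\sum_i k_i\le m_R\le m$, however, it does force every $k_i=1$, and each such type contributes exactly $+1$.

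Your caveat on fact~(3) is right, and it exposes a defect of the statement rather than of your proof, since the paper's multiset-homomorphisms inject $M'$ \emph{into} $M$. For $Q()\leftarrow r(y_1),r(y_2),\{y_1,y_2\}$ and $Q'()\leftarrow r(y'),\{y'\}$, the map $y'\mapsto y_1$ is a multiset-homomorphism, yet $\mathcal F^{(Q')}=N_1+N_2$. For $Q()\leftarrow r(y,y),\{y\}$ and $Q'()\leftarrow r(y_1,y_2),\{y_1,y_2\}$, $\mathcal F^{(Q')}=N_1=\mathcal P^{(Q)}_*$ although no multiset-homomorphism exists. Here even $\mathcal F^{(Q)}-\mathcal F^{(Q')}=0$, so \cref{prop:cmono-sep} itself needs $m_{Q'}\le m$.

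Thus (3) holds under $m_{Q'}=m$, and for $m_{Q'}\le m$ only its `only if' half survives, which is the half \cref{prop:cmono-sep} uses. The proof of \cref{thm:main-bound} does cite the `if' half when $m_{Q'}<m_Q$, to conclude that no multiset-homomorphism $Q'\to Q$ exists. That inference is invalid but harmless, since what is actually needed, the absence of $\mathcal P^{(Q)}_*$ from $\mathcal F^{(Q')}$, follows from degree alone.

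The unjustified step is in Step~1. The automorphisms show that realizability depends only on the type for a \emph{fixed} $\vec N$. A single polynomial, however, needs the set of realizable types to be the same for every $\vec N$ with all $N_i\ge k_i$, since an extension might a priori need extra distinct values for set variables. You can close this as follows. Any map $h$ sending each $D'_j$ into $D_j$ and fixing the range of $\nu_0$ is a homomorphism $\cdb{\vec N'}{Q}\to\cdb{\vec N}{Q}$, taking $p(\nu_{\bar d}(\bar s))$ to $p(\nu_{h(\bar d)}(\bar s))$. Choosing $h$ injective on the blown-up values of $\theta$ transports extensions downward, and nesting $D_j\subseteq D'_j$ gives the upward direction. (Also, `$N_i\ge\sum_i k_i$' should read $N_i\ge k_i$.)
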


\begin{proof}[Proof sketch]
(1)~Each multiset variable of $R$ contributes at most one parameter factor, so every monomial above has
total degree at most $m_R$; integrality and the polynomial form follow from the additive count over
associations.
(2)~The identity assignment of $Q$ places each $Y_j$ in its own domain $D_j$, giving an association with
signature $(Y_1,\dots,Y_m)$ that contributes $N_1\cdots N_m$. It is the unique association of degree $m$
using every domain exactly once, so nothing cancels it and its coefficient is positive.
(3)~A monomial $N_1\cdots N_m$ in $\mathcal{F}^{(Q')}$ comes from an association of a special form. Its
signature is a bijection from the multiset variables of $Q'$ onto $Y_1,\dots,Y_m$, and its templates send
every subgoal of $Q'$ to a subgoal of $Q$. Read as a map on terms, such an association is a
multiset-homomorphism from $Q'$ to $Q$: it fixes the head and constants, maps atoms to atoms, and is
injective on the multiset variables. Conversely, such a homomorphism induces the association, hence the
monomial. See~\cite{Cohen09}.
\end{proof}

These facts are what the separation \cref{prop:cmono-sep} invokes. With them, the separation argument and
the size accounting of \cref{lem:corner-size} give the bound theorems of \cref{sec:bound,sec:keyed} entirely by the
(new) reasoning presented there.

\section{Omitted Proofs}
\label{app:bound-proofs}

\subsection{Unconstrained Schemas}
\label{app:unconstrained}

\cmonosep*

\begin{proof}
By \cref{app:sep-facts}, $\mathcal{P}^{(Q)}_*$ is absent from $\mathcal{F}^{(Q')}$, as no
multiset-homomorphism from $Q'$ to $Q$ exists, yet present in $\mathcal{F}^{(Q)}$ with positive
coefficient. Its coefficient in $P$ is therefore positive, so $P\neq 0$. For the degree,
$\mathcal{F}^{(Q)}$ has total degree $m$ and $\mathcal{F}^{(Q')}$ at most $m_{Q'}\le m$.
\end{proof}

\cornersize*

\begin{proof}
The atoms a subgoal $p(\bar s)$ contributes depend on the blow-up assignment only through the multiset
variables occurring in $\bar s$, the other positions being fixed. Each such variable ranges over
$a_i\le 2$ values, so the subgoal yields at most $2^{\,\nmv{p(\bar s)}}\le 2^{w(p)}$ distinct atoms.
Summing over the $\mathit{sj}(p)$ subgoals with predicate $p$ bounds $\mathit{size}(p,\cdb{\vec a}{Q})$,
and summing over all subgoals, using $\sum_p\mathit{sj}(p)=|Q|$, bounds the total by $2^{w}|Q|$.
\end{proof}

\mainbound*

\begin{proof}
By \cref{thm:relational-equiv}, some direction admits no multiset-homomorphism. The conclusion is
symmetric in $Q$ and $Q'$, so we may assume $m_{Q'}\le m_Q$ and that no multiset-homomorphism
$Q'\to Q$ exists. Indeed, when $m_Q\ne m_{Q'}$, put the query with fewer multiset variables as $Q'$:
the characteristic monomial of $Q$ then has degree $m_Q$, exceeding the degree of every monomial of
$\mathcal F^{(Q')}$ (\cref{app:sep-facts}(1)), so no multiset-homomorphism $Q'\to Q$ exists by the
criterion (\cref{app:sep-facts}(3)). When $m_Q=m_{Q'}$, name the failing direction $Q'\to Q$.

Write $m:=m_Q$ and $P := \mathcal{F}^{(Q)} - \mathcal{F}^{(Q')}$. The Combinatorial
Nullstellensatz~\cite{Alon99} states that if $P$ has a nonzero coefficient on a monomial
$\prod_i N_i^{t_i}$ with $\sum_i t_i=\deg P$, then for any sets $A_1,\dots,A_m$ with $|A_i|>t_i$ there
is a point $\vec a\in A_1\times\cdots\times A_m$ with $P(\vec a)\neq 0$. By \cref{prop:cmono-sep}, the
coefficient of $N_1\cdots N_m$ in $P$ is nonzero and $\deg P=m$, so this monomial is a top-degree term
with every $t_i=1$. Taking each $A_i=\{1,2\}$ yields $\vec a\in\{1,2\}^m$ with $P(\vec a)\neq 0$, so
the multiplicity of $\bar c$ differs between $Q$ and $Q'$ on $\mathcal D := \cdb{\vec a}{Q}$. The size
bounds are \cref{lem:corner-size} at $\vec a$, weakened from $w(Q,p),\mathit{sj}(Q,p)$ to the
pair-maxima in the statement. The final claim is the contrapositive, as the witness is $b$-bound.
\end{proof}

A variable sitting in an unobserved attribute occurs exactly once, not in the head and in no order
atom. An unobserved multiset variable feeds the multiplicity but is otherwise inert, and several in one
atom can be merged: $k$ unobserved multiset variables ranging over domains $D_1,\dots,D_k$ contribute
exactly the multiplicity of a single multiset variable over $D_1\times\cdots\times D_k$. (For the
relational queries of \cref{sec:bound} the clause about compared variables is vacuous. It is stated so
that the notion remains correct once comparisons are present.)

The \emph{collapse} of the pair $Q,Q'$ rewrites each relation $r$ according to its unobserved
attributes. If some atom of $r$, in $Q$ or $Q'$, places a multiset variable in an unobserved position,
append one fresh attribute at the end of $r$ and delete the unobserved ones. An atom $r(\bar s)$ then
becomes $r^{\downarrow}(\bar s|_{\mathrm{obs}},\,z)$, whose new last position holds a fresh multiset
variable $z$ when $\bar s$ had a multiset variable in an unobserved position, and a fresh set variable
otherwise. If no atom of $r$ places a multiset variable in an unobserved position, $r$ is left
unchanged. The unobserved attributes are common to $Q$ and $Q'$, so the same rewriting applies to both,
yielding $Q^{\downarrow}$ and $Q'^{\downarrow}$ over a shared schema. A collapsed atom carries only the
join multiset variables of the original, plus at most one new multiset variable, so
$w(Q^{\downarrow})\le w(Q)$, and clearly $|Q^{\downarrow}|=|Q|$ and
$\mathit{sj}(Q^{\downarrow})=\mathit{sj}(Q)$. The collapse acts on databases by the same merging:
$\mathcal D$ becomes $\mathcal D^{\downarrow}$ by sending each ground atom $r(\bar c)$ of a rewritten
relation to $r^{\downarrow}(\bar c|_{\mathrm{obs}},\,\langle\bar c|_{\mathrm{unobs}}\rangle)$, the new
attribute recording the whole tuple of merged values. This map is a bijection on databases and
preserves every relation's size.

\collapselem*

\begin{proof}
The collapse preserves every answer multiplicity. A multiset variable in an unobserved position
contributed a factor equal to its domain size, and the new variable $z$, ranging over the product of
those domains, reproduces exactly that factor, while set variables there contribute nothing. So
$Q(\mathcal D)=Q^{\downarrow}(\mathcal D^{\downarrow})$ for every $\mathcal D$, and likewise for $Q'$,
which with the bijection $\mathcal D\mapsto\mathcal D^{\downarrow}$ gives the claim. (An unobserved
attribute holding a constant is simply retained. It contributes nothing to the width.)
\end{proof}

Under declared keys the collapse requires one adjustment: the schema surgery must merge a
relation's unobserved key attributes separately from its unobserved non-key attributes, so that
the key is preserved and the bijection maps legal databases to legal databases. Merging
unobserved key attributes then lowers the key-width, replacing $\mathit{kw}$ in
\cref{thm:keyed-bound} by its collapsed counterpart.

\subsection{Constrained Schemas}
\label{app:keys}

The surviving multiset variables of a run of demotions form a \emph{basis} of $M$ under the
closure $T\mapsto(X\cup T)^{+}$, and because functional-dependency closures need not obey the
matroid exchange axiom, bases are not unique and may differ in size. One might hope that the
dependencies induced by primary keys are better behaved than general functional dependencies.
They are not. Take the atoms
$p(a,b),q(a,c)$ of key $\{1\}$ and $r(b,c,a)$ of key $\{1,2\}$, with $M=\{a,b,c\}$ and
$X=\emptyset$. The induced dependencies are purely key-driven, $a\to b$, $a\to c$, and
$\{b,c\}\to a$, and the last two make $a$ and $\{b,c\}$ \emph{two candidate keys for the same
information}: $a$ determines $\{b,c\}$ and $\{b,c\}$ determines $a$. Demoting $b$ then $c$ leaves
the basis $\{a\}$, whereas demoting $a$ leaves $\{b,c\}$---bases of different size, and even of
different key-width ($1$ versus $2$). One might expect order-irrelevance once every dependency is
a key. It fails, and the culprit is precisely the cyclic mutual determination
$a\leftrightarrow\{b,c\}$, the same cyclic-key phenomenon as in \cref{ex:cyclic-key}. With
acyclic key dependencies the basis is unique.

The order affects how \emph{tight} the witness is, as a smaller basis lowers the effective
exponent. It never affects whether the reduct is key-anchored: the statically anchored variables
are never demotable and survive every run, so by \cref{prop:anchored} anchoredness is the
order-independent condition $M\subseteq(X\cup A)^{+}$, and when it holds every order reaches the
same anchored reduct $A$.

\demotionlem*

\begin{proof}
On a legal database $\mathcal D$ the key dependencies hold, so in every satisfying
assignment $\gamma$ the value $\gamma(v)$ is determined by the restriction of $\gamma$ to
$X\cup(M\setminus\{v\})$. Hence the map $\Gamma_{X\cup M}(Q,\mathcal D)\to\Gamma_{X\cup(M\setminus\{v\})}(Q,\mathcal D)$
that forgets $v$ is a bijection, and the answer multisets coincide on every
legal $\mathcal D$. Moving $v$ out of $M$ cannot raise the number of multiset variables in any key
position.
\end{proof}

Recall that a query $R$ is \emph{key-anchored} if every multiset variable of $R$ occurs, in every atom
of $R$ containing it, at a key position of that atom. The \emph{key-chase} of $R$ repeatedly unifies
the terms of two atoms $p(\bar s),p(\bar t)$ of the same predicate that agree on $\mathit{key}(p)$ (an
equality-generating chase under the key dependencies) and merges duplicates. It terminates, preserves
$\equiv$, and leaves distinct atoms of each predicate with distinct key-term tuples.

\begin{lemma}[Legality of the canonical family, anchored case]
\label{lem:legal-canonical}
If $R$ is key-chased and key-anchored, then $\cdb{\vec a}{R}$ is legal for every
corner $\vec a\in\{1,2\}^{m}$.
\end{lemma}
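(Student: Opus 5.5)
The plan is a direct check of the key constraints on $\cdb{\vec a}{R}$. The schema here has declared keys and no foreign keys, so legality means only one thing: no two distinct $p$-atoms agree on all positions of $\mathit{key}(p)$. I will in fact show this for every $\vec N\in\mathbb N_+^m$, not just for corners. The argument rests on the injectivity built into \cref{app:def-family}. The map $\nu_0$ is an injection fixing constants and sending head and set variables to fresh, pairwise distinct values. The blocks $D_j$ are pairwise disjoint, and disjoint both from the range of $\nu_0$ and from all constants. Consequently, every value occurring in $\cdb{\vec N}{R}$ can be traced back unambiguously to the term that produced it: a constant, a unique head or set variable, or a unique multiset variable $Y_j$ together with a coordinate $d_j\in D_j$.

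Take two ground atoms $p(\nu_{\bar d}(\bar s))$ and $p(\nu_{\bar d'}(\bar t))$ of $\cdb{\vec N}{R}$. They arise from subgoals $p(\bar s),p(\bar t)\in R$ and tuples $\bar d,\bar d'\in D$. Suppose they agree on every key position $i\in\mathit{key}(p)$. The first step is to decode this agreement position by position.
\begin{itemize}
\item If $s_i$ is a constant, the shared value is that constant, and since no fresh value is a constant, $t_i$ must be the same constant.
\item If $s_i$ is a head or set variable, the value $\nu_0(s_i)$ is produced by no other term, so $t_i=s_i$.
\item If $s_i=Y_j$ is a multiset variable, the value lies in $D_j$, which only $Y_j$ populates. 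Hence $t_i=Y_j$ and $d_j=d'_j$.
\end{itemize}
Thus the key-term tuples $\bar s|_{\mathit{key}(p)}$ and $\bar t|_{\mathit{key}(p)}$ coincide syntactically. Moreover, $\bar d$ and $\bar d'$ agree on every multiset variable occurring at a key position of $p(\bar s)$.

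The second step uses that $R$ is key-chased. After the key-chase, distinct atoms of a predicate have distinct key-term tuples, so $p(\bar s)$ and $p(\bar t)$ must be the same subgoal. The third step uses key-anchoring to show the two ground atoms are identical. Consider a non-key position $i$ of this subgoal. If $s_i$ is a constant or a fixed variable, both assignments give it the same value. If $s_i$ is a multiset variable $Y_j$, key-anchoring places $Y_j$ at some key position of this same atom, and we already know $d_j=d'_j$ there. Hence $\nu_{\bar d}(\bar s)=\nu_{\bar d'}(\bar s)$. The two ground atoms are therefore equal and were merged into one fact, so no key violation exists.

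I expect the only delicate point to be the decoding step, that is, making sure value equality really forces term equality. This is exactly where the freshness and disjointness clauses of \cref{app:def-family} are needed, including disjointness from the constants of both queries. Once that step is in place, the key-chase supplies "same subgoal" and anchoring supplies "same non-key values", and the lemma follows. As noted, the corner restriction $\vec a\in\{1,2\}^m$ plays no role in the argument.
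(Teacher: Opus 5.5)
Your proof is correct and uses the same argument as the paper: injectivity and disjointness of the construction, the key-chase to rule out two distinct subgoals sharing a key-term tuple, and key-anchoring to force equality within a single subgoal. The differences are only organizational. You decode key agreement into syntactic equality of the key terms before invoking the chase, whereas the paper splits into same-subgoal and different-subgoal cases. As you note, neither argument uses the corner restriction.
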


\begin{proof}[Proof of \cref{lem:legal-canonical}]
Let $p(\bar u),p(\bar v)\in\cdb{\vec a}{R}$ agree on $\mathit{key}(p)$. They arise from subgoals
$p(\bar s),p(\bar s')$ and assignments $\nu_t,\nu_{t'}$ ($t,t'\in S$), with $\bar u=\nu_t(\bar s)$ and
$\bar v=\nu_{t'}(\bar s')$. Recall from \cref{app:construction} that $\nu_0$ is injective on the
set/distinguished variables and constants, and the multiset domains $S_j$ are pairwise disjoint and
disjoint from $\mathrm{range}(\nu_0)$. Hence two terms receive equal values under the $\nu$'s iff they
are the same variable or the same constant.

\emph{Same subgoal} ($\bar s=\bar s'$). For each $i\in\mathit{key}(p)$, $\nu_t(s_i)=\nu_{t'}(s_i)$.
At a fixed term this is automatic, and at a multiset variable $y$ it forces $t,t'$ to agree on the
coordinate of $y$. By key-anchoredness every multiset variable of $\bar s$ occurs at some key position,
so $t,t'$ agree on all multiset variables of $\bar s$. The other terms being fixed, $\nu_t(\bar s)=\nu_{t'}(\bar s)$,
i.e.\ $\bar u=\bar v$.

\emph{Different subgoals} ($\bar s\neq\bar s'$). After the key-chase the key-term tuples
$\bar s|_{\mathit{key}(p)}$ and $\bar s'|_{\mathit{key}(p)}$ are distinct, hence differ at some key
position $i$ where $s_i,s'_i$ are different variables, different constants, or a variable and a
constant. In every case $\nu_t(s_i)\neq\nu_{t'}(s'_i)$ by the disjointness above, so $\bar u,\bar v$
disagree on $\mathit{key}(p)$, a contradiction.

Thus atoms agreeing on the key are equal, and $\cdb{\vec a}{R}$ is legal.
\end{proof}

\begin{lemma}[Distinct legal atoms at a corner]
\label{lem:keyed-corner-size}
Let $\mathcal D$ be any legal database all of whose atoms arise, as in
$\cdb{\vec a}{\cdot}$, by assigning each multiset variable one of at most two values. Then for
every predicate $p$,
\[
\mathit{size}(p,\mathcal D)\;\le\;2^{\mathit{kw}(p)}\,\mathit{sj}(p),
\qquad\text{hence}\qquad
|\mathcal D|\;\le\;2^{\mathit{kw}}\,|Q| .
\]
\end{lemma}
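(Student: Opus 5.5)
The plan follows the proof of \cref{lem:corner-size}, with one new ingredient: legality lets us count atoms by their key values rather than by all their positions. Fix a predicate $p$. Every atom of $\mathcal D$ with predicate $p$ arises from some subgoal $p(\bar s)$ of $Q$ and some assignment that fixes the distinguished and set variables and the constants, and gives each multiset variable one of at most two values. So first I partition the $p$-atoms of $\mathcal D$ by a chosen originating subgoal, giving at most $\mathit{sj}(p)$ classes. It then suffices to bound the size of each class by $2^{\mathit{kw}(p(\bar s))}\le 2^{\mathit{kw}(p)}$.

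Within the class of one subgoal $p(\bar s)$, I look at the key projection $\bar u\mapsto\bar u|_{\mathit{key}(p)}$. The key positions of $\bar s$ hold fixed terms together with exactly $\mathit{kw}(p(\bar s))$ distinct multiset variables. Each of these variables takes at most two values, so the projection takes at most $2^{\mathit{kw}(p(\bar s))}$ values on the class. Legality now supplies the key step: two distinct $p$-atoms of $\mathcal D$ must differ on $\mathit{key}(p)$, so the projection is injective on all $p$-atoms, and in particular on the class. Hence the class has at most $2^{\mathit{kw}(p(\bar s))}$ atoms. Summing over the $\mathit{sj}(p)$ subgoals gives $\mathit{size}(p,\mathcal D)\le 2^{\mathit{kw}(p)}\mathit{sj}(p)$. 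Summing over $p$ and using $\sum_p\mathit{sj}(p)=|Q|$ then gives $|\mathcal D|\le 2^{\mathit{kw}}|Q|$.

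The main obstacle is making precise what ``arise'' means, that is, which values occupy which positions. I will read the hypothesis exactly as in \cref{app:def-family}: every non-multiset term takes a single fixed value, uniform over all atoms. Under that reading the non-multiset key positions contribute no variation, and the count above goes through. If the values at non-key multiset positions vary independently, that only affects non-key positions, and injectivity of the key projection already absorbs it. So no anchoring assumption is needed for this counting lemma. Anchoring matters only for the separate legality statement, \cref{lem:legal-canonical}.
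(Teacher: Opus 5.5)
Your proposal is correct and follows essentially the paper's own argument. Legality makes the key projection injective on the $p$-atoms, each subgoal admits at most $2^{\mathit{kw}(p)}$ key projections because only its multiset key variables vary, and summing over the $\mathit{sj}(p)$ subgoals and then over predicates gives both bounds; your explicit partition by originating subgoal, and your remark that variation at non-key positions is absorbed by injectivity, only spell out what the paper leaves implicit.
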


\begin{proof}[Proof of \cref{lem:keyed-corner-size}]
By legality, distinct atoms of $p$ differ on $\mathit{key}(p)$. A key position holds either a fixed
term or a multiset variable with at most two values, so a subgoal with predicate $p$ admits at most
$2^{\mathit{kw}(p)}$ distinct key projections, hence at most that many distinct atoms. Summing over
subgoals as in \cref{lem:corner-size} gives both bounds.
\end{proof}

\begin{proposition}[When an anchored reduct exists]
\label{prop:anchored}
Let $A$ be the set of multiset variables of $Q$ that occur only at key positions (the \emph{statically
anchored} variables). Every variable in $A$ belongs to every key-reduct of $Q$, and the following are
equivalent:
\begin{enumerate}
\item some key-reduct of $Q$ is key-anchored,
\item every key-reduct of $Q$ is key-anchored,
\item every multiset variable of $Q$ is $(X\cup A)$-determined, that is $M\subseteq(X\cup A)^{+}$.
\end{enumerate}
When these hold, $A$ is the \emph{unique} key-reduct, reached by every demotion order. The condition
holds in particular when no multiset variable occurs off a key (then $A=M$), but it is strictly
weaker: it asks only that the anchored variables \emph{determine} the rest.
\end{proposition}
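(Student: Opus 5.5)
The plan is to establish two structural facts about demotion and derive all three equivalences, and uniqueness, from them. Throughout, closures are taken under the key dependencies of $Q$, which demotion never changes: demotion only moves variables between $M$ and the set variables and leaves the atoms untouched.

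\emph{Step 1: variables of $A$ are never demoted.} Let $a\in A$ and let $T\subseteq M\setminus\{a\}$. I will show that $a\notin(X\cup T)^{+}$. The closure adds a variable only when it sits at a non-key position of an atom whose key variables are already derived. Since $a$ occurs only at key positions, $a$ can enter the closure only by already belonging to the starting set $X\cup T$. It is not in $X$, because $a$ is nondistinguished, and it is not in $T$. Hence $a$ is never $(X\cup(M'\setminus\{a\}))$-determined for any current multiset set $M'$, so $a$ survives every demotion run. This gives $A\subseteq$ every key-reduct.

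\emph{Step 2: the surviving set has the same closure.} For any demotion run ending in $M^\flat$, I will show by induction that $(X\cup M^\flat)^{+}=(X\cup M)^{+}$. Each demotion removes a variable $v$ that is determined by the remaining variables, so removing it leaves the closure unchanged.

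\emph{(3)$\Rightarrow$(2).} Let $M^\flat$ be a key-reduct. By Step 1, $A\subseteq M^\flat$. Suppose some $v\in M^\flat\setminus A$. By (3), $v\in(X\cup A)^{+}\subseteq(X\cup(M^\flat\setminus\{v\}))^{+}$, since $v\notin A$. So $v$ would still be demotable, contradicting that $M^\flat$ is a reduct. Hence $M^\flat=A$. This reduct is key-anchored, because every variable of $A$ occurs only at key positions; after demotion the remaining atoms still contain those variables only at key positions. This argument also shows that $A$ is the unique key-reduct, reached by every demotion order. One further check is needed: the run must actually reach $A$, that is, every $v\in M\setminus A$ gets demoted. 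This is the termination condition, and it is automatic from the reasoning above.

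\emph{(2)$\Rightarrow$(1).} This is trivial, since a key-reduct always exists: demotion strictly shrinks $M$ and so terminates.

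\emph{(1)$\Rightarrow$(3).} Let $M^\flat$ be a key-anchored reduct. By definition, every variable of $M^\flat$ occurs in every atom containing it at a key position, that is, only at key positions. So $M^\flat\subseteq A$, and combined with Step 1 we get $M^\flat=A$. Step 2 then gives $M\subseteq(X\cup M)^{+}=(X\cup A)^{+}$, which is (3).

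\emph{Remaining claims and the main obstacle.} The closing remark has two parts. If $A=M$, condition (3) is trivial. For strictness, an example such as $p(a,b)$ with key $\{1\}$ and $M=\{a,b\}$ works: here $A=\{a\}$ and $b\in\{a\}^{+}$. The main obstacle is Step 1. Its correctness depends on reading the closure precisely as generated by the atoms' key-to-nonkey dependencies. In particular, a variable appearing only at key positions is never the right-hand side of any dependency, and I would state this carefully, including the case where a key consists solely of constants.
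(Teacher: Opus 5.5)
Your proposal is correct and follows the paper's proof essentially step for step. $A$ survives every demotion run because a key-only variable never occurs on the right of a key dependency; (3)$\Rightarrow$(2) forces every reduct to equal $A$ by irreducibility; and (1)$\Rightarrow$(3) combines $F\subseteq A$ with the closure-preservation fact $M\subseteq(X\cup F)^{+}$. As in the paper, (1)$\Rightarrow$(3) needs key-anchored to mean ``occurring only at key positions'': under the weaker literal reading ``at some key position of each atom,'' the body $p(v,v)$ with $\mathit{key}(p)=\{1\}$ and $M=\{v\}$ would break that step, so stating this reading explicitly, as you do, is right.
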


\begin{proof}[Proof of \cref{prop:anchored}]
A statically anchored $v$ never occupies a non-key position, so it appears in a key dependency
only within the left-hand (key) side. Hence $v\in(X\cup T)^{+}$ implies $v\in X\cup T$, so
$v$ is never demotable and lies in every reduct.

$(3)\Rightarrow(2)$: assume $M\subseteq(X\cup A)^{+}$ and let $F$ be any reduct and $u\notin A$ a
non-anchored multiset variable. As $A\subseteq F$ for the reason just given, $A\subseteq F\setminus\{u\}$,
whence $u\in M\subseteq(X\cup A)^{+}\subseteq(X\cup(F\setminus\{u\}))^{+}$. In a reduct no surviving
multiset variable is determined by the others, so $u\notin F$. Hence no non-anchored variable survives,
$F=A$, and $A$ is key-anchored. $(2)\Rightarrow(1)$ is immediate. $(1)\Rightarrow(3)$: a reduct $F$
generates $M$, i.e.\ $M\subseteq(X\cup F)^{+}$. If $F$ is key-anchored then $F\subseteq A$, so
$M\subseteq(X\cup A)^{+}$.
\end{proof}

\wfanchored*

\begin{proof}
Let $G$ be the digraph on the counting atoms of $Q$ whose edges $c\to b$ are the references: some
multiset variable at a key position of $b$ occurs at a non-key position of $c$. By assumption $G$ is
acyclic. For a counting atom $b$, let $h(b)$ be the length of the longest directed path in $G$ ending
at $b$.

We show by induction on $h(b)$ that every variable at a key position of $b$ lies in $(X\cup A)^{+}$.
Let $v$ be such a variable. A counting atom has no set variables, and constants and distinguished
variables are immediate, so assume $v\in M$. If no occurrence of $v$ in $Q$ sits at a non-key
position, then $v\in A$. Otherwise some occurrence of $v$ sits at a non-key position, and by the first
requirement it sits in a counting atom $c$. Then $c\to b$ is an edge of $G$, so $h(c)<h(b)$. In
particular this case cannot occur when $h(b)=0$. By induction every key variable of $c$ lies in
$(X\cup A)^{+}$, and the key of $c$ determines its remaining positions, so $v\in(X\cup A)^{+}$.

Now let $x\in M$. By well-formedness $x$ occurs in some counting atom $b$. Every key variable of $b$
lies in $(X\cup A)^{+}$, and the key of $b$ determines all its positions, so $x\in(X\cup A)^{+}$.
Hence $M\subseteq(X\cup A)^{+}$, and \cref{prop:anchored} makes every key-reduct of $Q$ key-anchored.
\end{proof}

\keyedbound*

\begin{proof}
Write $Q^{\natural}$ for the key-chase of the key-reduct $Q^{\flat}$, and likewise for $Q'$. By
\cref{lem:demotion} and the key-chase, $Q\equiv Q^{\natural}$ and $Q'\equiv Q'^{\natural}$ over
legal databases. Chasing after demoting is sound: the key-chase unifies only terms at non-key
positions, since the two atoms already agree on the key, while under key-anchorability the
surviving multiset variables are those of $A$, which occupy key positions only
(\cref{prop:anchored}). So $Q^{\natural}$ is key-chased and key-anchored, and no merge raises
$\mathit{kw}$.

\emph{Clause~(1), sufficiency.} If $Q^{\natural}$ and $Q'^{\natural}$ are multiset-homomorphic
they are equivalent over all databases by \cref{thm:relational-equiv}, hence over the legal ones,
and $Q\equiv Q'$ follows.

\emph{Clause~(1), necessity, and clause~(2).} Suppose no multiset-homomorphism exists in some
direction. As in the proof of \cref{thm:main-bound} the queries may be named so that the failing
direction is $Q'^{\natural}\to Q^{\natural}$ with $m_{Q'^{\natural}}\le m_{Q^{\natural}}$. By
\cref{prop:cmono-sep} and the Boolean-corner extraction of \cref{thm:main-bound} the two
multiplicity functions differ at some corner $\vec a\in\{1,2\}^{m}$ of
$\cdb{\vec a}{Q^{\natural}}$, which is legal by \cref{lem:legal-canonical}. Hence
$Q\not\equiv Q'$ over legal databases, which is the contrapositive of necessity, and the
separating database is legal. Its size obeys \cref{lem:keyed-corner-size}, with the relevant
key-width for predicate $p$ at most $\mathit{kw}(p)$, giving clause~(2).
\end{proof}

\myparagraph{Foreign keys}
\phantomsection\label{app:fk}

Formally, the \emph{foreign-key chase} $\mathit{chase}_{\mathrm{fk}}(Q)$ repeatedly picks a
foreign key $(p,\bar\imath,q)$ and an atom $p(\bar s)\in Q$ with no atom $q(\bar t)\in Q$
satisfying $\bar s|_{\bar\imath}=\bar t|_{\mathit{key}(q)}$, and adds a fresh atom $q(\bar t)$
whose key positions copy $\bar s|_{\bar\imath}$ and whose remaining positions are pairwise fresh
\emph{set} variables.

\begin{lemma}[Foreign-key chase]
\label{lem:fk-chase}
For acyclic foreign keys, $\mathit{chase}_{\mathrm{fk}}(Q)$ terminates and
\begin{enumerate}
\item every atom it adds has fresh set variables in all non-key positions, so $M$ is unchanged and
      $w(\mathit{chase}_{\mathrm{fk}}(Q))=w(Q)$, $\mathit{kw}(\mathit{chase}_{\mathrm{fk}}(Q))=\mathit{kw}(Q)$.
\item for every database $\mathcal D$ satisfying the keys and foreign keys,
      $Q(\mathcal D)=\mathit{chase}_{\mathrm{fk}}(Q)(\mathcal D)$.
\end{enumerate}
\end{lemma}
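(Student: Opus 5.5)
Termination comes first, then the two properties. Since the foreign keys are acyclic, I would order the predicates by a topological order of the ``references'' relation and assign each atom a \emph{level}: original atoms get level $0$, and an atom added for a reference from a level-$\ell$ atom gets level $\ell+1$. An added $q$-atom can itself trigger additions only for predicates strictly later in the topological order. So levels are bounded by the reference depth $d$ of the schema. Each atom triggers at most one addition per foreign key leaving its predicate. The chase therefore adds a bounded number of atoms, at most $|Q|$ times a function of $d$ and the fan-out, and stops. This is also where the remark $|\widehat Q|\le|Q|\cdot$(reference depth) would come from, if one counts per chain. One caveat: an added atom may satisfy a later reference that would otherwise fire, but that only shortens the run.

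For clause (1), the plan is to read it off the definition of a chase step. The key positions of the new atom $q(\bar t)$ copy terms of $\bar s|_{\bar\imath}$, which already occur in $Q$. Every other position receives a fresh set variable. No variable therefore enters $M$, and $M$ stays unchanged. An added atom carries multiset variables only at its key positions, namely those already present in $\bar s|_{\bar\imath}$. Hence $\nmv{q(\bar t)}=\mathit{kw}(q(\bar t))\le|\bar\imath|$.

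For $w$ and $\mathit{kw}$ to be literally unchanged, this count must not exceed the existing width. That is the one delicate point, and it is where I expect the main obstacle. The new atom's multiset variables are a subset of those of $p(\bar s)$, so $\nmv{q(\bar t)}\le\nmv{p(\bar s)}\le w(Q)$, which settles $w$. For $\mathit{kw}$, however, the variables in $\bar s|_{\bar\imath}$ may sit at non-key positions of $p$ but at key positions of $q$. So $\mathit{kw}$ can a priori increase. I would first try to show it is bounded by $w$. If literal equality is needed, I would restrict the claim, or argue that such variables are demoted by the key-reduct taken afterwards. I would flag this explicitly rather than paper over it.

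For clause (2), fix a legal $\mathcal D$. Every added atom of $\mathit{chase}_{\mathrm{fk}}(Q)$ is a superset of the subgoals of $Q$, so it suffices to show that every satisfying assignment $\gamma$ of $Q$ extends, \emph{uniquely} on the new set variables, to one of the chased query. Extension follows by induction along the chase order. When $q(\bar t)$ is added for $p(\bar s)$, the fact $p(\gamma\bar s)\in\mathcal D$ together with the foreign key gives some $q(\bar u)\in\mathcal D$ with $\bar u|_{\mathit{key}(q)}=\gamma(\bar s|_{\bar\imath})$. The key of $q$ makes this $\bar u$ unique, and we set the fresh variables to the remaining entries of $\bar u$.

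The extension map is therefore a bijection between $\Gamma(Q,\mathcal D)$ and $\Gamma(\mathit{chase}_{\mathrm{fk}}(Q),\mathcal D)$, with inverse given by restriction. It preserves the values on $X\cup M$, since the new variables are set variables. Consequently the projections $\Gamma_{X\cup M}$ coincide. The answer multisets are equal, and strictly only the projection equality is needed.
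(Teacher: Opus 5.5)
Your termination argument and your proof of clause~(2) follow the paper's sketch: acyclicity bounds every chain of additions, and on a legal database the referenced atom already exists, so the added subgoals are redundant. Your version of~(2) is the more precise one. The chased query has extra variables, so its satisfying assignments cannot literally coincide with those of $Q$, as the paper puts it. What coincides is $\Gamma_{X\cup M}$: restriction gives one direction and extension along the chase order the other. As you note, existence of the extension already suffices.

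The obstacle you flag in clause~(1) is real, and it is a defect of the statement, not of your approach: the equality $\mathit{kw}(\mathit{chase}_{\mathrm{fk}}(Q))=\mathit{kw}(Q)$ is false. Take a binary $p$ with $\mathit{key}(p)=\{1\}$, a unary $q$ with $\mathit{key}(q)=\{1\}$, and the acyclic foreign key $(p,(2),q)$. The query $Q(x)\leftarrow p(x,y),\{y\}$ has $\mathit{kw}(Q)=0$, since its only key position holds the distinguished $x$. Its chase $Q(x)\leftarrow p(x,y)\wedge q(y),\{y\}$ has key-width~$1$. This is the pattern of $\mathsf{Order}.\mathit{cid}\to\mathsf{Customer}$ in the running schema. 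The paper's sketch asserts that the chase creates no key occurrence, which overlooks exactly what you noticed: the chase copies $\bar s|_{\bar\imath}$ into the key of $q$. So do not try to prove the equality. Your argument does prove that $M$ and $w$ are unchanged. Make the induction explicit there: an added atom's multiset variables lie among its source atom's, which may itself be an added atom, hence among those of some atom of $Q$. It also proves $\mathit{kw}(Q)\le\mathit{kw}(\mathit{chase}_{\mathrm{fk}}(Q))\le w(Q)$.

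Your proposed repair through the key-reduct is the right one, and it works exactly under the hypothesis of \cref{thm:fk-bound}, where the lemma is used. Suppose $\widehat Q$ is key-anchorable. By \cref{prop:anchored} its key-reduct keeps only the set $A$ of multiset variables occurring solely at key positions of $\widehat Q$. An $A$-variable among $\bar s|_{\bar\imath}$ therefore already sits at a key position of the source atom $p(\bar s)$. By induction along the chase, the $A$-variables at key positions of an added atom are among the multiset variables at key positions of some atom of $Q$. This gives $\mathit{kw}(\widehat Q^{\flat})\le\mathit{kw}(Q)$, which the key-chase does not raise.

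Without key-anchorability the increase can survive the reduct. Replace $p(x,y)$ above by $r(x,u)\wedge p(u,y)$, with $u$ a set variable and $r$ keyed on its whole tuple. Now $y$ is not $\{x\}$-determined, so the reduct keeps it.

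Finally, on your size aside: counting per chain undercounts when a predicate has several outgoing foreign keys. One \textsf{Item} atom chases to four atoms (\textsf{Item}, \textsf{Order}, \textsf{Product}, \textsf{Customer}), while the longest reference chain has two edges. So your fan-out-dependent count is correct, not $|Q|$ times the reference depth.
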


\begin{proof}[Proof sketch]
Termination is the standard acyclic-inclusion-dependency argument: each added atom belongs to a target
relation strictly lower in the (acyclic) reference order, so no chain of additions is longer than the
schema's reference depth. The added positions are fresh existential variables, hence set variables, so
no multiset variable and no key occurrence is created, giving~(1). For~(2), on a database satisfying the
foreign keys the witness atom required by each added subgoal already exists, so the added subgoals are
redundant and the satisfying assignments---hence the answer multisets---coincide.
\end{proof}

\fkbound*

\begin{proof}
By \cref{lem:fk-chase}(2), inequivalence over legal databases is inequivalence of $\widehat Q,
\widehat{Q'}$ over the databases satisfying the keys alone, to which \cref{thm:keyed-bound} applies.
The witness it returns
satisfies the keys, and---because the chase made every referenced atom explicit---is closed under the
foreign keys without new multiset structure, keeping the size bound. The key-width and the anchoring
requirement are preserved by \cref{lem:fk-chase}(1), whose added atoms place only fresh set variables in
non-key positions.
\end{proof}
\npcor*

\begin{proof}
Membership. Guess mappings between the terms of $\widehat Q^{\natural}$ and
$\widehat{Q'}^{\natural}$ in both directions and check conditions (1)--(4) of
\cref{sec:relational}, all by inspection of the atoms. By clause~(1) of \cref{thm:keyed-bound},
applied to the chased queries as in \cref{thm:fk-bound}, such mappings exist exactly when
$Q\equiv Q'$. The certificate is polynomial, and so is its subject: the foreign-key chase adds at
most one atom per atom and per declared foreign key, and terminates within the reference depth of
the schema (\cref{lem:fk-chase}); the key-reduct is the set $A$ of statically anchored variables,
a closure under the key dependencies (\cref{prop:anchored}); and the key-chase merges two classes
of terms per step, so at most $|Q|$ times the maximum arity steps occur.

Hardness. Let $Q,Q'$ have $M=M'=\emptyset$ over the schema whose every key is the whole tuple,
where every database is legal and every query is key-anchorable. Exactly one assignment of
$X\cup M$ produces each answer, so $Q(\mathcal D)$ is the set that $Q$ returns under set
semantics, and $Q\equiv Q'$ is equivalence of conjunctive queries, which is
\NP-hard~\cite{ChandraM77}.
\end{proof}

\subsection{Queries with Comparisons}
\label{app:comp-proofs}

\compsound*

\begin{proof}
Suppose $\mu\colon Q'\to Q$ respects regions. As in \cref{sec:relational}, $\mu$ injects the
$Q$-assignments producing $\bar a$ into the $Q'$-assignments producing $\bar a$ through
$\gamma\mapsto\gamma\circ\mu$, and respecting regions makes this preserve the comparisons: a satisfying
$\gamma$ of $Q$ puts $\gamma(\mu y')$ in $R_{\mu(y')}\subseteq R_{y'}$ when $\mu(y')$ is a variable, and on
a constant of $R_{y'}$ otherwise. Hence $\mathrm{mult}_Q\le\mathrm{mult}_{Q'}$ pointwise, and maps in both
directions give equality.
\end{proof}

\comppin*

\begin{proof}
Suppose not, say some slot lies in $R_y\setminus R_{y'}$, and pick a value $v$ in it, off $\mathcal C$. Let
$\mathcal D$ be the canonical database of $R_Q$, freezing the variables of $Q$ to distinct fresh values off
$\mathcal C$, each inside its own region, with $y\mapsto v$. This is possible by density, the regions being
nonempty. Then $\mathcal D$ has at most $|Q|$ atoms, and the freezing satisfies $Q$, so its head $\bar a$ lies
in $Q(\mathcal D)$. By set-equivalence $\bar a\in Q'(\mathcal D)$, witnessed by an assignment $\delta$ of $Q'$
into $\mathcal D$ that meets every var-const atom of $Q'$. As the freezing is injective and the atoms of
$\mathcal D$ are its images of the atoms of $R_Q$, $\delta$ factors as $(\text{freezing})\circ h$ for a
containment map $h\colon Q'\to Q$. Because homomorphisms respect groups, $h(y')$ is a constant or a variable
of $y'$'s group. That group is $\{y,y'\}$, whose sole $Q$-variable is $y$, and the shared position holds no
constant. So $h(y')=y$ and $\delta(y')=v$, forcing $v\in R_{y'}$ against the choice of $v$. Hence
$R_y\subseteq R_{y'}$, and by symmetry $R_y=R_{y'}$.
\end{proof}

\compmaster*

\begin{proof}
\emph{Strictification.} Let $c$ be boundary-free and $\mathcal D$ any ordered database. Choose a
fresh value $c^-$ strictly between $c$ and the next lower constant of $\mathcal C\setminus\{c\}$,
off $\mathrm{adom}(\mathcal D)$, and let $\mathcal D^-$ rename every occurrence of the value $c$
to $c^-$. Queries compare values only to constants, and $c^-$ lies in the same gap of
$\mathcal C\setminus\{c\}$ as $c$, so the renaming preserves every value's order relation to
every constant the queries mention, and preserves $z\le c$, now met by $c^-<c$. As $c$ neither is
a relational constant nor constrains an $(X\cup M)$-variable, it enters an $(X\cup M)$-image only
through the renaming, applied identically to $Q$ and $Q'$. Hence
$\mathrm{mult}_Q(\mathcal D)=\mathrm{mult}_Q(\mathcal D^-)$, and likewise for $Q'$. So
counter-examples may avoid the boundary-free values, and on databases avoiding them each weak
comparison at such a constant holds iff its strict form does. Every database below is built from
fresh values off $\mathcal C$ and avoids these values, so we may assume every position-group of
the pair is directed, retained, or isolated.

\emph{Pinning.} For every isolated group $\{y,y'\}$, $R_y=R_{y'}$ by \cref{comp:pin}, whose
frozen database uses fresh values off $\mathcal C$.

\emph{Resolution.} Resolve every retained group that is not directed, as in
\cref{comp:certified}. The decomposition displayed there gives, for every $(\mathcal D,\bar a)$,
\[
  \mathrm{mult}_Q(\mathcal D,\bar a)-\mathrm{mult}_{Q'}(\mathcal D,\bar a)
  =\sum_{\sigma}\mathrm{mult}_{Q_\sigma}(\mathcal D,\bar a)-\sum_{\sigma'}\mathrm{mult}_{Q'_{\sigma'}}(\mathcal D,\bar a),
\]
a signed sum over a pool of reducts in which every group is directed, pinned to a single open
slot, or isolated with matched regions, and no variable is frozen at a constant. If
$Q\not\equiv Q'$, the sum is not the zero function.

\emph{The placement and the count.} For reducts $P,P'$ of the pool, place $P$ as follows: each
variable of a directed group in its tight slot, an unconstrained variable of an upper group
above all constants and dually for a lower group; each pinned variable in its slot; each
isolated variable in an open slot of the common region, which exists because regions are not
single constants (\cref{comp:framework}). By order-invariance and all-or-none, the coefficient
of $\mathcal P^{(P)}_*$ in $\mathrm{mult}_{P'}(\pcdb{\vec N}{P})$ counts the
multiset-homomorphisms $P'\to P$ whose images meet the comparisons of $P'$ under this placement
(\cref{prop:cmono-sep}). We claim these are exactly the region-respecting ones. On a directed
group, say upper, a variable $z$ sits just below $\mathrm{ub}(z)$, so its block or value lies in
$R_{y'}=(-\infty,\mathrm{ub}(y'))$ exactly when $\mathrm{ub}(z)\le\mathrm{ub}(y')$, that is,
when $R_z\subseteq R_{y'}$. The uniform direction rules out a spurious match, and a constant
image $c$ lies in $R_{y'}$ iff $\{c\}\in R_{y'}$. On a pinned group the placed slot is the
region of the reduct, so placement-respect and region-respect coincide, and the exclusion of
existential variables from pinned groups keeps a pinned slot from matching an image elsewhere.
On an isolated group the map is forced, and the placed block or value lies in the counterpart's
region because the regions match, so the group imposes no constraint.

\emph{The classes.} For reducts $P,P'$ of the pool write $P\preceq P'$ when there is a
region-respecting multiset-homomorphism $P\to P'$. It is reflexive and transitive, and
$P\preceq P'\preceq P$ implies $P\equiv P'$ by \cref{comp:sound}, so $\preceq$ descends to a
partial order on equivalence classes. Collecting reducts by class, write $a_{[P]}\in\mathbb Z$
for the number of $Q$-reducts in $[P]$ minus the number of $Q'$-reducts. The difference above is
$\sum_{[P]}a_{[P]}\,\mathrm{mult}_P$, and as it is not zero, some $a_{[P]}\ne0$. Let $m$ be the
greatest number of multiset variables among classes with $a_{[P]}\ne0$, and pick such a class
$[P]$ of degree $m$ that is $\preceq$-minimal among them. Since equivalent queries have equal
multiplicity functions, the coefficient of $\mathcal P^{(P)}_*$ in
$\mathrm{mult}_{P'}(\pcdb{\vec N}{P})$ is the same for every member $P'$ of a class. By the
count, it is nonzero only for $P'$ with $m_{P'}\ge m$ and $P'\preceq P$, hence $m_{P'}=m$, so by
minimality only $[P]$ itself contributes, $a_{[P]}$ times the positive self-homomorphism count.

\emph{Extraction.} As $[P]$ has maximal degree, $\mathcal P^{(P)}_*$ is a top-degree
multilinear monomial of the combination with a nonzero coefficient. By the Boolean-corner
extraction of \cref{thm:main-bound} the combination is nonzero at some $\vec N\in\{1,2\}^{m}$,
where $\mathrm{mult}_Q\ne\mathrm{mult}_{Q'}$. The reduct $P$ has at most $|Q|$ atoms and width
at most $w$, so its placed family at the corner has at most $2^{w}|Q|$ atoms. Its values are
fresh and off $\mathcal C$, so the witness also separates the original, unstrictified pair.
\end{proof}

\compmonothm*

\begin{proof}
If both homomorphisms exist, $Q\equiv Q'$ by \cref{comp:sound}. Conversely let $Q\equiv Q'$, so
the multiplicity functions agree on the placed family of $Q$ under the tight placement. All
groups being directed, the count in the proof of \cref{comp:master} applies with no resolved and
no isolated groups: the coefficient of $\mathcal P^{(Q)}_*$ in $\mathrm{mult}_{Q}$ is the number
of region-respecting homomorphisms $Q\to Q$, positive by the identity, and in
$\mathrm{mult}_{Q'}$ it is the number of region-respecting homomorphisms $Q'\to Q$. Equality
makes the latter positive, so a homomorphism $Q'\to Q$ exists. The symmetric argument on the
family of $Q'$ gives one from $Q$ to $Q'$.
\end{proof}

\compisolatedthm*

\begin{proof}
Suppose the relational parts are multiset-homomorphic. Every comparison sits on an isolated
variable, whose image under either homomorphism is its counterpart, with equal regions by
\cref{comp:pin}, and every other variable is unconstrained. Both homomorphisms therefore respect
regions, and $Q\equiv Q'$ by \cref{comp:sound}. Conversely let $Q\equiv Q'$, and place each
isolated variable in an open slot of the common region. By all-or-none, every comparison of
either query accepts each placed block or value whole, so for $R\in\{Q,Q'\}$ the coefficient of
$\mathcal P^{(Q)}_*$ in $\mathrm{mult}_{R}(\pcdb{\vec N}{Q})$ counts all multiset-homomorphisms
of the relational parts $R_R\to R_Q$. As in the previous proof, equality of the multiplicity
functions turns the positive self-count into a multiset-homomorphism $R_{Q'}\to R_Q$, and the
symmetric argument on the family of $Q'$ completes the claim.
\end{proof}

\lipthm*

\begin{proof}
By the support gate, $\mathrm{mult}_Q=\beta\,\mathrm{mult}_{Q_0}$ and
$\mathrm{mult}_{Q'}=\beta'\,\mathrm{mult}_{Q'_0}$, with $\beta,\beta'\in\{0,1\}$. Since
$Q\not\equiv Q'$, some legal $(\mathcal D_0,\bar a_0)$ has $\mathrm{mult}_Q\ne\mathrm{mult}_{Q'}$.
Were one zero and the other positive, that answer would lie in one support but not the other, contradicting
set-equivalence. So both are positive, whence $\beta=\beta'=1$,
$\mathrm{mult}_{Q_0}(\bar a_0)\ne\mathrm{mult}_{Q'_0}(\bar a_0)$, and the pinned variables' forced values
satisfy every comparison.

Identify the pinned variables of $Q_0,Q'_0$ exactly as they coincide at $(\mathcal D_0,\bar a_0)$. The
resulting comparison-free bodies still differ, and remain key-anchorable, since adding equalities only enlarges
the closures. By the comparison-free bound over legal databases (\cref{thm:keyed-bound}) they differ on a
legal database $\mathcal D_1$ of at most $2^{\mathit{kw}}|Q|$ atoms. Its canonical answer assigns
distinct variables distinct values, so the pinned values on $\mathcal D_1$ carry exactly the coincidences we
imposed and no others. Some ordering of those values, the one at $(\mathcal D_0,\bar a_0)$, satisfies every
comparison, including the atoms relating two pinned variables. So by order-blindness an injection
$\sigma$ fixing the constants reorders the values of
$\mathcal D_1$ to satisfy every comparison atom, changing neither body's multiplicity nor legality. On
$\sigma\mathcal D_1$ the pinned comparisons hold, so the support gate makes the multiplicities equal to those
of the bodies, which differ. This is the required counter-example.
\end{proof}

\end{document}